\newtheorem{theorem}{Theorem}
\newtheorem{lemma}{Lemma}
\newtheorem{corollary}{Corollary}
\newtheorem{definition}{Definition}
\newtheorem{example}{Example}
\newtheorem{remark}{Remark}
\newcommand{\B}{\mathbb{B}}
\newcommand{\R}{\mathbb{R}}
\newcommand{\N}{\mathbb{N}}
\newcommand{\Z}{\mathbb{Z}}
\newcommand{\sd}[1]{\mathsf{d}_\mathsf{#1}}
\renewcommand{\emptyset}{{\varnothing}}
\newcommand{\intcc}[1]{\ensuremath{{\left[#1\right]}}}
\newcommand{\intoc}[1]{\ensuremath{{\left]#1\right]}}}
\newcommand{\intco}[1]{\ensuremath{{\left[#1\right[}}}
\newcommand{\intoo}[1]{\ensuremath{{\left]#1\right[}}}
\DeclareMathOperator{\dom}{dom} 
\DeclareMathOperator{\defeq}{\mathrel{\mathop:}=}
\begin{document}

\title{A Notion of Robustness \\ for Cyber-Physical Systems}

\author{Matthias Rungger}
\address{Department of Electrical Engineering, University of California at Los Angeles, CA 90095-1594, United States}
\email{rungger@ucla.edu}

\author{Paulo Tabuada}
\address{Department of Electrical Engineering, University of California at Los Angeles, CA 90095-1594, United States}
\email{tabuada@ee.ucla.edu}

\keywords{Cyber-Physical Systems, Robustness, Stability, Synthesis,
Simulation Relations} 
\date{}

\begin{abstract}
Robustness as a system property describes the degree to which a system
is able to function correctly in the presence of disturbances, i.e.,
unforeseen or erroneous inputs. In this paper, we introduce a notion
of robustness termed \emph{input-output dynamical stability} for
cyber-physical systems (CPS) which merges existing notions of
robustness for continuous systems and discrete systems. The notion
captures two intuitive aims of robustness: bounded disturbances have
bounded effects and the consequences of a sporadic disturbance
disappear over time. We present a design methodology for robust CPS
which is based on an abstraction and refinement process. We suggest
several novel notions of simulation relations to ensure the soundness
of the approach. In addition, we show how such simulation relations
can be constructed compositionally. The different concepts and
results are illustrated throughout the paper with examples.
\end{abstract}
\maketitle

\markboth{A NOTION OF ROBUSTNESS FOR CYBER-PHYSICAL SYSTEMS}{MATTHIAS RUNGGER AND PAULO TABUADA}

\section{Introduction}

Robustness describes the ability of a system to  function correctly
in the presence of disturbances, e.g., unmodeled dynamics or
unforeseen events. Disturbances arise whenever certain assumptions
imposed on the system or the environment at design-time
are violated during run-time. Since a system and its environment are only
partly known at design-time, disturbances are unavoidable and
robustness of is natural requirement in every system design.

In this paper we present a methodology for the design of robust
Cyber-Physical Systems (CPS). We establish robustness with respect to
continuous disturbances, possibly arising from sensor noise or
actuator imprecisions, as well as discrete disturbances to account for
potential failures of the cyber components, like faulty communication
channels, hardware or software errors.

Technically, we formalize robustness as \emph{input-output dynamical stability} as a formal 
notion of robustness of CPS, combining well-known notions of
robustness for control systems, such as input-to-state stability~\cite{Son08} and
input-to-state dynamical stability~\cite{Gru02}, with a recently
introduced notion of robustness for discrete
systems~\cite{TCRM14,TBCSM12}. Input-output dynamical stability
provides two guarantees that are intuitively related with a robust design: first, bounded
disturbances have bounded consequences and second,
the nominal system behavior is eventually resumed after the occurrence
of a sporadic disturbance.

We provide a computational framework based on a three-step abstraction
and refinement procedure. The first step, consists of computing a \emph{discrete
abstraction} or \emph{symbolic model}, i.e., a finite-state substitute
of a given CPS.  In the second step, we employ the algorithms
developed in~\cite{TCRM14,TBCSM12} to synthesize a robust controller
for the symbolic model. The last step, consist in the refinement
of the controller obtained on the abstract domain to the concrete CPS.

We follow the usual approach, which is based on \emph{simulation
relations} and \emph{alternating simulation relations}, to ensure the
soundness of the abstraction and refinement scheme. Simulation
relations provide a mathematical tool to compare the dynamical behavior of
a concrete system and its symbolic model in terms of behavioral inclusion.
In this paper, we enrich the well-known constructs of (alternating) simulation
relations~\cite{Mil89,AHKM98,GP07,Tab08} to facilitate the comparison of
two systems in terms of robustness.

We recently introduced in~\cite{RT13} \emph{contractive simulation
relations} to capture a certain stability or contraction property that
is often observed in the concrete system~\cite{Tab08,PGT08,PT09,GPT10}
with the goal to reduce the complexity of the symbolic models.
The focus of~\cite{RT13} was the verification of robustness using
contractive simulation relations. In this paper we focus on the
synthesis of robust controllers which naturally leads to the notion of
\emph{contractive Alternating Simulation Relations} (ASR). By using contractive
ASR we are allowed to \emph{ignore}
continuous disturbances on the abstract domain, while still providing
robustness of the CPS with respect to continuous as well as discrete
disturbances. As we will illustrate with an example in
Section~\ref{s:app}, this might lead to a \emph{separation of
concerns}, where a continuous design caters to continuous disturbances
 and the discrete design on the abstract domain caters to
discrete disturbances. Yet, the refined design provides robustness
with respect to both continuous and discrete disturbances.

While it is straightforward to construct symbolic models together with
contractive ASR for continuous control
systems following the methods presented
in~\cite{Tab08,PGT08,PT09,GPT10}, it is less clear how to construct
such models and relations for CPS. In Section~\ref{s:composition} we
provide a compositional scheme. This approach is in particular useful for
CPS, since the overall symbolic model of the CPS can be constructed
from the individual symbolic models of the physical part and the cyber
part of the CPS.

In summary, the contribution in this paper as follows: 1)
we introduce input-output dynamical stability as formal notion of robustness of
CPS and propose an abstraction/refinement scheme for the synthesis of
robust controllers; 2) we show how
to refine a robust design found on the abstract domain to a robust
design on the concrete domain whenever the symbolic model is related
to the concrete system by an ASR; 3) when
using contractive ASR we tailor the design on the abstract domain to
discrete disturbances, while ensuring the robustness of the refined
design with respect to continuous and discrete disturbances; 4) we
provide a compositional scheme to the construction of symbolic
models of CPS.

\subsection{Related work}

Robustness has been studied in the control systems community for more
than fifty years, see~\cite{Zam96}, and formalized in many different
ways including  operator finite gains, bounded-input bounded-output
stability, input-to-state stability, input-output stability, and
several others, see e.g. \cite{Son08}. Moreover, robustness
investigations have been conducted for different system models such as
continuous-time systems, 
sampled-data systems, networked control systems,
and general hybrid systems \cite{NTS99,NT04a,CT09,San14}. The notion of robustness
described in this paper benefited from all this prior work and was
directly inspired by input-to-state stability~\cite{Son08} and its
quantitative version: input-to-state dynamical stability~\cite{Gru02}.
Unlike the framework presented in this paper, most of the existing research on robustness of
nonlinear control systems does not consider constructive
procedures for the verification and controller synthesis enforcing
robustness. The only
exceptions known to the authors are \cite{GSP06,ZD13,HJND05}.
Unfortunately, the finite-state models that are used in those
approaches represent \emph{approximations} of the concrete dynamics,
rather than abstractions. Hence, the soundness of those methods is
not ensured.

Robustness for discrete systems also has a long standing history. For
example, Dijkstra's notion of self-stabilizing algorithms
in the context of distributed systems~\cite{Dij74}
requires the ``nominal'' behavior of the system to be resumed in
finitely many steps after the occurrence of a disturbance. As
explained in~\cite{TCRM14,TBCSM12}, self-stabilizing systems are a special
case of robust systems, as defined in this paper. In addition to
self-stabilization, there exist several different notions of
robustness for discrete systems. For example, in~\cite{SR13} a
systematic literature review is presented, where the authors distill
and categorize more than 9000 papers on software robustness.  In the
following, we focus on the few approaches that provide quantitative
measures of robustness for discrete systems and thereby are close to the framework
presented in this paper.

Let us first mention two notions of robustness for systems over finite
alphabets~\cite{TMD08} and reactive systems~\cite{BGHJ09} that we
think are the closest to the definition of robustness discussed in
this paper. Similarly to our methodology, the deviation of the system
behavior from its ``nominal'' behavior as well as the disturbances are
quantified. A system is said to be robust if its deviation from the
``nominal'' behavior is proportional to the disturbance causing that
deviation.  Although, this requirement captures the first intuitive
goal of robustness, those definitions do not require that the effect
of a sporadic disturbance disappears over time. See~\cite{TCRM14,TBCSM12} for
a more rigorous comparison of the robustness definitions.

Note that the work in~\cite{BGHJ09} on reactive systems demonstrates
how to quantify disturbances and their effects on the system
behavior in order to characterize safety specifications in terms of
robustness inequalities. However, it is unclear how to quantify
disturbances and their effects in order to encode liveness
specifications.
Some possible notions are given in~\cite{BCGHJ10,Ehl11,TOLM12}, where the
robustness of a system is expressed as the ratio of the
number of assumptions and guarantees the system meets. Those notions
of robustness are incompatible with our definition of robustness, and
further work is needed if we would like to express liveness
specifications through the notion of robustness presented in this
paper.

There exist different studies that characterize the robustness of
discrete systems in terms of a Lyapunov function, as it is done in~\cite{CL97,PMA94} for
discrete event systems, or in~\cite{MRT13} for $\omega$-regular automata and
in~\cite{RMF13} for software programs. Note that Lyapunov functions
represent a tool to establish robustness inequalities, but do not
provide a direct quantification of the effect of
disturbances on the system behavior.  Hence, further work is needed to
related Lyapunov functions, like those presented
in~\cite{CL97,PMA94,MRT13,RMF13}, to a robustness inequality that directly quantifies the consequences of 
disturbances on the system behavior.

Another interesting method to characterize robustness for programs is
outlined in~\cite{MS09} and~\cite{CGL12}. Programs are interpreted as
function that map input data to output data. A program is said to be
robust if the associated input-output function is continuous. In
comparison to our approach, in~\cite{MS09,CGL12} a program is assumed
to terminate on all inputs and is interpreted as a static function
while we consider CPS whose executions are non-terminating.

A preliminary version of this contribution appears in~\cite{RT14}
where we announce the main results presented in this paper. In
comparison to~\cite{RT14}, we provide detailed proofs of all
statements. Moreover, the result on the compositional construction of
contractive alternating simulation relations presented in Section~\ref{s:composition} is new.

\section{Preliminaries}

We denote by $\N=\{0,1,2,\ldots\}$ the set of natural numbers and by $\B_x(r)$  the closed ball centered at $x\in\R^n$ with radius $r\in\R_{\ge
0}$. We identify $\B(r)$ with $\B_0(r)$. We use $|x|$ and $|x|_2$ to denote the
$\infty$-norm and two-norm of $x\in\R^n$, respectively. Given
$x\in\R^n$ and $A\subseteq \R^n$, we use $|x|_{A}\defeq\inf_{x'\in
A}|x-x'|_2$ to denote the  Euclidean
distance between $x$ and $A$. 
Given a set $A\subseteq\R^n$ we use $[A]_\eta:=\{x\in A\mid \exists k\in\Z^n:
x=2k\eta\}$ to denote a uniform grid in $A$.
For $a,b\in\R$ with $a\le b$, we denote the closed, open, and half-open intervals in $\R$ by $\intcc{a,b}$,
$\intoo{a,b}$, $\intco{a,b}$ and $\intoc{a,b}$, respectively. For $a,b\in\Z$, $a\le b$ we
use $\intcc{a;b}$, $\intoo{a;b}$, $\intco{a;b}$ and $\intoc{a;b}$, to
denote the corresponding intervals in $\Z$.

Given a function $f:A\to B$ and $A'\subseteq A$ we use
$f(A'):=\{f(a)\in B\mid a\in A'\}$ to denote the image of $A'$ under $f$.
A set-valued function or mapping $f$ from $X$ to $Y$ is denoted by
$f:X\rightrightarrows Y$. Its domain is defined by $\dom f:=\{x\in
X\mid f(x)\neq\emptyset\}$. Given a sequence $a:\N\to A$ in some set
$A$, we use $a_t$ to denote its $t$-th
element and $a_{\intcc{0;t}}$ to denote its restriction to the
interval $\intcc{0;t}$.
The set of all finite sequences is denoted by $A^*$.
The set of all infinite sequences is denoted by $A^\omega$ and we
think of elements $a\in A^\omega$ as sequences $a:\N\to
A$.
Given a relation $R\subseteq A\times B$ we use $\pi_A(R)$ and
$\pi_B(R)$ to denote
its projection onto the set $A$ and $B$, respectively. 

We use the following classes of comparison functions:
\begin{itemize}
  \item ${\mathcal K}\defeq\{\alpha:\R_{\ge0}\to \R_{\ge0}\mid\alpha$ is continuous and strictly increasing with $\alpha(0)=0\}$
  \item ${\mathcal L}\defeq\{\alpha:\N\to \R_{\ge0}\mid\alpha$ is strictly decreasing with $\lim_{t\to\infty}\alpha(t)=0\}$
  \item ${\mathcal{KL}}\defeq\{\beta: \R_{\ge0}\times\N\to \R_{\ge0}\mid\forall t\in\N: \beta(\cdot,t)\in {\mathcal K}\text{ and }\forall c\in\R_{\ge0}: \beta(c,\cdot)\in {\mathcal L}\}$
  \item ${\mathcal{KLD}}\defeq\{\beta\in{\mathcal{KL}}\mid \forall c\in\R_{\ge0},\forall s,t\in\N:\mu(c,0)=c\land \mu(c,s+t)=\mu(\mu(c,s),t)\}$
\end{itemize}
Note that we  work only  with discrete-time systems and for this reason we have defined
the domain of class ${\mathcal L}$ functions as~$\N$.

\section{Robustness for CPS}
\label{s:cpsys}

Since CPS exhibit a rich dynamical
behavior through the interaction of discrete and continuous components
we need an adequate mathematical description that is able to
represent its complex dynamics. We use a general notion of
transition system as the underlying model of CPS.

\begin{definition}
A \emph{system} $S$ is a tuple $S=(X,X_0,U,r)$
consisting of
\begin{itemize}
  \item a set of states $X$;
  \item a set of initial states $X_0\subseteq X$;
  \item a set of inputs $U$ containing the distinguished symbol $\bot$;
  \item a transition map $r:X\times U\rightrightarrows X$.
\end{itemize}

A \emph{behavior} of $S$ is a pair of sequences $(\xi,\nu) \in
(X\times U)^\omega$, that satisfies
$\xi_0\in X_0$ and $\xi_{t+1}\in r(\xi_t,\nu_t)$ for all times $t\in\N$.

A state $x\in X$ is called \emph{reachable} if there exists $T\in\N$
and sequences $\xi\in X^T$, $\nu\in U^{T-1}$ with 
$\xi_{t+1}\in r(\xi_t,\nu_t)$ for all \mbox{$t\in\intco{0;T}$}, $\xi_0\in
X_0$, and  $\xi_T=x$.

A system is
called \emph{non-blocking} if
$r(x,u)\neq\emptyset$ for any reachable state $x$ and any $u\in U$. It is
called \emph{finite} if $X$ and $U$ are finite sets and otherwise it
is called \emph{infinite}.
\end{definition}

Behaviors are defined as infinite sequences since we have in mind
reactive systems, such as control systems, that are required to
interact with its environment for arbitrarily long periods of time. In
particular, we are interested in understanding the effect of
disturbances on the system behavior. Therefore, the inputs in $U$ are
to be interpreted as disturbance inputs. Nevertheless, in order to
allow for the possibility of absence of disturbances, we assume that
$U$ contains a special symbol $\bot\in U$ that indicates that no
disturbance is present.

For simplicity of presentation, we assume throughout this section that the system is
non-blocking, i.e., for every state and (disturbance) input there exists at least
one successor state to which the system can transition.

In order to be able to talk about robustness properties, we endow our notion of system
with cost functions $I$ and $O$ that we use to describe the desired behavior and to
quantify disturbances. 

\begin{definition}
A \emph{system with cost functions} is a triple $(S,I,O)$ where $S$ is
a system and \mbox{$I:X\times U\to \R_{\ge0}$} and $O:X\times
U\to\R_{\ge0}$ are the \emph{input cost function} and
\emph{output cost function}, respectively.
\end{definition}

We now introduce a notion of robustness following well-known notions of
robustness for control systems, see e.g.~\cite{Son08}. In particular,
we follow the notion of \emph{input-to-state dynamical stability} introduced
in~\cite{Gru02} and generalize it here to CPS using the cost functions $I$ and $O$. 

\begin{definition}\label{d:iods}
Let $(S,I,O)$ be a system with cost functions, $\gamma\in{\mathcal K}$,
$\mu\in{\mathcal{KLD}}$ and $\rho\in\R_{\ge0}$. We say that $S$ is
\emph{$(\gamma,\mu,\rho)$-practically input-output
dynamically stable ($(\gamma,\mu,\rho)$-pIODS) with respect to
$(I,O)$} or that \emph{$(S,I,O)$ is $(\gamma,\mu,\rho)$-pIODS} if the following
inequality holds for every behavior of~$S$:
\begin{IEEEeqnarray}{C'C}\label{e:iods}
  O(\xi_{t},\nu_t)\le\max_{t'\in\intcc{0;t}}\mu(\gamma(I(\xi_{t'},\nu_{t'})),t-t')+\rho,& \forall t\in\N.
\end{IEEEeqnarray}
We say that \emph{$(S,I,O)$ is pIODS} if there exist
$\gamma\in{\mathcal K}$, $\mu\in{\mathcal{KLD}}$ and $\rho\in\R_{\ge0}$ such that
$(S,I,O)$ is $(\gamma,\mu,\rho)$-pIODS.

We say that \emph{$(S,I,O)$ is $(\gamma,\mu)$-IODS} if it is
$(\gamma,\mu,0)$-pIODS, and \emph{IODS} if there exist
$\gamma\in{\mathcal K}$, $\mu\in{\mathcal{KLD}}$ such that
$(S,I,O)$ is $(\gamma,\mu)$-IODS.
\end{definition}

If the cost functions are clear from the context or are irrelevant to the
discussion, we abuse the terminology and call a system $S$
pIODS/IODS without referring to the cost functions.

In our previous work~\cite{TBCSM12,TCRM14} we used IODS as a notion of
robustness for cyber systems. The underlying model were
\emph{transducers}, i.e., maps $f:U^*\to Y^*$ that process input streams in $U^*$ into output
streams in $Y^*$. In that framework, the cost functions were defined on
\emph{sequences} of input symbols and output symbols, i.e.,
$I:U^*\to\N$ and $O:Y^*\to\N$. In order formulate such cost functions
in the current framework we can compose the transducers computing the
input and output costs with the system being modeled so that input and
output costs are readily available as functions on the states and
inputs of the composed system.

Let us describe how the IODS inequality~\eqref{e:iods} realizes the
intuitive notion of robustness described in the introduction.
For the following discussion, suppose we are given a
system with cost functions $(S,I,O)$ that
is $(\gamma,\mu)$-IODS.
We use 
the output cost to specify preferences on the system behaviors: less
preferred behaviors have higher costs. In particular, the cost should
be zero for the nominal behavior. Similarly, we use the input costs to quantify the
disturbances. Hence, the input costs should be zero if no
disturbances are present, i.e., $I(\xi_t,\nu_t)=0$
when $\nu=\bot^\omega$. Since, $\gamma(0)=0$ and
$\mu(0,s)=0$ for all $s\in \N$, zero input cost implies zero output
cost which, in turn, implies that the system follows the desired behavior.
Moreover, inequality~\eqref{e:iods} implies that bounded disturbances lead to bounded
deviations from the nominal behavior. Suppose
$I(\xi_t,\nu_t)\le c$ holds for some $c\in\R_{\ge0}$ for all $t\in\N$.
Note that $\gamma$ is  monotonically increasing and $\mu(c,t)\le \mu(c,0)=c$ holds for all $t\in\N$.  Therefore,
\eqref{e:iods} becomes
\begin{IEEEeqnarray*}{C'C}
  O(\xi_{t},\nu_t)\le
  \gamma(c)& \forall t\in\N.
\end{IEEEeqnarray*}
In addition, inequality~\eqref{e:iods} ensures that the effect of
a sporadic disturbance vanishes over time. Suppose there exists $t'\in\N$ after
which the input cost is zero, i.e., $I(\xi_t,\nu_t)=0$ for all
$t\ge t'$. Then it follows from the definition of
$\mu\in{\mathcal{KLD}}$
that 
\begin{IEEEeqnarray*}{C'C}
\mu(\gamma(I(\xi_{t'},\nu_{t'})),t-t')\to 0,& t\to\infty.
\end{IEEEeqnarray*}
Hence, the output cost is forced to decrease to zero as time progresses.

We refer the reader to our
previous work~\cite{TBCSM12,TCRM14}
for a further demonstration of the usefulness of
inequality~\eqref{e:iods} to express robustness of cyber systems.
We showed in~\cite{TBCSM12,TCRM14} that verifying if a cyber system is robust
can be algorithmically solved in polynomial time. Similarly, the problem
of synthesizing a controller to enforce robustness of a cyber system
is solvable in polynomial time. Moreover, we provided some examples of robust cyber 
systems in the sense of inequality~\eqref{e:iods}.

\section{Preservation of IODS by Simulation Relations}

In this section we introduce simulation relations between two systems and answer the following question:
\begin{center}
  \emph{Under what conditions is pIODS preserved by simulation relations?} 
\end{center}
We consider three different types of relations: exact simulation
relations (SR), approximate simulation relations (aSR) and
approximate contractive simulation relations (acSR).

Bisimilarity and
(bi)simulation relations were introduced in computer science by
Milner and Park in the early 1980s, see e.g.~\cite{Mil89}, and have
proven to be a valuable tool in verifying the correctness of programs.
Approximate SR~\cite{GP07,PGT08,Tab09} have
been introduced in the control community as a generalization of SR in
order to enlarge the class of systems which admit discrete abstractions
(or symbolic models).
We refine the notion of aSR to acSR, with the aim of capturing a
contraction property that is often observed in concrete systems, see
e.g.~\cite{LS98,PvdWN05,PGT08}. Intuitively, the existence of a SR from system $S$ to system $\hat S$
implies that for every behavior of $S$ there exists a behavior of
$\hat S$ satisfying certain properties. In the classical setting, one
would ask that the output of the two related behaviors coincides, from
which behavioral inclusion follows.
For our purposes, as we want to preserve the
IODS inequality, we require that the input costs and output costs
satisfy $\hat I\le I$ and 
$O\le \hat O$ along those related behaviors. The satisfaction of these
inequalities allows us to conclude that $(\hat S,\hat I,\hat O)$ being pIODS implies that
$(S,I,O)$ is pIODS.

For notational convenience we use $R_X:=\pi_{X\times
\hat X}(R)$ to denote the projection of a relation \mbox{$R\subseteq X\times
\hat X\times U\times \hat U$} on \mbox{$X\times \hat X$}. Moreover, we use
$U(x)\defeq\{u\in U\mid
r(x,u)\neq\emptyset\}$ to denote the set of inputs  for which the
right-hand-side is non-empty.

\subsection{Exact simulation relations}

\begin{definition}\label{d:SR}
Let $S$ and $\hat S$ be two systems.
A relation $R\subseteq X\times \hat X\times U\times \hat U$ is said to
be a \emph{simulation
relation (SR) from $S$ to $\hat S$}  if: 
\begin{enumerate}

  \item for all $x_{0}\in X_{0}$ exists $\hat x_{0}\in \hat X_{0}$ such
  that $(x_{0},\hat x_{0})\in R_X$;
  \item for all $(x,\hat x)\in R_X$ and $u\in U(x)$
        there exists $\hat u\in \hat U(\hat x)$ such that 
  \begin{enumerate}
    \item $(x,\hat x,u,\hat u)\in R$;
    \item for all $x'\in r(x,u)$ there exists $\hat x'\in \hat r(\hat x,\hat u)$
    such that $(x',\hat x')\in R_X$.
  \end{enumerate}
\end{enumerate}

Let $(S,I,O)$ and $(\hat S,\hat I,\hat O)$ be two systems with cost
functions.  We call a SR $R$ form $S$ to $\hat S$ an
\emph{input-output SR (IOSR) from $(S,I,O)$ to $(\hat S,\hat I,\hat O)$} if 
  \begin{IEEEeqnarray}{rCl't'rCl}\label{e:costs}
    \hat I(\hat x, \hat u) &\le& I(x,u)
    & and&
    O(x,u) &\le& \hat O(\hat x,\hat u)
  \end{IEEEeqnarray}
holds for all $(x,\hat x,u,\hat u)\in R$.
\end{definition}

Note that the notion of IOSR for systems with input and
output costs 
is a straightforward extension of the well-known definition of SR for
the usual definition of system, see~\cite{Tab09}. 

\begin{lemma}\label{l:behaviors}
Let $S$ and $\hat S$ be two systems.
Suppose there exists an SR $R$ from $S$ to
$\hat S$, then for every behavior $(\xi,\nu)$
of $S$  there exists a  behavior $(\hat \xi,\hat \nu)$ of $\hat S$ such that 
\begin{IEEEeqnarray}{c'c}\label{e:behaviors}
  (\xi_{t},\hat \xi_{t},\nu_{t},\hat \nu_{t})\in R,&t\in\N.
\end{IEEEeqnarray}
\end{lemma}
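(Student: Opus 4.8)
The statement is the classical "simulation implies behavioral inclusion" fact, here lifted to the relation on inputs as well as states. The plan is to construct the witnessing behavior $(\hat\xi,\hat\nu)$ of $\hat S$ by induction on $t$, maintaining the invariant that at each step $(\xi_t,\hat\xi_t)\in R_X$ and, once we have committed to the input $\nu_t$, that $(\xi_t,\hat\xi_t,\nu_t,\hat\nu_t)\in R$.

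\begin{proof}[Proof sketch]
Fix a behavior $(\xi,\nu)$ of $S$, so $\xi_0\in X_0$ and $\xi_{t+1}\in r(\xi_t,\nu_t)$ for all $t\in\N$. We build $\hat\xi_t$ and $\hat\nu_t$ recursively.

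\emph{Base case.} Since $\xi_0\in X_0$, condition~(1) of Definition~\ref{d:SR} gives some $\hat\xi_0\in\hat X_0$ with $(\xi_0,\hat\xi_0)\in R_X$.

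\emph{Inductive step.} Suppose $\hat\xi_0,\dots,\hat\xi_t$ and $\hat\nu_0,\dots,\hat\nu_{t-1}$ have been chosen so that $(\xi_s,\hat\xi_s,\nu_s,\hat\nu_s)\in R$ for $s<t$ and $(\xi_t,\hat\xi_t)\in R_X$. Because $(\xi,\nu)$ is a behavior, $\xi_{t+1}\in r(\xi_t,\nu_t)$, so in particular $r(\xi_t,\nu_t)\neq\emptyset$ and hence $\nu_t\in U(\xi_t)$. Apply condition~(2) of Definition~\ref{d:SR} to the pair $(\xi_t,\hat\xi_t)\in R_X$ and the input $\nu_t\in U(\xi_t)$: this yields $\hat\nu_t\in\hat U(\hat\xi_t)$ with (2a) $(\xi_t,\hat\xi_t,\nu_t,\hat\nu_t)\in R$, and (2b) for every $x'\in r(\xi_t,\nu_t)$ there is $\hat x'\in\hat r(\hat\xi_t,\hat\nu_t)$ with $(x',\hat x')\in R_X$. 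Instantiating (2b) at the specific successor $x'=\xi_{t+1}$ gives a state, which we declare to be $\hat\xi_{t+1}\in\hat r(\hat\xi_t,\hat\nu_t)$, satisfying $(\xi_{t+1},\hat\xi_{t+1})\in R_X$. This restores the invariant at time $t+1$.

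Carrying the recursion through all $t\in\N$ produces sequences $\hat\xi\in\hat X^\omega$ and $\hat\nu\in\hat U^\omega$ with $\hat\xi_0\in\hat X_0$ and $\hat\xi_{t+1}\in\hat r(\hat\xi_t,\hat\nu_t)$ for all $t$, i.e.\ $(\hat\xi,\hat\nu)$ is a behavior of $\hat S$, and by (2a) at each step it satisfies $(\xi_t,\hat\xi_t,\nu_t,\hat\nu_t)\in R$ for all $t\in\N$, which is~\eqref{e:behaviors}.
\end{proof}

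The only subtlety worth flagging is the (routine) appeal to a choice principle: at each stage we pick one among possibly many valid $\hat\nu_t$ and $\hat\xi_{t+1}$, so strictly speaking the recursion uses dependent choice; since the systems here are typically finite or the constructions effective, this is harmless, but it is the one point where the argument is more than mechanical bookkeeping. Everything else is a direct unwinding of Definition~\ref{d:SR}. Note also that the cost inequalities~\eqref{e:costs} play no role in this lemma — it is stated for plain systems and plain SR — they enter only when one later uses this lemma to transfer the pIODS inequality~\eqref{e:iods} from $\hat S$ to $S$.
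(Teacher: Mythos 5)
Your proof is correct and is exactly the standard inductive construction that the paper omits, deferring to the proof of Proposition~4.9 in~\cite{Tab09}: base case from condition~(1) of Definition~\ref{d:SR}, inductive step from condition~(2) applied to $\nu_t\in U(\xi_t)$ (which holds because $\xi_{t+1}\in r(\xi_t,\nu_t)$ forces $r(\xi_t,\nu_t)\neq\emptyset$), instantiated at the successor $\xi_{t+1}$. No discrepancy with the paper's intended argument.
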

\begin{proof}
The proof follows by similar arguments as the proof of
\cite[Proposition~4.9]{Tab09} and is omitted here.
\end{proof}

Simulation relations preserve IODS in the following sense.

\begin{theorem}\label{t:sim}
Let $(S,I,O)$ and $(\hat S,\hat I,\hat O)$ be two systems with cost
functions and
suppose there exists an IOSR $R$ from $(S,I,O)$ to
$(\hat S,\hat I,\hat O)$. If $(\hat S,\hat I,\hat O)$ is
$(\gamma,\mu,\rho)$-pIODS then $(S,I,O)$ is
$(\gamma,\mu,\rho)$-pIODS.
\end{theorem}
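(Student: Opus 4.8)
The plan is to use Lemma~\ref{l:behaviors} as the bridge between the two systems. Fix an arbitrary behavior $(\xi,\nu)$ of $S$; we must verify the pIODS inequality~\eqref{e:iods} for it. Since $R$ is in particular an SR from $S$ to $\hat S$, Lemma~\ref{l:behaviors} gives a behavior $(\hat\xi,\hat\nu)$ of $\hat S$ with $(\xi_t,\hat\xi_t,\nu_t,\hat\nu_t)\in R$ for all $t\in\N$.

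Next I would invoke the hypothesis that $(\hat S,\hat I,\hat O)$ is $(\gamma,\mu,\rho)$-pIODS. Applied to the particular behavior $(\hat\xi,\hat\nu)$, this yields, for every $t\in\N$,
\begin{IEEEeqnarray*}{C}
  \hat O(\hat\xi_t,\hat\nu_t)\le\max_{t'\in\intcc{0;t}}\mu(\gamma(\hat I(\hat\xi_{t'},\hat\nu_{t'})),t-t')+\rho.
\end{IEEEeqnarray*}

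Then I would feed in the cost inequalities~\eqref{e:costs} coming from the IOSR property: at each time $t'$ we have $(\xi_{t'},\hat\xi_{t'},\nu_{t'},\hat\nu_{t'})\in R$, hence $\hat I(\hat\xi_{t'},\hat\nu_{t'})\le I(\xi_{t'},\nu_{t'})$ and $O(\xi_t,\nu_t)\le\hat O(\hat\xi_t,\hat\nu_t)$. Since $\gamma\in{\mathcal K}$ is increasing and $\mu(\cdot,s)\in{\mathcal K}$ is increasing for each fixed $s$, the term $\mu(\gamma(\hat I(\hat\xi_{t'},\hat\nu_{t'})),t-t')$ is bounded above by $\mu(\gamma(I(\xi_{t'},\nu_{t'})),t-t')$; taking the maximum over $t'\in\intcc{0;t}$ preserves this bound, and adding $\rho$ preserves it as well. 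Chaining the three estimates gives $O(\xi_t,\nu_t)\le\max_{t'\in\intcc{0;t}}\mu(\gamma(I(\xi_{t'},\nu_{t'})),t-t')+\rho$ for all $t\in\N$, which is exactly~\eqref{e:iods} for $(S,I,O)$. As the behavior $(\xi,\nu)$ was arbitrary, $(S,I,O)$ is $(\gamma,\mu,\rho)$-pIODS.

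This proof is essentially a routine monotonicity argument, so I do not anticipate a genuine obstacle; the only point requiring a moment's care is that the monotonicity must be applied in the right order — first pushing $\hat I\le I$ through $\gamma$ then through $\mu(\cdot,t-t')$, then through the maximum, and only then chaining with $O\le\hat O$ on the output side — but each of these steps is immediate from the definitions of the comparison-function classes and the IOSR conditions.
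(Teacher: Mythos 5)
Your proposal is correct and follows the same route as the paper's own proof: invoke Lemma~\ref{l:behaviors} to obtain a related behavior of $\hat S$, apply the pIODS inequality of $(\hat S,\hat I,\hat O)$ to that behavior, and then chain the IOSR cost inequalities $O\le\hat O$ and $\hat I\le I$ through the monotonicity of $\gamma$ and $\mu(\cdot,s)$. No gaps; your argument is, if anything, slightly more explicit about the order in which monotonicity is applied.
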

\begin{proof}
Since we assume that $(\hat S,\hat I,\hat O)$ is $(\gamma,\mu,\rho)$-pIODS, any behavior
$(\hat \xi,\hat \nu)$ of $\hat S$ satisfies
\begin{IEEEeqnarray*}{C}
\hat O(\hat \xi_{t},\hat \nu_t)\le \max_{t'\in \intcc{0;t}}
\mu(\gamma(\hat I(\xi_{t'},\hat \nu_{t'})),t-t')+\rho
\end{IEEEeqnarray*}
for all times $t\in\N$.
From Lemma~\ref{l:behaviors} follows that for every behavior
$(\xi,\nu)$ of $S$ there exists a behavior $(\hat \xi,\hat \nu)$ of
$\hat S$
 such that \eqref{e:behaviors} holds.
Now we derive the inequality
\begin{IEEEeqnarray*}{C}
\begin{IEEEeqnarraybox}[][c]{rCl}
O(\xi_{t},\nu_t)
&\le& 
\hat O(\hat \xi_{t},\hat \nu_t)\\
&\le& 
\max_{t'\in \intcc{0;t}} \mu(\gamma(\hat I(\hat\xi_{t'},\hat
\nu_{t'})),t-t')+\rho\\
&\le& 
\max_{t'\in \intcc{0;t}} \mu(\gamma(I(\xi_{t'}, \nu_{t'})),t-t')+\rho
\end{IEEEeqnarraybox}
\end{IEEEeqnarray*}
for all $t\in\N$. The last inequality follows from $\hat I(\hat
\xi_t,\hat \nu_t)\le I(\xi_t,\nu_t)$ and the monotonicity
properties of the functions $\gamma$ and $\mu$. Since we can repeat this argument for any 
behavior $(\xi,\nu)$ of $S$ we see that $(S,I,O)$ is
$(\gamma,\mu,\rho)$-pIODS.
\end{proof}

Note how preservation of pIODS is contra-variant, i.e., while the
direction of the simulation relation is from system $S$ to system
$\hat S$, the propagation of pIODS is from system $\hat S$ to
system~$S$. Moreover, by taking $\rho=0$ it follows that $\hat S$
being IODS implies $S$ is IODS.

\subsection{Approximate simulation relations}

Exact simulation relations are often too restrictive when one seeks to
relate a physical system to a finite-state abstraction or symbolic
model. In this case, approximate simulation relations were shown to be
adequate in the sense that they can be shown to exist for large
classes of physical systems~\cite{GP07,Tab09}.

\begin{definition}
Let $(S,I,O)$ and $(\hat S,\hat I,\hat O)$ be two systems with cost
functions.
A SR $R$ from $S$ to
$\hat S$ is called an \emph{$\varepsilon$-approximate input-output SR
($\varepsilon$-aIOSR) from $(S,I,O)$ to
$(\hat S,\hat I,\hat O)$} if every $(x,\hat x,u,\hat u)\in R$ satisfies:
\begin{IEEEeqnarray}{C}\label{e:acosts}
\begin{IEEEeqnarraybox}[][c]{rCl't'rCl}
  \hat I(\hat x,\hat u)
  &\le& I(x,u)+\varepsilon& and&
  O(x,u) &\le& \hat O(\hat x,\hat u)+\varepsilon.
\end{IEEEeqnarraybox}
\end{IEEEeqnarray}
\end{definition}

Note that the definition of aIOSR is again a straightforward extension
of the well-known notion of approximate SR of systems,
see~\cite{Tab09}. For $\varepsilon=0$ the notion of exact
IOSR is recovered. However, the notion of aIOSR introduces some
flexibility as it allows, for example, the inequality
$O(x,u)-\varepsilon\le \hat O(\hat x,\hat u) \le O(x,u) $ to hold which is not possible for IOSR.
This flexibility is important when we are dealing we infinite state systems
where an abstract state  in $\hat X$ corresponds to a set of
states in $X$.

\begin{theorem}
\label{t:asim}
Let $(S,I,O)$ and $(\hat S,\hat I,\hat O)$ be two systems with cost
functions and
suppose there exists an $\varepsilon$-aIOSR $R$ from $(S,I,O)$ and $(\hat S,\hat I,\hat O)$.
If $(\hat S,\hat I,\hat O)$ is $(\gamma,\mu,\rho)$-pIODS, then 
$(S,I,O)$ is $(\gamma',\mu,\rho')$-pIODS with $\gamma'(c)=2\gamma(2c)$ and
$\rho'=\mu(\gamma'(\varepsilon),0)+\varepsilon+\rho$.
\end{theorem}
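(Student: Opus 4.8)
The plan is to mimic the proof of Theorem~\ref{t:sim}, but carefully track how the two $\varepsilon$-slacks in~\eqref{e:acosts} propagate through the $\mathcal K$ and $\mathcal{KLD}$ functions. First I would invoke Lemma~\ref{l:behaviors}: since an $\varepsilon$-aIOSR is in particular an SR, every behavior $(\xi,\nu)$ of $S$ admits a behavior $(\hat\xi,\hat\nu)$ of $\hat S$ with $(\xi_t,\hat\xi_t,\nu_t,\hat\nu_t)\in R$ for all $t\in\N$. Applying the $(\gamma,\mu,\rho)$-pIODS inequality to this behavior of $\hat S$ and then the left and right inequalities of~\eqref{e:acosts} gives, for all $t\in\N$,
\begin{IEEEeqnarray*}{rCl}
O(\xi_t,\nu_t)
&\le& \hat O(\hat\xi_t,\hat\nu_t)+\varepsilon\\
&\le& \max_{t'\in\intcc{0;t}}\mu(\gamma(\hat I(\hat\xi_{t'},\hat\nu_{t'})),t-t')+\rho+\varepsilon\\
&\le& \max_{t'\in\intcc{0;t}}\mu(\gamma(I(\xi_{t'},\nu_{t'})+\varepsilon),t-t')+\rho+\varepsilon,
\end{IEEEeqnarray*}
using monotonicity of $\gamma$ and of $\mu(\cdot,s)$ in the last step. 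So the whole difficulty is to absorb the ``$+\varepsilon$'' inside $\gamma(\cdot)$ and the trailing ``$+\varepsilon$'' into the claimed parameters $\gamma'(c)=2\gamma(2c)$ and $\rho'=\mu(\gamma'(\varepsilon),0)+\varepsilon+\rho$.

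The key algebraic step is a ``weak triangle inequality'' for $\mathcal K$ functions: for any $\alpha\in\mathcal K$ and $a,b\ge 0$ one has $\alpha(a+b)\le\alpha(2\max\{a,b\})\le\alpha(2a)+\alpha(2b)$, since $\alpha\ge 0$. I would apply this with $a=I(\xi_{t'},\nu_{t'})$ and $b=\varepsilon$ to get
\begin{IEEEeqnarray*}{rCl}
\gamma(I(\xi_{t'},\nu_{t'})+\varepsilon)
&\le& \gamma(2I(\xi_{t'},\nu_{t'}))+\gamma(2\varepsilon)\\
&\le& 2\gamma(2I(\xi_{t'},\nu_{t'}))+\gamma(2\varepsilon)\\
&=& \gamma'(I(\xi_{t'},\nu_{t'}))+\tfrac12\gamma'(\varepsilon)
\;\le\; \gamma'(I(\xi_{t'},\nu_{t'}))+\gamma'(\varepsilon).
\end{IEEEeqnarray*}
Then I need the analogous splitting inside $\mu$: since $\mu(\cdot,s)\in\mathcal K$, the same weak triangle inequality gives $\mu(p+q,s)\le\mu(2p,s)+\mu(2q,s)$ for $p,q\ge0$. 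Here the factor-of-$2$ inside $\gamma$ is exactly what makes $\gamma'(c)=2\gamma(2c)$ appear with the right shape so that $\mu(2p,s)$ is again controlled; I would choose $p=\gamma(2I(\xi_{t'},\nu_{t'}))$-type terms so that doubling them stays within $\gamma'$. Concretely, writing $c_{t'}:=I(\xi_{t'},\nu_{t'})$, I bound
$\mu(\gamma(c_{t'}+\varepsilon),t-t')\le\mu\bigl(2\gamma(2c_{t'})+2\gamma(2\varepsilon),\,t-t'\bigr)$, then split this via the weak triangle inequality for $\mu(\cdot,t-t')$ into $\mu(4\gamma(2c_{t'}),t-t')+\mu(4\gamma(2\varepsilon),t-t')$ — and here I may instead prefer to first consolidate $\gamma(c_{t'}+\varepsilon)\le\gamma'(c_{t'}+\varepsilon)$ if $\gamma'$ is chosen slightly more generously, to keep the bookkeeping clean. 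The second summand is bounded by $\mu(\gamma'(\varepsilon),0)$ using $\mu(\cdot,t-t')\le\mu(\cdot,0)$ and monotonicity, which is exactly the first term of $\rho'$; the first summand is at most $\mu(\gamma'(c_{t'}),t-t')$ after adjusting constants, which rebuilds the pIODS right-hand side with $\gamma'$.

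Assembling the pieces: the $\max$ over $t'\in\intcc{0;t}$ of $\mu(\gamma'(c_{t'}),t-t')$ is the principal term of the $(\gamma',\mu,\rho')$-pIODS inequality, while all the slack terms collect into $\mu(\gamma'(\varepsilon),0)+\varepsilon+\rho=\rho'$, which is a constant independent of $t$ as required. Since the argument is uniform over behaviors $(\xi,\nu)$ of $S$, this shows $(S,I,O)$ is $(\gamma',\mu,\rho')$-pIODS. I expect the main obstacle to be purely notational rather than conceptual: making sure the repeated applications of the weak triangle inequality inside $\gamma$ and then inside $\mu$ do not force the constant in $\gamma'$ to blow past $2\gamma(2c)$ — this is why one should split the $\varepsilon$ off \emph{before} feeding things into $\mu$, handle the pure-$\varepsilon$ residue with the $\mu(\cdot,0)$ bound once, and avoid nesting a second doubling inside $\mu$; if a naive split does inflate the constant, the fix is to observe that $\mu(2\gamma(2c),s)\le\mu(2\gamma(2c),0)$-type bounds and the freedom in $\rho'$ let one still land on the stated parameters, or to note $2\gamma(2c)\le\gamma'(c)$ already gives the needed headroom. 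A secondary point worth stating explicitly is that $\gamma'\in\mathcal K$ (it is continuous, strictly increasing, and vanishes at $0$ since $\gamma$ does), so the conclusion is well-formed.
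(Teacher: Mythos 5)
Your proposal is correct and follows essentially the same route as the paper: the chain $O\le\hat O+\varepsilon\le\max_{t'}\mu(\gamma(I+\varepsilon),t-t')+\varepsilon+\rho$ followed by splitting the $\varepsilon$ out of $\gamma$ and then out of $\mu$ is exactly what the paper packages as Lemma~\ref{l:kldbound}, yielding $\mu(\gamma(c+\varepsilon),t)\le\mu(2\gamma(2c),t)+\mu(2\gamma(2\varepsilon),0)$. The only thing to clean up is the intermediate detour through $2\gamma(2c)+2\gamma(2\varepsilon)$, which (as you yourself note) inflates the constant to $4\gamma(2c)$; applying the weak triangle inequality for $\mu(\cdot,s)$ directly to $p=\gamma(2c)$ and $q=\gamma(2\varepsilon)$ lands precisely on $\gamma'(c)=2\gamma(2c)$ and $\rho'=\mu(\gamma'(\varepsilon),0)+\varepsilon+\rho$ as claimed.
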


\begin{proof}[Proof of Theorem~\ref{t:asim}]
Using the same arguments as in the proof of
Theorem~\ref{t:sim} we choose for any behavior
$(\xi,\nu)$ of $S$ a behavior $(\hat \xi,\hat \nu)$ of
$\hat S$ satisfying
\eqref{e:behaviors}. Then we obtain:
\begin{IEEEeqnarray}{c}\label{e:helper2}
\begin{IEEEeqnarraybox}[][c]{rCl}
  O(\xi_{t},\nu_t)
  &\le& 
  \hat O(\hat \xi_{t},\hat\nu_t)+\varepsilon\\
  &\le& 
  \max_{t'\in \intcc{0;t}} \mu(\gamma(\hat
  I(\xi_{t'},\nu_{t'})),t-t')+\varepsilon+\rho\\
  &\le& 
  \max_{t'\in \intcc{0;t}}
  \mu(\gamma(I(\xi_{t'},\nu_{t'})+\varepsilon),t-t')+\varepsilon+\rho.
\end{IEEEeqnarraybox}
\end{IEEEeqnarray}
Now we can use Lemma~\ref{l:kldbound} in the appendix with $\mu$, $\gamma$, and
$\varepsilon$ to obtain $\gamma'(r)= 2\gamma(2r)$, $\sigma(\varepsilon)=\mu(2\gamma(2\varepsilon),0)$ and
conclude
\begin{IEEEeqnarray*}{C}
\begin{IEEEeqnarraybox}[][c]{rCl,r*}
  O(\xi_{t},\nu_t) 
  &\le& 
  \max_{t'\in \intcc{0;t}} \mu(\gamma'(I(\xi_{t'},\nu_{t'})),t-t')
  +\sigma(\varepsilon)+\varepsilon+\rho
\end{IEEEeqnarraybox}
\end{IEEEeqnarray*}
which completes the proof.
\end{proof}

\subsection{Contractive simulation relations}

The construction of abstractions or symbolic models for physical
systems described in \cite{PGT08,PT09,Tab09} results in simulation
relations that satisfy a certain contraction property. Here we
introduce a notion of simulation that captures those
contraction properties.

In the following definition of contractive simulation relation from
$S$ to $\hat S$, we use a function $\sd{}:U\times \hat U\to \R_{\ge0}$ to measure the
``mismatch'' between two inputs $u\in U$ and $\hat u\in \hat U$. 
In various examples, in which we show that two systems are
related, the set of inputs $\hat U$ of system $\hat S$ is actually
a subset $\hat U\subseteq U$ of the set of inputs of system $S$ and we simply use a
norm $|\cdot|$ in $U$ as distance function $\sd{}(u,\hat u)=|u-\hat
u|$, see Example~\ref{ex:1}, Example~\ref{ex:2} and
Section~\ref{s:app}. However, in the following definition, we simply
assume we are given a function $\sd{}:U\times \hat
U\to \R_{\ge0}$ without referring to any underlying metric or norm.

\begin{definition}\label{d:acSR}
Let $S$ and $\hat S$ be two systems, 
let $\kappa,\lambda\in\R_{\ge0}$, $\beta\in\intco{0,1}$ be some
parameters and consider a map $\sd{}:U\times \hat U\to\R_{\ge0}$.
We call
a parameterized (by~$\varepsilon\in\intco{\kappa,\infty}$) relation
$R(\varepsilon)\subseteq X\times \hat X\times U\times \hat U$ a \emph{$\kappa$-approximate
$(\beta$,$\lambda)$-contractive simulation relation
($(\kappa,\beta,\lambda)$-acSR) from $S$ to $\hat S$ with distance
function $\sd{}$}  if
$R(\varepsilon)\subseteq R(\varepsilon')$ holds for all
$\varepsilon\le \varepsilon'$ and for all
$\varepsilon\in\intco{\kappa,\infty}$ we have
\begin{enumerate}
  \item $\forall x_{0}\in X_{0},\exists \hat x_{0}\in \hat
  X_{0}:(x_{0},\hat x_{0})\in R_X(\kappa)$;
  \item $\forall (x,\hat x)\in R_X(\varepsilon),\forall u\in
  U(x)$, $\exists \hat u\in \hat
  U(\hat x):$
\begin{enumerate}
\item $(x,\hat x,u,\hat u)\in R(\varepsilon)$
\item $\forall x'\in r(x,u),\exists \hat x'\in \hat r(\hat x,\hat
u):$\\[1ex]
     $(x',\hat x')\in R_X(\kappa+\beta \varepsilon+\lambda\sd{}(u,\hat u))$.
\end{enumerate}
\end{enumerate}

Let $(S,I,O)$ and $(\hat S,\hat I,\hat O)$ be two systems with cost
functions.
We call a $(\kappa,\beta,\lambda)$-acSR $R(\varepsilon)$ from $S$ to
$\hat S$ with distance function $\sd{}$ a \emph{$\kappa$-approximate $(\beta,\lambda)$-contractive input-output
SR ($(\kappa,\beta,\lambda)$-acIOSR) from $(S,I,O)$ to $(\hat S,\hat
I,\hat O)$ with
distance function $\sd{}$} if
there exist $\gamma_O,\gamma_I\in{\mathcal K}$ such that
\begin{IEEEeqnarray}{c}\label{e:accosts}
\begin{IEEEeqnarraybox}[][c]{rCl}
    \hat I(\hat x, \hat u) 
    &\le& 
    I(x,u)+
    \gamma_I(\varepsilon')\\
    O(x,u) 
    &\le&
    \hat O(\hat x,\hat u)+
    \gamma_O(\varepsilon')
\end{IEEEeqnarraybox}
\end{IEEEeqnarray}
holds for all $(x,\hat x,u,\hat u)\in R(\varepsilon)$ and
$\varepsilon'=\max\{\varepsilon,\sd{}(u,\hat u)\}$.
\end{definition}

Recall that in generalizing IOSR to aIOSR we merely relaxed the
inequalities on the costs functions by a constant parameter
$\varepsilon$, compare~\eqref{e:costs}
and~\eqref{e:acosts}. Here, we even go one step further, and relax the
inequalities using the generalized gain
functions $\gamma_I$ and $\gamma_O$, where $\varepsilon'$
in~\eqref{e:acosts} depends on the parameter $\varepsilon$ that
appears in the definition of the acSR $R(\varepsilon)$ and on the input
mismatch measured in terms of $\sd{}$. This change, in combination
with the definition of acSR, allows us to quantify the
relaxation in the cost function inequalities as a function of the
difference of input histories, see~Theorem~\ref{t:acbehaviors} and the
subsequent discussion. Before, we make those statements more precise,
let us first introduce an example to illustrate the notion of acSR.

\begin{example}\label{ex:1}
We consider a scalar disturbed linear system 
\begin{IEEEeqnarray}{c'c}\label{e:ex1:sys}
x^+=0.6 x+u.
\end{IEEEeqnarray}
on the bounded set $D:=\intcc{-1,1}$. 
We start our analysis by casting~\eqref{e:ex1:sys} as a system $S$
with $X:=\R$,
$X_0:=D$, $U:=\R$ and
$r(x,u):=\{0.6x+u\}$. 

Note that $D$ is forward invariant with respect
to~\eqref{e:ex1:sys} in the absence of disturbances, i.e., when $u=0$. Later on, we
analyze the invariance property in the presence of disturbances.
This motivates our choice of cost functions with $O(x,u):=|x|_D$ and
$I(x,u):=|u|$.

We now introduce a symbolic model $\hat S$ of $S$ with $\hat
X:=[D]_{0.2}$, $\hat X_0:=\hat X$, $\hat U:=\{0\}$ and
\begin{IEEEeqnarray*}{c}
  \hat x'\in \hat r(\hat x,\hat u):\iff |\hat x'-0.6\hat x|\le 0.2.
\end{IEEEeqnarray*}
Note that since $O(\hat x,\hat u)=I(\hat x,\hat u)=0$ for all $\hat x\in
\hat X$ and $\hat u\in\hat U$, we define the cost functions for $\hat
S$ to be $\hat O:=0$ and $\hat I:=0$. We also introduce the relation
$R(\varepsilon):=R_X(\varepsilon)\times\R\times\{0\}$ with
\begin{IEEEeqnarray*}{c}
  R_X(\varepsilon):=\{(x,\hat x)\in X\times\hat X\mid |x-\hat x|\le \varepsilon\}
\end{IEEEeqnarray*}
and show that $R(\varepsilon)$ is a
$(0.2,0.6,1)$-acSR from $S$ to $\hat S$ with distance function
$\sd{}(u,0):=|u|$. 

Point 1) in Definition~\ref{d:acSR} is easily
verified. Now let \mbox{$(x,\hat x)\in R_X(\varepsilon)$} and
$u\in U$. We pick $0\in \hat U$ and observe that $(x,\hat x,u,0)\in
R(\varepsilon)$ holds by definition of $R(\varepsilon)$. We proceed
with 2.b) of Definition~\ref{d:acSR}. For $x'\in r(x,u)$ there exists
$\hat x'\in \hat r(\hat x,0)$ with 
\begin{IEEEeqnarray*}{c}
|x'-\hat x'|\le 0.2+ |0.6 x+u-0.6\hat x|\le0.2+ 0.6 \varepsilon+|u|
\end{IEEEeqnarray*}
and it follows that $R(\varepsilon)$ is a
$(0.2,0.6,1)$-acSR from $S$ to $\hat S$.
Moreover, the inequalities~\eqref{e:accosts} are satisfied with
$\gamma_I=0$ and $\gamma_O(c)=c$.
Hence, $R(\varepsilon)$ is an acIOSR from $(S,I,O)$ to $(\hat S,\hat
I,\hat O)$.

Let us now emphasize that there exists no  $\varepsilon$-aIOSR $\hat R$ from
$(S,I,O)$ to $(\hat S,\hat I,\hat O)$ for any finite symbolic model $\hat S$.
For the sake of contradiction, suppose there exists  an
$\varepsilon$-aIOSR $\hat R$ from $(S,I,O)$ to $(\hat S,\hat I,\hat O)$ and $\hat S$ is
finite. Since $\hat S$ is
finite, there necessarily exists a state $\hat x\in \hat X$ and input $\hat
u\in \hat U$ such that
the set of related sates and inputs $\{(x,u)\in X\times U\mid (x,\hat
x,u,\hat u)\in \hat R\}$ is
unbounded. As a consequence, we find for any constant  $c\in\R$, a pair $(x,u)$ with $(x,\hat x,u,\hat u)\in \hat R$
so that $O(x,u)=|x|_D> \hat O(\hat x,\hat u)+c$ and $\hat R$ cannot be
an aIOSR since \eqref{e:acosts} is violated.

Conversely, if we bound the set of states and inputs of~\eqref{e:ex1:sys}
but consider the modified dynamics $x^+=x+u$, then it is easy to compute 
a relation $\hat R$ that is an $\varepsilon$-aIOSR from $(S,I,O)$ to $(\hat S,\hat I,\hat O)$, but there is no acSR from $\hat S$ to~$S$.

We resume the analysis of this example at the end of this
section, where we continue the robustness analysis of the invariance
property of $D$ with respect to $S$.
\end{example}

The previous example demonstrates that we can use acIOSR to relate an
infinite system $S$ with an \emph{unbounded} set of states and/or inputs,
with a \emph{finite} system $\hat S$, which is not possible using aIOSR.

We point out that any $(\kappa,\beta,\lambda)$-acIOSR $R(\varepsilon)$ from $(S,I,O)$ to $(\hat S,\hat I,\hat O)$ is
also an aIOSR, whenever the maximal distance between two related elements in $U$ and $\hat U$ is bounded. Let
$\alpha\in\R_{\ge0}$ be given such that $\sd{}(u,\hat u)\le \alpha$
holds for all $(u,\hat u)\in \pi_{U\times \hat U}(R(\varepsilon))$ and
$\varepsilon\in\R_{\ge0}$. Now we fix
$\varepsilon$ such that
$\kappa+\beta\varepsilon+\lambda\alpha\le\varepsilon$ holds. Note that
we can always find such an $\varepsilon$ as we assume
$\beta\in\intco{0,1}$.
Then the relation $R':=R(\varepsilon)$ is an aSR from $S$ to~$\hat S$. This observation follows
immediately from the definition of $R(\varepsilon)$ since 
$\kappa+\beta\varepsilon+\lambda\alpha\le\varepsilon$
implies that $R(\kappa+\beta\varepsilon+\lambda\alpha)\subseteq
R'$ which in turn implies that $R'$ is a SR
from $S$ to $\hat S$.
Moreover, if $R(\varepsilon)$ is an acIOSR then $R'$ is an
$\varepsilon'$-aIOSR from $(S,I,O)$ to $(\hat S,\hat I,\hat O)$ with
$\varepsilon':=\max\{\varepsilon,\gamma_O(\max\{\varepsilon,\alpha\}),$ $\gamma_I(\max\{\varepsilon,\alpha\})\}$.

Before we explain how the notions of acSR and acIOSR capture the
contraction property of $S$, we
provide a result that mimics Lemma~\ref{l:behaviors}.

\begin{theorem}\label{t:acbehaviors}
Let $S$ and $\hat S$ be systems and let
$R(\varepsilon)$ be a $(\kappa,\beta,\lambda)$-acSR from
$S$ to $\hat S$ with distance function $\sd{}$.
Then there exist 
$\mu_\Delta\in{\mathcal{KLD}}$ and $\gamma_\Delta,\kappa_\Delta \in\R_{\ge 0}$ 
such that for every  behavior $(\xi,\nu)$ of $S$ there
exists a behavior  $(\hat \xi,\hat \nu)$ of $\hat S$ so that the two
behaviors satisfy 
\begin{IEEEeqnarray}{c'c}\label{e:acbehaviors}
  (\xi_{t},\hat\xi_{t},\nu_{t},\hat\nu_{t})\in R(\varepsilon_t),&t\in\N.
\end{IEEEeqnarray}
with $\varepsilon_{t+1}\le\max_{t'\in\intcc{0;t}}\mu_\Delta(\gamma_\Delta \sd{}(\nu_{t'},\hat
\nu_{t'}),t-t')+\kappa_\Delta$.
\end{theorem}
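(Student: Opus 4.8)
The plan is to mimic the step-by-step behavior construction from the proof of Lemma~\ref{l:behaviors}, but to record at each step the parameter $\varepsilon_t$ for which relatedness holds, and then to solve the resulting recursion and repackage it into the ${\mathcal{KLD}}$ form required by~\eqref{e:acbehaviors}. As a preliminary reduction, I note that by monotonicity of $R(\cdot)$ in its parameter, every $(\kappa,\beta,\lambda)$-acSR is also a $(\kappa,\beta',\lambda)$-acSR for any $\beta'\in\intco{\beta,1}$ (since $R_X(\kappa+\beta\varepsilon+\lambda\sd{}(u,\hat u))\subseteq R_X(\kappa+\beta'\varepsilon+\lambda\sd{}(u,\hat u))$); hence I may assume without loss of generality that $\beta\in\intoo{0,1}$, which will be needed for the comparison function constructed below to be decreasing.

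Fix a behavior $(\xi,\nu)$ of $S$. By point~1) of Definition~\ref{d:acSR} there is $\hat\xi_0\in\hat X_0$ with $(\xi_0,\hat\xi_0)\in R_X(\kappa)$; set $\varepsilon_0\defeq\kappa$. Inductively, suppose $(\xi_t,\hat\xi_t)\in R_X(\varepsilon_t)$ with $\varepsilon_t\ge\kappa$. Since $\xi_{t+1}\in r(\xi_t,\nu_t)$ we have $\nu_t\in U(\xi_t)$, so point~2) of Definition~\ref{d:acSR} yields $\hat\nu_t\in\hat U(\hat\xi_t)$ with $(\xi_t,\hat\xi_t,\nu_t,\hat\nu_t)\in R(\varepsilon_t)$ and some $\hat\xi_{t+1}\in\hat r(\hat\xi_t,\hat\nu_t)$ with $(\xi_{t+1},\hat\xi_{t+1})\in R_X\bigl(\kappa+\beta\varepsilon_t+\lambda\sd{}(\nu_t,\hat\nu_t)\bigr)$. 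Putting $\varepsilon_{t+1}\defeq\kappa+\beta\varepsilon_t+\lambda\sd{}(\nu_t,\hat\nu_t)$ keeps $\varepsilon_{t+1}\ge\kappa$, closes the induction, and produces a behavior $(\hat\xi,\hat\nu)$ of $\hat S$ satisfying~\eqref{e:acbehaviors}.

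It remains to estimate the affine recursion $\varepsilon_{t+1}=\kappa+\beta\varepsilon_t+\lambda d_t$ with $d_t\defeq\sd{}(\nu_t,\hat\nu_t)$ and $\varepsilon_0=\kappa$. Unrolling gives $\varepsilon_{t+1}=\beta^{t+1}\kappa+\kappa\sum_{j=0}^{t}\beta^{j}+\lambda\sum_{t'=0}^{t}\beta^{t-t'}d_{t'}$, whose first two terms are bounded by the constant $\kappa_\Delta\defeq\kappa+\kappa/(1-\beta)$. For the last term I would split $\beta^{t-t'}=(\sqrt\beta)^{t-t'}(\sqrt\beta)^{t-t'}$ and use $\sum_{k\ge0}(\sqrt\beta)^{k}=1/(1-\sqrt\beta)$ to obtain $\lambda\sum_{t'=0}^{t}\beta^{t-t'}d_{t'}\le\gamma_\Delta\max_{t'\in\intcc{0;t}}(\sqrt\beta)^{t-t'}d_{t'}$ with $\gamma_\Delta\defeq\lambda/(1-\sqrt\beta)$. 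Finally, taking $\mu_\Delta(c,s)\defeq(\sqrt\beta)^{s}c$ — which belongs to ${\mathcal{KLD}}$ because $\beta\in\intoo{0,1}$ (it is continuous and strictly increasing in $c$, strictly decreasing to $0$ in $s$, and satisfies $\mu_\Delta(c,0)=c$ and $\mu_\Delta(c,s+t)=\mu_\Delta(\mu_\Delta(c,s),t)$) — one reads off $\varepsilon_{t+1}\le\max_{t'\in\intcc{0;t}}\mu_\Delta(\gamma_\Delta d_{t'},t-t')+\kappa_\Delta$, as claimed. The one genuinely non-routine step is this last repackaging: the acSR definition naturally produces a \emph{geometrically weighted sum} of the past input mismatches, whereas~\eqref{e:acbehaviors} demands a \emph{geometrically weighted maximum}, and the $\sqrt\beta$-splitting — with the factor $1/(1-\sqrt\beta)$ absorbed into $\gamma_\Delta$ — is precisely the device that bridges the two; it is essential here that $\beta<1$ strictly, both for the splitting to converge and for $\mu_\Delta$ to lie in ${\mathcal L}$.
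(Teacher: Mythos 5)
Your proof is correct, and the first half (the inductive construction of $(\hat\xi,\hat\nu)$ yielding the recursion $\varepsilon_{t+1}=\kappa+\beta\varepsilon_t+\lambda\sd{}(\nu_t,\hat\nu_t)$, $\varepsilon_0=\kappa$) is exactly the paper's argument. Where you genuinely diverge is in converting that recursion into the max-plus $\mathcal{KLD}$ bound. The paper does this non-elementarily: it checks that $V(\varepsilon)=|\varepsilon|_{\B(\kappa/(1-\beta))}$ satisfies an ISS-Lyapunov-type implication for the scalar recursion and then invokes a discrete-time version of \cite[Lemma~15]{Gru02}, which delivers $\mu_\Delta(r,t)=(\beta')^tr$, $\gamma_\Delta=\lambda/(\beta'-\beta)$, $\kappa_\Delta=\kappa/(1-\beta)$ for an arbitrary $\beta'\in\intoo{\beta,1}$ --- the constants quoted in Corollary~\ref{c:gains}. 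You instead unroll the affine recursion and convert the geometric convolution $\sum_{t'}\beta^{t-t'}d_{t'}$ into a weighted maximum via the $\sqrt\beta$-splitting; this is a correct, self-contained, elementary replacement for the external lemma, and your identification of the sum-to-max conversion as the crux is exactly right. Your constants are of the same quality (and your $\kappa_\Delta=\kappa+\kappa/(1-\beta)$ can be sharpened to $\kappa/(1-\beta)$ by noting the constant part of the unrolled recursion is $\kappa\sum_{j=0}^{t+1}\beta^j$); splitting as $\beta^{t-t'}=(\beta/\beta')^{t-t'}(\beta')^{t-t'}$ instead of using $\sqrt\beta$ would recover an arbitrary decay rate $\beta'\in\intoo{\beta,1}$ and hence essentially the constants of Corollary~\ref{c:gains}. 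Your preliminary reduction to $\beta>0$ is both necessary for your choice of $\mu_\Delta$ and correctly justified by the monotonicity of $R(\cdot)$; the paper sidesteps this edge case automatically by working with $\beta'\in\intoo{\beta,1}$ from the outset.
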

\begin{proof}
First, we show by construction that for every behavior $(\xi,\nu)$ of
$S$ there exists a behavior $(\hat\xi,\hat\nu)$ of $\hat S$ such that $(\xi_{t},\hat\xi_{t},\nu_{t},\hat\nu_{t})\in R(\varepsilon_t)$ 
holds for all $t\in\N$ where $\varepsilon_t$ satisfies
\begin{IEEEeqnarray}{c'c}\label{e:scsys}
  \varepsilon_{t+1}=\kappa+\beta\varepsilon_t+\lambda\sd{}(\nu_t,\hat\nu_t),&
  \varepsilon_0=\kappa.
\end{IEEEeqnarray}
We define the sequences $\hat \xi:\N\to \hat X$ and
$\hat\nu:\N\to \hat U$ inductively. For the base case $t=0$, we choose
$\hat \xi_0\in
\hat X_0$ such that $(\xi_0,\hat \xi_0)\in R_X(\kappa)$ and
$\hat\nu_0\in \hat U$ such that $(\xi_0,\hat \xi_0,\nu_0,\hat\nu_0)$
satisfies 2.a) with $\varepsilon_0=\kappa$ and 2.b) of Definition~\ref{d:acSR}. Now suppose
$(\xi_{t'},\hat\xi_{t'},\nu_{t'},\hat\nu_{t'})$ satisfies 2.a) with
$\varepsilon_t$ satisfying~\eqref{e:scsys} and 2.b) of
Definition~\ref{d:acSR} for all $t'\in\intcc{0;t}$.
We choose $\hat\xi_{t+1}\in \hat r(\hat
\xi_t,\hat\nu_t)$ such that $( \xi_{t+1},\hat \xi_{t+1})\in
R_X(\varepsilon_{t+1})$ which in turn implies that we can fix $\hat
\nu_{t+1}\in \hat U$ such that $( \xi_{t+1},\hat
\xi_{t+1},\nu_{t+1},\hat \nu_{t+1})$ satisfies 2.a) with
$\varepsilon_{t+1}$ that satisfies \eqref{e:scsys} and 2.b) of
Definition~\ref{d:acSR}. It follows that $(\hat \xi,\hat \nu)$ is a
behavior of $\hat S$ and satisfies the claim.

In the remainder of the proof we use a discrete-time version of
\cite[Lemma~15]{Gru02} to show that here exist 
$\mu_\Delta\in{\mathcal{KLD}}$ and $\gamma_\Delta,\kappa_\Delta \in\R_{\ge 0}$ 
such that we have
\begin{IEEEeqnarray}{c}\label{e:h:acb}
\varepsilon_t
\le
\max_{t'\in\intcc{0;t}}\mu_\Delta(\gamma_\Delta \sd{}(\nu_{t'},\hat \nu_{t'}),t-t')+\kappa_\Delta.
\end{IEEEeqnarray}
We fix $\kappa_\Delta:=\kappa/(1-\beta)$,
$\gamma_\Delta:=\lambda/(\beta'-\beta)$ and
$g(c):=\beta'c$ for some $\beta'\in\intoo{\beta,1}$. Now it suffices to
verify that $V(\varepsilon):=|\varepsilon|_{\B(\kappa_\Delta)}$ satisfies 
$\gamma_\Delta|\sd{}(\nu_t,\hat \nu_t)|\le V(\varepsilon_{t})\implies
V(\varepsilon_{t+1})\le g(V(\varepsilon_t))$ holds.
Then it follows from~\cite[Lemma~15]{Gru02} that~\eqref{e:h:acb}
holds.
\end{proof}

Theorem~\ref{t:acbehaviors} exposes one of the key
features of an acIOSR. The membership \mbox{$(\xi_t,\hat \xi_t, \nu_t,\hat \nu_t)\in
R(\varepsilon_t)$} implies $O(\xi_t,\nu_t)\le \hat O(\hat
\xi_t,\hat \nu_t)+\gamma_O(\varepsilon_t)$. Hence, the  bound on the
output cost $O$ of $S$ in terms of the output cost $\hat O$ of $\hat
S$ depends on the
parameter $\varepsilon_t$ which is time-varying. In
comparison to the definition of aIOSR (see~\eqref{e:acosts}) this
parameter varies over time. We established with
Theorem~\ref{t:acbehaviors} a bound on $\varepsilon_t$
in terms of the difference (measured by $\lambda\sd{}$) of the input
histories $\sd{}(\nu_{t'},\hat \nu_{t'})$ with $t'\in\intcc{0;t}$. If we are
able to match a disturbance $\nu_t$ of $S$ closely (in terms of
$\sd{}$) by a disturbance $\hat \nu_t$ of $\hat S$, we know
that the output cost $\hat O$ of $\hat S$ provides a good estimate for the output
cost $O$ of $S$. Moreover, if after a certain $t'\in\N$ the
difference in the input behaviors is zero, i.e., $\sd{}(\nu_t,\hat
\nu_t)=0$ for all $t\ge t'$, then the bound on $\varepsilon_t$
approaches $\kappa_\Delta$ as $t\to\infty$. Here, we clearly exploit
the contraction parameter $\beta\in\intco{0,1}$ together with the
requirement 2.b) in the Definition~\ref{d:acSR} where the successor
states satisfy $(\xi_{t+1},\hat \xi_{t+1})\in
R(\kappa+\beta\varepsilon)$ whenever $(\xi_t,\hat \xi_t,\nu_t,\hat
\nu_t)\in R(\varepsilon)$ and $\sd{}(\nu_t,\hat\nu_t)=0$.

With the following corollary, we provide a bound on
$\varepsilon_t$ that depends solely on the behavior $(\xi,\nu)$ of
$S$ and not on the choice of a related behavior $(\hat \xi,\hat \nu)$
of $\hat S$.

\begin{corollary}\label{c:gains}
Given the premises of Theorem~\ref{t:acbehaviors}, let the function
$\Gamma:X\times
U\to\R_{\ge0}\cup\{\infty\}$ be given~by
\begin{IEEEeqnarray}{c}\label{e:deltaIC}
\Gamma(x,u):=
\sup\{\sd{}(u,\hat u)\mid \exists \varepsilon,\exists \hat x: (x,\hat x,u,\hat u)\in R(\varepsilon)\}.
\IEEEeqnarraynumspace
\end{IEEEeqnarray}
For any two behaviors $(\xi,\nu)$ and $(\hat \xi,\hat \nu)$ of $S$ and 
$\hat S$, respectively, that satisfy~\eqref{e:acbehaviors},
$\varepsilon_t$ in~\eqref{e:acbehaviors} is bounded by
\begin{IEEEeqnarray*}{c}
\varepsilon_{t+1}
\le
\max_{t'\in\intcc{0;t}}\mu_\Delta(\gamma_\Delta \Gamma(\xi_t,\nu_t),t-t')+\kappa_\Delta
\end{IEEEeqnarray*}
with $\kappa_\Delta=\kappa/(1-\beta)$,
$\gamma_\Delta=\lambda/(\beta'-\beta)$ and
$\mu_\Delta(r,t)=(\beta')^tr$ for any $\beta'\in\intoo{\beta,1}$.
\end{corollary}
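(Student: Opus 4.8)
The plan is to derive the stated bound directly from the conclusion of Theorem~\ref{t:acbehaviors} by replacing the quantity $\sd{}(\nu_{t'},\hat\nu_{t'})$, which depends on the chosen companion behavior of $\hat S$, with the behavior-intrinsic quantity $\Gamma(\xi_{t'},\nu_{t'})$. The key observation is simply that whenever $(\xi_{t'},\hat\xi_{t'},\nu_{t'},\hat\nu_{t'})\in R(\varepsilon_{t'})$ holds — which is exactly \eqref{e:acbehaviors} — the triple $(\xi_{t'},\hat\xi_{t'},\nu_{t'},\hat\nu_{t'})$ witnesses that $\sd{}(\nu_{t'},\hat\nu_{t'})$ is one of the values in the set defining the supremum in \eqref{e:deltaIC}; hence $\sd{}(\nu_{t'},\hat\nu_{t'})\le\Gamma(\xi_{t'},\nu_{t'})$ for every $t'\in\N$. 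This is the only genuinely new step, and it is essentially immediate from the definition of $\Gamma$.

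First I would invoke Theorem~\ref{t:acbehaviors}: for the given pair of behaviors satisfying \eqref{e:acbehaviors}, there exist $\mu_\Delta\in{\mathcal{KLD}}$ and $\gamma_\Delta,\kappa_\Delta\in\R_{\ge0}$ with
\begin{IEEEeqnarray*}{c}
\varepsilon_{t+1}\le\max_{t'\in\intcc{0;t}}\mu_\Delta(\gamma_\Delta\,\sd{}(\nu_{t'},\hat\nu_{t'}),t-t')+\kappa_\Delta.
\end{IEEEeqnarray*}
Next I would substitute $\sd{}(\nu_{t'},\hat\nu_{t'})\le\Gamma(\xi_{t'},\nu_{t'})$, using that $\gamma_\Delta\ge0$ and that $\mu_\Delta(\cdot,t-t')\in{\mathcal K}$ is monotonically increasing, to conclude
\begin{IEEEeqnarray*}{c}
\varepsilon_{t+1}\le\max_{t'\in\intcc{0;t}}\mu_\Delta(\gamma_\Delta\,\Gamma(\xi_{t'},\nu_{t'}),t-t')+\kappa_\Delta,
\end{IEEEeqnarray*}
which is the claimed inequality. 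The only remaining point is to certify the explicit forms $\kappa_\Delta=\kappa/(1-\beta)$, $\gamma_\Delta=\lambda/(\beta'-\beta)$ and $\mu_\Delta(r,t)=(\beta')^t r$: these are precisely the choices made in the proof of Theorem~\ref{t:acbehaviors} (where $\kappa_\Delta$, $\gamma_\Delta$ and $g(c)=\beta' c$ are set, and Gr\"une's lemma yields the geometric $\mathcal{KLD}$ bound $\mu_\Delta(r,t)=(\beta')^t r$), so one just records that the same constants work here, noting $\mu_\Delta\in{\mathcal{KLD}}$ since $(\beta')^{s+t}=(\beta')^s(\beta')^t$ and $\beta'\in\intoo{0,1}$.

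There is no real obstacle: the corollary is a bookkeeping consequence of Theorem~\ref{t:acbehaviors}. If anything, the mildly delicate point is handling the possibility $\Gamma(\xi_{t'},\nu_{t'})=\infty$ — but in that case the asserted bound holds trivially (the right-hand side is $+\infty$), so the inequality is vacuously true, and otherwise the monotonicity argument above applies verbatim. I would also remark that the supremum in \eqref{e:deltaIC} is taken over a nonempty set for every $(\xi_{t'},\nu_{t'})$ arising along a behavior, precisely because \eqref{e:acbehaviors} exhibits a witnessing $\varepsilon_{t'}$ and $\hat\xi_{t'}$, so $\Gamma(\xi_{t'},\nu_{t'})$ is well defined (possibly $+\infty$) at every such point.
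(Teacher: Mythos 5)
Your proof is correct and matches the argument the paper intends (the corollary is stated without proof as an immediate consequence of Theorem~\ref{t:acbehaviors}): the membership $(\xi_{t'},\hat\xi_{t'},\nu_{t'},\hat\nu_{t'})\in R(\varepsilon_{t'})$ from~\eqref{e:acbehaviors} witnesses $\sd{}(\nu_{t'},\hat\nu_{t'})\le\Gamma(\xi_{t'},\nu_{t'})$, and monotonicity of $\mu_\Delta(\cdot,s)$ together with the explicit constants $\kappa_\Delta$, $\gamma_\Delta$, $g(c)=\beta'c$ fixed in the theorem's proof gives the claim. Your remarks on the $\Gamma=\infty$ case and on the nonemptiness of the set defining the supremum are appropriate and do not change the argument.
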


We are now ready to state the main result of this section where we
show that pIODS is preserved under acIOSR. As in the
in case of SR and aSR, the proof strategy is to establish a pIODS
inequality for $S$ in terms of the pIODS inequality given for $\hat S$. For
acIODS, the estimates of the cost functions $I$ and $O$ in terms of
the cost functions $\hat I$ and $\hat O$ depend on the
time varying parameter $\varepsilon_t$. That is
reflected in the following theorem, by a modification of the input
costs $I$ of $S$ to $I'=\max\{I,\Gamma\}$. Here, $\Gamma$ is the
function that we used in  Corollary~\ref{c:gains} to
established a bound on $\varepsilon_t$. It represents the mismatch of
the inputs $U$ and $\hat U$ measured in terms of $\sd{}$.
\begin{theorem}
\label{t:acsim}
Let $(S,I,O)$ and $(\hat S,\hat I,\hat O)$ be systems with costs
functions and
suppose there exists a $(\kappa,\beta,\lambda)$-acIOSR
$R(\varepsilon)$ from $(S,I,O)$ to $(\hat S,\hat I,\hat O)$ with distance function $\sd{}$.
Then, $(\hat S,\hat I,\hat O)$ being pIODS implies that
$(S,I',O)$ is pIODS, with $I'(x,u)\defeq \max\{I(x,u),\Gamma(x,u)\}$
and $\Gamma$
given by~\eqref{e:deltaIC}.
\end{theorem}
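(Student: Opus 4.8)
The plan is to follow the template of the proofs of Theorem~\ref{t:sim} and Theorem~\ref{t:asim}, but now keeping track of a \emph{time-varying} relaxation of the cost inequalities. Fix an arbitrary behavior $(\xi,\nu)$ of $S$. By Theorem~\ref{t:acbehaviors} there is a behavior $(\hat\xi,\hat\nu)$ of $\hat S$ with $(\xi_t,\hat\xi_t,\nu_t,\hat\nu_t)\in R(\varepsilon_t)$ for all $t\in\N$, and by Corollary~\ref{c:gains} the parameter $\varepsilon_t$ is bounded by a $\mathcal{KL}$-type expression in the history $\Gamma(\xi_{t'},\nu_{t'})$, $t'\in\intcc{0;t}$, plus the constant $\kappa_\Delta$, where the $\mathcal{KLD}$ function involved is the exponential $\mu_\Delta(r,s)=(\beta')^sr$. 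Since $(\hat S,\hat I,\hat O)$ is pIODS, it is $(\hat\gamma,\hat\mu,\hat\rho)$-pIODS for some $\hat\gamma\in{\mathcal K}$, $\hat\mu\in{\mathcal{KLD}}$ and $\hat\rho\in\R_{\ge0}$. The goal is to turn this estimate for $\hat S$ into one of the form~\eqref{e:iods} for $(S,I',O)$.

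First I set $\varepsilon'_t\defeq\max\{\varepsilon_t,\sd{}(\nu_t,\hat\nu_t)\}$, which is exactly the argument that appears on the right-hand sides of~\eqref{e:accosts}. The crucial observation is that $\sd{}(\nu_t,\hat\nu_t)\le\Gamma(\xi_t,\nu_t)\le I'(\xi_t,\nu_t)$, by~\eqref{e:deltaIC} and the definition $I'=\max\{I,\Gamma\}$; together with the bound on $\varepsilon_t$ from Corollary~\ref{c:gains} this gives a bound of the form $\varepsilon'_t\le\max_{t'\in\intcc{0;t}}\beta_\varepsilon(I'(\xi_{t'},\nu_{t'}),t-t')+\kappa_\Delta$ for a suitable $\beta_\varepsilon\in{\mathcal{KL}}$ (it absorbs the index shift and the $\sd{}$-term into the gain and into $\beta_\varepsilon(\cdot,0)$). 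I then feed this into~\eqref{e:accosts}, which yields $O(\xi_t,\nu_t)\le\hat O(\hat\xi_t,\hat\nu_t)+\gamma_O(\varepsilon'_t)$ and $\hat I(\hat\xi_{t'},\hat\nu_{t'})\le I'(\xi_{t'},\nu_{t'})+\gamma_I(\varepsilon'_{t'})$, and substitute the pIODS estimate $\hat O(\hat\xi_t,\hat\nu_t)\le\max_{t'\in\intcc{0;t}}\hat\mu(\hat\gamma(\hat I(\hat\xi_{t'},\hat\nu_{t'})),t-t')+\hat\rho$.

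What is left is bookkeeping with comparison functions. Using $\alpha(a+b)\le\alpha(2a)+\alpha(2b)$ for $\alpha\in{\mathcal K}$, the monotonicity of $\gamma_O,\gamma_I,\hat\gamma$ and of $\hat\mu(\cdot,s)$ to commute them past the maxima and to split sums inside them, the elementary inequality $\max_{t'}a_{t'}+\max_{t'}b_{t'}\le 2\max_{t'}(a_{t'}+b_{t'})$ for nonnegative terms, and the fact that a sum of finitely many $\mathcal{KL}$ functions is again $\mathcal{KL}$, all the genuinely constant contributions collapse into a single $\rho'\in\R_{\ge0}$, while every $I'$-dependent contribution is either a single-lag term $\beta(I'(\xi_{t'},\nu_{t'}),t-t')$ or a nested-lag term of the form $\hat\mu(\beta(I'(\xi_{t''},\nu_{t''}),t'-t''),t-t')$ with $t''\le t'\le t$. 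The main obstacle is to merge the two nested lags $t'-t''$ and $t-t'$ into the single lag $t-t''$; this is exactly a composition estimate for $\mathcal{KL}$ functions, namely $\beta_1(\beta_2(r,s_1),s_2)\le\beta_3(r,s_1+s_2)$ for some $\beta_3\in{\mathcal{KL}}$ (split the total lag $s_1+s_2$ and exploit whichever of $s_1$, $s_2$ is at least half of it), which I expect to be available in the appendix alongside Lemma~\ref{l:kldbound}. After this reduction one arrives at $O(\xi_t,\nu_t)\le\max_{t'\in\intcc{0;t}}\beta(I'(\xi_{t'},\nu_{t'}),t-t')+\rho'$ for some $\beta\in{\mathcal{KL}}$. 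Finally, to match Definition~\ref{d:iods} verbatim, I rewrite $\beta$ as $\beta(r,s)\le\mu(\gamma(r),s)$ for some $\mu\in{\mathcal{KLD}}$ and $\gamma\in{\mathcal K}$ (the discrete-time counterpart of the ${\mathcal{KL}}$--${\mathcal{KLD}}$ conversion in~\cite{Gru02}), so that $O(\xi_t,\nu_t)\le\max_{t'\in\intcc{0;t}}\mu(\gamma(I'(\xi_{t'},\nu_{t'})),t-t')+\rho'$. As $(\xi,\nu)$ was an arbitrary behavior of $S$, it follows that $(S,I',O)$ is $(\gamma,\mu,\rho')$-pIODS, and in particular pIODS.
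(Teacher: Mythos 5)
Your proposal is correct in substance and follows the same skeleton as the paper's proof — pair the behaviors via Theorem~\ref{t:acbehaviors}, bound $\varepsilon_t$ via Corollary~\ref{c:gains}, inject the pIODS estimate for $\hat S$ through the acIOSR cost inequalities, and then do comparison-function bookkeeping — but the bookkeeping itself goes a genuinely different way. The paper evaluates the cost relaxation at the single parameter $\varepsilon_t$ and pulls it out as one additive term $\gamma_\varepsilon(\varepsilon_t)$ via Lemma~\ref{l:kldbound}; after substituting the Corollary~\ref{c:gains} bound, only single-lag terms remain, which are handled by Lemma~\ref{l:transformation} (commuting the gain past $\mu_\Delta$) and Lemma~\ref{l:maxbound} (merging the two maxima). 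You instead keep the per-time relaxation $\gamma_I(\varepsilon'_{t'})$ with $\varepsilon'_{t'}=\max\{\varepsilon_{t'},\sd{}(\nu_{t'},\hat\nu_{t'})\}$ inside the max over $t'$, which is more faithful to the way \eqref{e:accosts} is actually stated; the price is the nested-lag terms $\hat\mu(\cdot\,,t-t')$ applied to $\beta(\cdot\,,t'-t'')$, and hence the need for the composition estimate $\beta_1(\beta_2(r,s_1),s_2)\le\beta_3(r,s_1+s_2)$. Be aware that this estimate is \emph{not} in the paper's appendix, contrary to your expectation — only Lemmas~\ref{l:kldbound}, \ref{l:transformation} and \ref{l:maxbound} are — so you must supply it yourself; your parenthetical sketch (whichever of $s_1,s_2$ is at least $(s_1+s_2)/2$ forces decay of one factor) is the standard argument and does close the gap. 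Your explicit observation that $\sd{}(\nu_t,\hat\nu_t)\le\Gamma(\xi_t,\nu_t)\le I'(\xi_t,\nu_t)$, which is what lets the input mismatch be absorbed into the modified cost $I'$, is exactly the role $\Gamma$ plays in the paper's argument as well, and your final $\mathcal{KL}$-to-$\mathcal{KLD}$ conversion is legitimately delegated to the machinery of~\cite{Gru02} that the paper itself relies on elsewhere.
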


In the proof of Theorem~\ref{t:acsim}, we use two lemmas,
Lemma~\ref{l:transformation} and Lemma~\ref{l:maxbound}, which are
given in the appendix.

\begin{proof}[Proof of Theorem~\ref{t:acsim}]
Let $(\xi,\nu)$ and $(\hat \xi,\hat \nu)$ be a behavior of $S$ of
$\hat S$, respectively, that satisfy~\eqref{e:acbehaviors}.
Using the fact that $\hat S$ is 
$(\hat\gamma,\hat\mu,\hat\rho)$-pIODS, \eqref{e:accosts},
and
Lemma~\ref{l:kldbound} we obtain
\begin{IEEEeqnarray*}{rCl}
  O(\xi_{t},\nu_t) 
  &\le&
  \hat O(\hat\xi_{t},\hat\nu_t)+\gamma_O(\varepsilon_t)\\
  &\le&
  \max_{t'\in \intcc{0;t}} \hat \mu(\hat
  \gamma(I(\xi_{t'},\nu_{t'})+\gamma_I(\varepsilon_t)),t-t')+ \gamma_O(\varepsilon_t)+\hat \rho\\
  &\le&
  \max_{t'\in \intcc{0;t}} \hat \mu(\hat\gamma'(I(\xi_{t'},\nu_{t'})),t-t')+\gamma_\varepsilon(\varepsilon_t)+\hat\rho
\end{IEEEeqnarray*}
with $\hat\gamma'(c)=2\hat \gamma(2c)$ and
$\gamma_\varepsilon(c)=\hat\mu(\hat\gamma'(\gamma_I(c)),0)+\gamma_O(c)$.
We use the bound on $\varepsilon_t$ from Corollary~\ref{c:gains}
and obtain
\begin{IEEEeqnarray}{c}\label{e:p:t:acsim:1}
\begin{IEEEeqnarraybox}[][c]{rCl}
  O(\xi_{t},\nu_t)
  &\le&
  \max_{t'\in \intcc{0;t}} 
  \hat \mu(\hat\gamma'(I(\xi_{t'},\nu_{t'})),t-t')\\
  &&+\gamma'_\varepsilon(\max_{t'\in\intcc{0;t}}\mu_\Delta(\gamma'_\Delta
  \Gamma(\xi_{t'},\nu_{t'}),t-t'))+\gamma'_\varepsilon(\kappa_\Delta)+\hat \rho
\end{IEEEeqnarraybox}
\end{IEEEeqnarray}
for $\gamma'_\Delta:=\max\{\gamma_\Delta,1\}$ and
  $\gamma'_\varepsilon(c):=\gamma_\varepsilon(2c)$.
We use Lemma~\ref{l:transformation} to choose $\mu'_\Delta\in
{\mathcal{KLD}}$
such that
$\gamma'_\varepsilon(\mu_\Delta(\gamma'_\Delta c,t)) =
\mu'_\Delta(\gamma'_\varepsilon(\gamma'_\Delta c),t)$.
Now we use Lemma~\ref{l:maxbound} to choose $\mu\in {\mathcal{KLD}}$ 
such that
\begin{IEEEeqnarray*}{c}
\max_{t'\in\intcc{0;t}}\hat \mu(c,t')
+\max_{t'\in\intcc{0;t}}\mu'_\Delta(c,t')
\le
\max_{t'\in\intcc{0;t}}\mu(2c,t')
\end{IEEEeqnarray*}
holds.
Then, by defining
$\gamma(c)\defeq$ $2\max\{\hat \gamma'(c),\gamma'_\varepsilon(\gamma'_\Delta c)\}$
the rhs of  \eqref{e:p:t:acsim:1} is bounded by
\begin{IEEEeqnarray*}{c}
  O(\xi_{t},\nu_t)
  \le
  \max_{t'\in\intcc{0;t}} \mu( \gamma( \max\{I( \xi_{t'},
  \nu_{t'}),\Gamma(\xi_{t'},\nu_{t'})\}),t-t')+\rho.
\end{IEEEeqnarray*}
with 
$\rho:=\gamma'_\varepsilon(\kappa_\Delta)+\hat \rho$.
\end{proof}

If the inequality $\hat I\le I$ holds, we can provide an pIODS type
inequality for $S$ that can be easily described in terms of the parameters
of the pIODS inequality of $\hat S$.

\begin{corollary}\label{c:acsim}
Given the premises of Theorem~\ref{t:acsim}, suppose $\gamma_O$
satisfies $\gamma_O(r+r')\le \gamma_O(r)+\gamma_O(r')$ and
that $\hat I(\hat x,\hat u)\le I(x,u)$ holds for all $(x,\hat x, u,\hat u)\in
R(\varepsilon)$ and $(\hat S,\hat I,\hat O)$ is $(\hat \gamma,\hat
\mu,\hat \rho)$-pIODS,  then every behavior $(\xi,\nu)$
of $S$ satisfies
\begin{IEEEeqnarray}{c}\label{e:cIODS} 
\begin{IEEEeqnarraybox}[][c]{rCl}\label{e:cIODS} 
  O(\xi_t,\nu_t)
  &\le&
  \max_{t'\in \intcc{0;t}}
  \hat \mu(\hat
  \gamma(I(\xi_{t'},\nu_{t'})),t-t')\:+ \\
  &&\max_{t'\in \intcc{0;t}}
  \gamma_O\big(\mu_\Delta(\gamma'_\Delta\Gamma(\xi_{t'},
  \nu_{t'}),t-t'))+\gamma_O(\kappa_\Delta\big)+\hat\rho
\end{IEEEeqnarraybox}
\end{IEEEeqnarray}
with 
$\gamma'_\Delta(r)=\max\{r,\gamma_\Delta(r)\}$, $\mu_\Delta$ and
$\kappa_\Delta$ from Corollary~\ref{c:gains}.
\end{corollary}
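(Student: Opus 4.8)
The plan is to mirror the argument used for Theorem~\ref{t:acsim}, but instead of absorbing everything into a single $\mathcal{KLD}$ term (which there forces the appeal to Lemma~\ref{l:transformation} and Lemma~\ref{l:maxbound}), to keep the contribution coming from the pIODS inequality of $\hat S$ and the contribution coming from the mismatch term $\Gamma$ as two separate summands. The two extra hypotheses are exactly what makes this clean: $\hat I\le I$ removes the $\gamma_I$ term entirely, and subadditivity of $\gamma_O$ lets one split the constant offset out of $\gamma_O$. So first I would fix an arbitrary behavior $(\xi,\nu)$ of $S$, invoke Theorem~\ref{t:acbehaviors} to get a behavior $(\hat\xi,\hat\nu)$ of $\hat S$ with $(\xi_t,\hat\xi_t,\nu_t,\hat\nu_t)\in R(\varepsilon_t)$ for all $t$, where $\varepsilon_t$ is the solution of the recursion~\eqref{e:scsys}, and record the bound $\varepsilon_t\le\max_{t'\in\intcc{0;t}}\mu_\Delta(\gamma_\Delta\Gamma(\xi_{t'},\nu_{t'}),t-t')+\kappa_\Delta$ supplied by Corollary~\ref{c:gains} (using $\sd{}(\nu_{t'},\hat\nu_{t'})\le\Gamma(\xi_{t'},\nu_{t'})$, which is immediate from~\eqref{e:deltaIC}).

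Next I would assemble the estimate in two pieces. For the first piece, the output inequality in~\eqref{e:accosts} with $\varepsilon_t':=\max\{\varepsilon_t,\sd{}(\nu_t,\hat\nu_t)\}$ gives $O(\xi_t,\nu_t)\le\hat O(\hat\xi_t,\hat\nu_t)+\gamma_O(\varepsilon_t')$; since $\hat S$ is $(\hat\gamma,\hat\mu,\hat\rho)$-pIODS, since $(\hat\xi,\hat\nu)$ is a behavior of $\hat S$, and since $\hat I(\hat\xi_{t'},\hat\nu_{t'})\le I(\xi_{t'},\nu_{t'})$ together with monotonicity of $\hat\gamma$ and $\hat\mu$, the term $\hat O(\hat\xi_t,\hat\nu_t)$ is bounded by $\max_{t'\in\intcc{0;t}}\hat\mu(\hat\gamma(I(\xi_{t'},\nu_{t'})),t-t')+\hat\rho$, which is precisely the first summand and the $\hat\rho$ of~\eqref{e:cIODS}. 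For the second piece I would bound $\varepsilon_t'$: both $\varepsilon_t$ (by Corollary~\ref{c:gains}) and $\sd{}(\nu_t,\hat\nu_t)\le\Gamma(\xi_t,\nu_t)=\mu_\Delta(\Gamma(\xi_t,\nu_t),0)$ are bounded by $\max_{t'\in\intcc{0;t}}\mu_\Delta(\gamma'_\Delta\Gamma(\xi_{t'},\nu_{t'}),t-t')+\kappa_\Delta$, since $\gamma'_\Delta(r)=\max\{r,\gamma_\Delta(r)\}$ dominates both the identity and $\gamma_\Delta$ and $\mu_\Delta(\cdot,0)$ is the identity (take $t'=t$ for the second bound). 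Applying $\gamma_O$ (monotone), then subadditivity to peel off $\gamma_O(\kappa_\Delta)$, then monotonicity again to pull $\gamma_O$ inside the finite maximum, yields $\gamma_O(\varepsilon_t')\le\max_{t'\in\intcc{0;t}}\gamma_O(\mu_\Delta(\gamma'_\Delta\Gamma(\xi_{t'},\nu_{t'}),t-t'))+\gamma_O(\kappa_\Delta)$. Adding the two pieces gives~\eqref{e:cIODS}, and since $(\xi,\nu)$ was arbitrary the corollary follows.

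The only real obstacle is the bookkeeping around the parameter $\varepsilon_t'=\max\{\varepsilon_t,\sd{}(\nu_t,\hat\nu_t)\}$ that actually appears in~\eqref{e:accosts}: one has to notice that $\gamma'_\Delta$ is tailored so that the single expression $\max_{t'\in\intcc{0;t}}\mu_\Delta(\gamma'_\Delta\Gamma(\xi_{t'},\nu_{t'}),t-t')+\kappa_\Delta$ simultaneously dominates the Corollary~\ref{c:gains} bound on $\varepsilon_t$ (which carries the factor $\gamma_\Delta$) and the raw mismatch $\sd{}(\nu_t,\hat\nu_t)\le\Gamma(\xi_t,\nu_t)$ (which needs $\gamma'_\Delta(r)\ge r$), and then to be careful to use monotonicity of $\gamma_O\in{\mathcal K}$ when moving it through the maximum and subadditivity of $\gamma_O$ when separating the additive constant $\kappa_\Delta$ — these are distinct properties, and only the latter is a genuinely new hypothesis over Theorem~\ref{t:acsim}.
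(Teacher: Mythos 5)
Your proof is correct and follows the route the paper clearly intends (the corollary is stated without proof, but Remark~\ref{r:acsim} points to precisely this argument): rerun the estimate from the proof of Theorem~\ref{t:acsim}, keeping the $\hat S$-pIODS term and the $\Gamma$-mismatch term as two separate maxima, with $\hat I\le\nobreak I$ eliminating the $\gamma_I$ term (so no appeal to Lemma~\ref{l:kldbound} and no doubling of $\hat\gamma$) and subadditivity of $\gamma_O$ peeling off the constant $\gamma_O(\kappa_\Delta)$, while $\gamma'_\Delta(r)=\max\{r,\gamma_\Delta r\}$ makes one maximum dominate both $\varepsilon_t$ and $\sd{}(\nu_t,\hat\nu_t)\le\Gamma(\xi_t,\nu_t)=\mu_\Delta(\Gamma(\xi_t,\nu_t),0)$. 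The only point worth flagging is that you (like the paper's own proof of Theorem~\ref{t:acsim}) use the bound $\varepsilon_t\le\max_{t'\in\intcc{0;t}}\mu_\Delta(\gamma_\Delta\Gamma(\xi_{t'},\nu_{t'}),t-t')+\kappa_\Delta$, which is valid but is obtained from the geometric-sum derivation behind Corollary~\ref{c:gains} rather than by a literal index shift of its displayed inequality (which would put the decay at $t-1-t'$).
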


Even though in Theorem~\ref{t:acsim}, contrary to the
results in Theorem~\ref{t:sim} and Theorem~\ref{t:asim}, we do not
state the parameters $(\mu,\gamma,\rho)$ of the pIODS inequality for
$S$ in dependency of the parameters $(\hat \mu,\hat \gamma,\hat
\rho)$, inequality~\eqref{e:cIODS} provides us with some insights.
The first term in the inequality~\eqref{e:cIODS} follows from the fact that we were
able to successfully verify pIODS for $\hat S$. The
second term in~\eqref{e:cIODS} accounts for the ``mismatch'' between
the inputs $U$ and $\hat U$ . The last two terms,
i.e., the constant offset $\gamma_O(\kappa_\Delta)+\hat\rho$, is a result of the lower bound on the
parameter $\varepsilon\ge \kappa$ and $\hat \rho$ from the pIODS
inequality of $\hat S$.

Let us conclude this section with an application of 
Theorem~\ref{t:acsim} to Example~\ref{ex:1}. 
\addtocounter{example}{-1}
\begin{example}[continued]
Recall that, every
behavior $(\hat \xi,\hat \nu)$ of $\hat S$ satisfies 
$\hat O(\hat \xi_t,\hat \nu_t)= 0$
for all $t\in\N$. Therefore $(\hat S,\hat I,\hat O)$ is $(\hat \gamma,\hat \mu)$-IODS
with $\hat \gamma=\hat \mu=0$. We obtain $\Gamma$ for this example by
$\Gamma(x,u)=|u|$ and the input cost $I'$ coincides with
$I=\max\{I,\Gamma\}=I'$. In addition, the inequality $\hat I\le I$
holds and we can apply Corollary~\ref{c:acsim} to obtain the pIODS
inequality for every behavior $(\xi,\nu)$ of $S$ as
\begin{IEEEeqnarray}{c}\label{e:ex1:iods}
|\xi_t|_D\le \max_{t'\in\intcc{0;t}}\mu_\Delta(\gamma_\Delta
|\nu_{t'}|,t-t')+\kappa_\Delta
\end{IEEEeqnarray}
with $\kappa_\Delta=0.2/0.4$,
$\gamma_\Delta=1/(\beta'-0.6)$ and
$\mu_\Delta(r,t)=(\beta')^tr$ for any $\beta'\in\intoo{0.6,1}$.

Let us shortly describe how this inequality shows the robustness of
the invariance of $D$ with respect to $S$ against the disturbances
$\nu$. First, let us ignore the constant $\kappa_\Delta$ on the
right-hand-side of~\eqref{e:ex1:iods}. Then, the distance between the state
$\xi_t$ and $D$  is proportional to the norm of the disturbance
$\nu_t$. Moreover, the effect of a disturbance at some time $t'$
disappears over time since $\beta^{t-t'}\gamma_\Delta|\nu_{t'}|$
approaches zeros as $t\to\infty$. The constant $\kappa_\Delta$ appears
in~\eqref{e:ex1:iods} because we established the inequality through
the use of the symbolic model $\hat S$ and represents the effect of
quantization.
\end{example}

\section{Controller Design} 
\label{s:asr}

So far we interpreted the set of inputs $U$ of a 
system $S$ as disturbance inputs over which we had no control. However, in
this section, we assume that the input set $U$ is composed of
a set of {\it control} inputs $U^c$ and a set of {\it disturbance}
inputs $U^d$, i.e., $U=U^c\times U^d$. Moreover, we introduce a
controller that is allowed to modify the system behavior by imposing 
\emph{restrictions} on the control inputs~$U^c$.
In our framework, a \emph{controller for $S$} consists
of a system $S_C$ and a relation $R_{C}$. The \emph{controlled 
system} $S_C\times_{R_C} S$ is given by the composition of
$S_C$ with $S$ where $R_{C}$ is used to restrict the control inputs $U^c$
depending on the current state of $S_C$ and $S$.

In~\cite{TCRM14}, a synthesis approach has been developed to construct
a controller $(\hat S_C,\hat R_{C})$ rendering a \emph{finite} system
$\hat S$ IODS, i.e., the composed system $\hat S_C\times_{\hat
R_{C}}\hat S$ is IODS\footnote{Technically, the controller
in~\cite{TCRM14} is
defined in a slightly different manner from $(\hat S_C,\hat R_{C})$. However, it is straightforward to obtain a controller $(\hat
S_C,\hat R_{C})$ from the controller given in~\cite{TCRM14}.}.
In order to apply those results to a (possibly infinite) CPS $S$ we
first compute a finite symbolic model  $\hat S$ of $S$ and then
provide a procedure to \emph{transfer} (or \emph{refine})  a
controller  $(\hat S_C, \hat R_{C})$ that is designed
for $\hat S$ to a controller $(S_C,R_{C})$ for $S$.
This brings us to the main question answered in this section:

\vspace{1ex}
  \emph{Given $(S,I,O)$, what are the conditions that a symbolic model
  $(\hat S,\hat I,\hat O)$ of $(S,I,O)$ needs to satisfy so that the existence of a controller $(\hat
  S_C,\hat R_{C})$ for
  $\hat S$ rendering  $\hat S_C\times_{\hat R_{C}}\hat S$ pIODS, implies the existence of a controller
  $(S_C,R_{C})$ for $S$ rendering $S_C\times_{R_{C}} S$ pIODS?} 
\vspace{1ex}

A well-known approach for
controller refinement in connection with symbolic models is based on 
\emph{alternating simulation relations (ASR)}, see~\cite{AHKM98} and \cite[Chapter~4.3]{Tab09}.
In this section, we extend this approach to approximate contractive
alternating input-output SR (acAIOSR). An intuitive version of the main result proved in this section is:

\vspace{1ex}
  \emph{Consider two systems $(S,I,O)$ and $(\hat S,\hat I,\hat O)$, and let $R$ be an
acAIOSR from $(\hat S,\hat I,\hat O)$ to $(S,I,O)$. Suppose there exists a 
controller $(\hat S_C,\hat R_{C})$ for $\hat S$ such that $(\hat S_C\times_{\hat R_{C}}\hat S,\hat I,\hat O)$ is pIODS. Then there exist a
controller $(S_C,R_{C})$ for $S$ such
that $(S_C\times_{R_{C}} S,I,O)$ is pIODS.} 
\vspace{1ex}

We provide a precise formulation of this statement in
Theorem~\ref{t:main}, after we  formalize the notions of
acAIOSR, controller, and composition of a system with a controller.
Moreover, we explain how $(S_C,R_{C})$ can be constructed from $(\hat
S_C, \hat R_{C})$.

\subsection{Alternating simulation relations}
\label{ss:ASR}

In the following definition of an ASR we use a refined notion of input sets associated to states given by: 
$$U^c(x)\defeq\{u^c\in U^c\mid \forall u^d\in U^d:
r(x,u^c,u^d)\neq\emptyset\}.$$ 

\begin{definition}\label{d:acASR}
Let $S$ and $\hat S$ be two systems,
let $\kappa,\lambda\in\R_{\ge0}$ and $\beta\in\intco{0,1}$ be some
parameters and consider the map $\sd{}:\hat U\times U\to\R_{\ge0}$.
We call a parameterized (by~$\varepsilon\in\intco{\kappa,\infty}$) relation
$R(\varepsilon)\subseteq \hat X\times X\times \hat U\times U$ a \emph{$\kappa$-approximate
$(\beta$,$\lambda)$-contractive alternating simulation relation
($(\kappa,\beta,\lambda)$-acASR) from $\hat S$ to $S$ with 
distance function $\sd{}$} if
$R(\varepsilon)\subseteq R(\varepsilon')$ holds for all
$\varepsilon\le \varepsilon'$ and we have for all~$\varepsilon\in\intco{\kappa,\infty}$ 
\begin{enumerate}
  \item $\forall \hat x_{0}\in \hat X_{0},\exists x_{0}\in
  X_{0}:(\hat x_{0}, x_{0})\in R_X(\kappa)$;
  \item $\forall (x,\hat x)\in R_X(\varepsilon),\forall \hat u^c\in
    \hat U^c(\hat x),\exists u^c\in U^c(x),$
  \begin{enumerate}
    \item $\forall u^d\in U^d,\exists \hat u^d\in \hat U^d:$
    \begin{itemize}
    \item $(\hat x,x,\hat u,u)\in R(\varepsilon)$;
    \item $\forall x'\in r(x,u),\exists \hat x'\in \hat r(\hat x,\hat
    u):$\\[1ex]
     $(\hat x',x')\in R_X(\kappa+\beta \varepsilon+\lambda\sd{}(\hat u,u))$;
    \end{itemize}
  \end{enumerate}
   with $u\defeq (u^c,u^d)$, $\hat u\defeq (\hat u^c, \hat u^d)$.
\end{enumerate}
Let $(S,I,O)$ and $(\hat S,\hat I,\hat O)$ be two systems with cost
functions.
We call a $(\kappa,\beta,\lambda)$-acASR $R(\varepsilon)$ from $\hat
S$ to $S$ with distance function $\sd{}$ a \emph{$\kappa$-approximate $(\beta,\lambda)$-contractive
alternating input-output
SR ($(\kappa,\beta,\lambda)$-acAIOSR) from $(\hat S,\hat I,\hat O)$ to $(S,I,O)$ with
 distance function $\sd{}$} if
there exist $\gamma_O,\gamma_I\in{\mathcal K}$ such that
\begin{IEEEeqnarray}{c}\label{e:costASR}
\begin{IEEEeqnarraybox}[][c]{rCl}
    \hat I(\hat x, \hat u) 
    &\le& 
    I(x,u)+
    \gamma_I(\varepsilon')\\
    O(x,u) 
    &\le&
    \hat O(\hat x,\hat u)+
    \gamma_O(\varepsilon')
\end{IEEEeqnarraybox}
\end{IEEEeqnarray}
with $\varepsilon':=\max\{\varepsilon,\sd{}(\hat u,u)\}$
holds for all $(\hat x,x,\hat u,u)\in R(\varepsilon)$.

We call a relation $R(\varepsilon)$ \emph{acASR (acAIOSR)} if there exists
 $\beta\in\intco{0,1}$, $\kappa,\lambda\in\R_{\ge0}$ such
that $R(\varepsilon)$ is a $(\kappa,\beta,\lambda)$-acASR (acAIOSR) from $\hat S$
to~$S$ ($(\hat S,\hat I,\hat O)$ to $(S,I,O)$).
\end{definition}

We illustrate acAIOSR using an example from the literature.

\begin{example}[DC-DC boost converter]\label{ex:2}
We consider a popular
example from the literature, the boost DC-DC converter, see for
example~\cite{GPT10,MABBPWBCGFJKMPRR10}. The dynamics of the boost converter is given by a
two-dimensional switched
linear system $\dot \xi(t)=\bar A_u \xi(t)+\bar B$ with $\bar A_u\in\R^{2\times 2}$,
$\bar B\in\R^2$ and $u\in\{1,2\}$. In
\cite{GPT10} a symbolic model $\hat S$ of the sampled dynamics of the boost
converter $S$ is used to compute a controller rendering the set $D=\intcc{1.3,\,1.7}\times\intcc{5.7,\,5.8}$
positively invariant. Similarly to the approach in this paper, a
symbolic model $\hat S$ together with an approximate ASR $\hat R$ is first computed.
In the second step, a controller $(\hat S_C,\hat R_{C})$ for $\hat S$ is
computed to render $D$ positively invariant with respect to the
symbolic model $\hat S_C\times_{\hat R_{C}} \hat S$. Afterwards, a
controller for $S$ is obtained by refining the controller $(\hat
S_C,\hat R_{C})$.

Note, as the controller refinement in \cite{GPT10} is based on an
$\varepsilon$-ap\-proxi\-mate ASR with constant $\varepsilon\in\R_{\ge0}$, a disturbance $w\in\R^2$ on the system dynamics 
$\dot \xi(t)=\bar A_u \xi(t)+\bar B+w$ might lead to a state $\xi(\tau)$ such
that the composed system is blocking. Therefore, the resulting controller is prone to fail in
the presence of disturbances. Contrary to that, we exploit the
contractivity of the matrices $\bar A_u$ and construct a robust
controller using the introduced notion of acAIOSR.

We refer the reader to \cite{MABBPWBCGFJKMPRR10} for a detailed exposition of
the boost converter. In this example, we simply use the same
parameters as in~\cite{GPT10}, and obtain the sampled dynamics of
the boost converter as
$\xi_{t+1}= A_{\nu_t}\xi_t+ B_{\nu_t}+\omega_t$
with the system matrices given by
\begin{IEEEeqnarray*}{rCl,rCl}
   A_1&=&\begin{bmatrix} 0.9917 & 0 \\ 0 & 0.9964 \end{bmatrix}, &
   B_1&=&\begin{bmatrix} 0.1660 \\ 0  \end{bmatrix},\\
   A_2&=&\begin{bmatrix} 0.9903 & -0.0330 \\ 0.0354 & \phantom{-}0.9959 \end{bmatrix}, &
   B_2&=&\begin{bmatrix} 0.1659 \\ 0.0030  \end{bmatrix}.
\end{IEEEeqnarray*}
Note that in contrast to \cite{GPT10} we add $\omega_t\in\R^2$ to model
various disturbances. We introduce the system
$S=(X,X_{0},U,r)$ associated with the boost converter by defining
$X:=\R^2$,
$X_{0}:=D$,
$U:=U^c\times U^d$ with $U^c:=\{1,2\}$ and $U^d:=\R^2$. Note that the
inputs $(u^c,u^d)\in U$ of the system $S$ correspond to the control
input $u^c=u$ and the disturbance $u^d=w$. The transition
function is given by $r(x,(u^c,u^d)):=\{ A_{u^c}x+ B+u^d\}$. We use the cost functions
$I(x,(u^c,u^d)):=|u^d|$ and $O(x,u):=|x|_D$ to quantify the
disturbances and to encode the desired behavior.

The symbolic model
$\hat S=(\hat X,\hat X_{0},\hat U,\hat r)$ that is used in
\cite{GPT10} is based on a discretization of $D$:
\begin{IEEEeqnarray*}{c}
\hat X:=\hat X_{0}:=
D\cap\{x\in\R^2\mid x_i=k_i2/\sqrt{2}\kappa,i\in\{1,2\},k_i\in\Z\}
\end{IEEEeqnarray*}
with $\kappa=0.25\cdot
10^{-3}/\sqrt2$.
The inputs are given
by $\hat U:=\hat U^c\times \hat U^d$ with $\hat U^c:=\{1,2\}$ and
$\hat U^d:=\{0\}$. The transition function is implicitly given by
$\hat x'\in\hat r(\hat x,(\hat u,0))\iff |\hat x'- A_{\hat u}\hat x- B_{\hat u}|_2\le \kappa$. 

We set the cost functions for $\hat S$ simply to $\hat I(\hat
x,\hat u):=0$ and $\hat O(\hat x,\hat u):=0$ since $I(\hat
x,\hat u)=O(\hat x,\hat u)=0$ holds for all $\hat x$ and~$\hat u$. Let us introduce the
relation $R(\varepsilon):=R_X(\varepsilon)\times R_U$ with
\begin{IEEEeqnarray*}{c't'c}
R_X(\varepsilon)
:=\{(\hat x, x)\in\hat X\times X\mid |\hat x- x|_2\le \varepsilon\}\\
R_U:=\{((\hat u^c,0),(u^c,u^d))\in \hat U\times U\mid u^c=\hat u^c\}.
\end{IEEEeqnarray*}
We now show that $R(\varepsilon)$
is a $(\kappa,\beta,\lambda)$-acAIOSR from $\hat S$ to~$S$ with $\sd{}((u^c,u^d),(\hat u^c,0)):=|u^d|_2$ for
$\beta=0.997\ge \max\{|A_1|_2,|A_2|_2\}$ and $\lambda=1$. We first note that
$R(\varepsilon)\subseteq R(\varepsilon')$ holds whenever
$\varepsilon\le \varepsilon'$. By definition of $\hat X_{0}$ we can
see that for every $\hat x_0\in \hat X_{0}$ there exists a $x_0\in X_{0}$ such
that $(\hat x_0,x_0)\in R_X(\kappa)$. We proceed by checking 2) of
Definition~\ref{d:acASR}. Let $(\hat x, x)\in R_X(\varepsilon)$ and
$\hat u^c\in\hat U^c$. We choose $u^c=\hat u^c$ and observe that for every
$u^d\in U^d$ we have $(\hat x,x,(\hat u^c,0),(u^c,u^d))\in R(\varepsilon)$ and
$(\hat x', x')\in R_X(\kappa+\beta\varepsilon+\lambda|u^d|_2)$ with
$x'\in r(x,(u^c,u^d))$, $\hat
x'\in \hat r(\hat x,(\hat u^c,0))$ since 
\begin{IEEEeqnarray*}{c}
|x'-\hat x'|_2
\le
\kappa+ |A_{u^c}x+u^d-A_{u^c}\hat x|_2
\le
\kappa+\beta\varepsilon+|u^d|_2
\end{IEEEeqnarray*}
which shows that $R(\varepsilon)$ is an
$(\kappa,\beta,\lambda)$-acASR from $\hat S$ to $S$. As the
inequalities~\eqref{e:costASR} hold for $\gamma_I=0$ and $\gamma_O(c)=c$
we conclude that $R(\varepsilon)$ is an acAIOSR from $(\hat S,\hat I,\hat O)$ to
$(S,I,O)$. 

Similarly to previous examples, we exploited the contraction property of
the control system to construct an acASR from the symbolic model $\hat
S$ to $S$.

We resume the example after we presented the main
theorem of this section, where we refine the controller for the
symbolic model $\hat S$ to a controller for $S$.
\end{example}

\subsection{System composition}

In this subsection, we define a general
notion of system composition between two systems $S_1$ and $S_2$ with respect to
a relation $H\subseteq X_1\times X_2\times U_1\times U_2$. Afterwards,
we introduce the notion of system composition for the case when $H$ is an
acASR $R(\varepsilon)$ from $S_1$ to $S_2$. In the next subsection, we use the definition of system composition to define the controlled
system.

\begin{definition}\label{d:comp}
The \emph{composition of system $S_1$ and $S_2$ with respect to
the relation $H\subseteq X_1\times X_2\times U_1\times
U_2$}, is denoted by $S_{12}:=S_1\times_H S_2$ and defined by:
\begin{enumerate}
  \item $X_{12}:=X_1\times X_2$;
  \item $X_{120}:=(X_{10}\times X_{20})\cap H_X$;
  \item $U_{12}:=U_1\times U_2$;
  \item $(x'_1,x'_2)\in r_{12}((x_1,x_2),(u_1,u_2)):\iff$
    \begin{enumerate}
      \item $x'_2\in r_1(x_1,u_1)$;
      \item $x'_1\in r_2(x_2,u_2)$;
      \item $(x_1,x_2,u_1,u_2)\in H$ and $(x'_1,x'_2)\in H_X$.
    \end{enumerate}
\end{enumerate}

If $H$ is an $(\kappa,\beta,\lambda)$-acASR $R(\varepsilon)$ from $S_1$ to
$S_2$ with distance function $\sd{}$, then we exchange 2) by
$X_{120}:=(X_{10}\times X_{20})\cap R_X(\kappa)$ and
4.c) by 
\begin{IEEEeqnarray*}{c't'c}
 (x_1,x_2,u_1,u_2)\in R(e(x_1,x_2)),& and &
      (x'_1,x'_2)\in
      R_X(\varepsilon')
\end{IEEEeqnarray*}
with $\varepsilon':=\kappa+e(x_1,x_2)\beta+\lambda\sd{}(u_1,u_2)$
and $e(x_1,x_2):=\inf\{\varepsilon\in\R_{\ge0}\mid (x_1, x_2)\in
R_X(\varepsilon)\}$.

\end{definition}

Intuitively, our definition of system composition corresponds to the
well-known definition of parallel composition
of the systems $S_1$ and $S_2$ with synchronization defined by $H$,
respectively $R(\varepsilon)$.
The only transitions allowed on the composed system
$S_1\times_{H} S_2$ are those for which the corresponding
states and inputs belong to $H$, i.e., \mbox{$(x_1,x_2, u_1, u_2)\in
H$}. It is shown in~\cite{Tab09} how this notion of composition can
describe series, parallel, feedback and several other
interconnections. For the case that $H$ is an acASR $R(\varepsilon)$, we
require that $(x_1,x_2, u_1, u_2)\in
R(\varepsilon)$ where we fix $\varepsilon=e(x_1,x_2)$.  With our particular
choice of $\varepsilon=e(x_1, x_2)$ we restrict the transitions of the
composed system $S_1\times_{R(\varepsilon)} S_2$ to those states and
inputs that are related by the smallest $\varepsilon=e(x_1,x_2)$
possible. 
In general it is not ensured that the infimal $\varepsilon=e(x_1,x_2)$
is actually attained by the states $(x_1,x_2)$. Therefore, we
assume in the following that 
\begin{IEEEeqnarray}{c}\label{e:infR}
  e(x_1,x_2)<\infty \implies (x_1, x_2)\in R_X(e(x_1, x_2)).
\end{IEEEeqnarray}
Note that this assumption is often satisfied in practice where
$R_X(\varepsilon)$ is for example defined by $|x_1-x_2|\le \varepsilon$.

\subsection{The controlled system and controller refinement}

In the following, we use the composition of two systems
$S_C$ and $S$ with respect to a parameterized relation
$R_C(\varepsilon)$ to define the controlled system
$S_C\times_{R_{C}(\varepsilon)} S$, when the relation
$R_C(\varepsilon)$ is an acASR from $S_C$ to $S$.
From a control perspective, the controller $(S_C,R_{C}(\varepsilon))$
for $S$ can be
implemented in a feedback loop as follows. Let us denote the set of initial states
$x\in X_0$ for which there exists $x_C\in X_{C0}$ such that  $(x_C, x)\in
R_{C,X}(\kappa)$ by $X'_{0}$.  
Then initially, {\bf i)} the controller measures the system state $x\in
X_{0}'$ and determines a related controller state
\mbox{$x_C\in  X_{C0}$} such that  $(x_C, x)\in R_X(\kappa)$;
{\bf ii)} the controller picks the control inputs $u_C^c$ and $u^c$ according
to 2) in Definition~\ref{d:acASR} and applies $u^c$ to $S$; {\bf iii)}
the disturbance chooses $u^d\in U^d$ and $x'\in
r(x,(u^c,u^d))$; {\bf
iv)} the controller measures the new state $x'$ and chooses $ x_C'$
and $u_C^d\in  U_C^d$ such that $ x_C'\in 
r_C( x_C,(u_C^c, u_C^d))$ and $(x_C', x')\in
R_X(\varepsilon')$ for $\varepsilon'=e(x_C', x')$. Now the cycle
continues with ${\bf ii)}$.

Note that in this scenario, the disturbance inputs $U_C^d$ of the
controller $S_C$ are not considered as external inputs, but are
allowed to be chosen by the controller. This leads us to the following
the definition.

\begin{definition}\label{d:controller}
Given a system $S$, we call the pair $(S_C,R_{C}(\varepsilon))$ a \emph{controller
for $S$} if $S_C$ is a system, $R_{C}(\varepsilon)$ is an acASR from
$S_C$ to $S$
and the composed system
$S_C\times_{R_{C}(\varepsilon)} S$ is
non-blocking, in the sense that for
all reachable states $(x_C, x)$ there exists $(u_C^c, u^c)\in
U_C^c\times U^c$ such that for all $u^d\in U^d$ there exists $u_C^d\in  U_C^d$ for which $r'((x_C,x),((u^c_C,u^d_C),(u^c,u^d)))\neq\emptyset$, where $r'$ is
the transition map of the composed system.
\end{definition}

The interested reader may wish to consult~\cite[Chapter~6.1]{Tab09} for detailed
explanations of 
why the composition between a controller and a system
is only well defined when the relation $R_C$ is alternating. Note that
the assumption~\eqref{e:tightASR} is consistent with the use of 
extended alternating simulation relations in the definition of the
feedback composition in~\cite[Definition~6.1]{Tab09}.

Let us remark that the controller $(\hat S_C,\hat R_C)$ rendering the
system $\hat S$ pIODS that we obtain from the approach
in~\cite{TCRM14} is given in terms of a system $\hat S_C$ and an
\emph{alternating simulation relation} (ASR) from $\hat S_C$ to $\hat
S$ rather than an acASR. The definition of an ASR is given in
\cite[Definition 4.22]{Tab09}. Instead of repeating the definition
here, we define it in terms of an acASR.

\begin{definition}
Let $S$ and $\hat S$ be two systems and let $R(\varepsilon)$ be
a $(0,0,0)$-acASR from $\hat S$ to $S$. The relation $\hat R:=R(0)$ is
called an \emph{al\-ter\-nating simulation relation (ASR)} from $\hat S$
to $S$.  
\end{definition}

The composition $S_1\times_{R_{12}}S_2$ of $S_2$
and $S_1$ with respect to an ASR $R_{12}$
follows from Definition~\ref{d:comp} with $H=R_{12}$. Similarly, the
definition of a controller $(S_C,R_{C})$ in terms of an ASR
follows in a straightforward manner from
Definition~\ref{d:controller}. No confusion between acASR and ASR should arise, since we
always include the parameter $\varepsilon$ in the notation when we refer to an
acASR (acAIOSR). 

In the following, we assume that an ASR
$R_{12}$ from
$S_2$ to $S_1$ satisfies
\begin{IEEEeqnarray}{c't}\label{e:tightASR}
(x_1,x_2,(u^c_1,u^d_1),(u^c_2,u^d_2))\in R_{12}\implies&
$(x_1,x_2,u^c_1,u^c_2)$ satisfies 2.a) of Def.~\ref{d:acASR}.
\end{IEEEeqnarray}
This implication~\eqref{e:tightASR} results in no loss of generality
since we can always construct an  ASR
$R'_{12}$ that satisfies~\eqref{e:tightASR} from an ASR
$R_{12}$ by simply removing the elements that don't satisfy~\eqref{e:tightASR}.

Given a system with cost functions $(S,I,O)$ and a
controller $(S_C,R_C)$ for $S$, we abuse the notation and use $(S_C\times_{R_{C}} S, I,O)$
to refer to the composed system $S_C\times_{R_{C}} S$ with cost functions
$I_{C}((x_C,x),(u_C,u)):=I(x,u)$ and 
$O_{C}((x_C,x),(u_C,u)):=O(x,u)$.

Like in Corollary~\ref{c:gains}, we define the function 
\begin{IEEEeqnarray}{c}\label{e:deltaI}
\Gamma(x,u):= \sup\{\sd{}(\hat u,u)\mid \exists
\varepsilon,\exists \hat x: (\hat x,x,\hat u,u)\in R(\varepsilon)\}
\end{IEEEeqnarray}
for an acAIOSR $R(\varepsilon)$ from $\hat S$ to $S$ with distance function $\sd{}$
and refer to 
$R(\varepsilon)$ as \emph{acAIOSR from $\hat S$ to
$S$ with~$\Gamma$}.

Now we are ready to state the main theorem.

\begin{theorem}\label{t:main}
Given two systems with cost functions $(S,I,O)$ and $(\hat S,\hat I,\hat O)$, let $R(\varepsilon)$ be an
acAIOSR from $(\hat S,\hat I,\hat O)$ to $(S,I,O)$ with $\Gamma$ and
let $R(\varepsilon)$ satisfy~\eqref{e:infR}. Suppose there exists a 
controller $(\hat S_C,\hat R_{C})$ for $\hat S$ with $\hat R_C$
satisfying~\eqref{e:tightASR} and such that $(\hat
S_C\times_{\hat R_{C}}\hat S,\hat O,\hat I)$ is pIODS. Then there exists a
controller $(S_C,R_{C}(\varepsilon))$ for $S$ such
that $(S_C\times_{R_{C}(\varepsilon)} S,I',O)$ is pIODS with 
\mbox{$I':=\max\{I,\Gamma\}$}.
\end{theorem}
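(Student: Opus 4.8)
The plan is to construct the controller for $S$ by \emph{chaining} the given controller with the acAIOSR $R(\varepsilon)$, and then to reduce the pIODS claim for the resulting closed loop to Theorem~\ref{t:acsim}. First I would set $S_C\defeq\hat S_C\times_{\hat R_C}\hat S$ — so the refined controller carries an internal copy of the abstract plant — with cost functions lifted by $\hat I_C((\hat x_C,\hat x),(\hat u_C,\hat u))\defeq\hat I(\hat x,\hat u)$ and $\hat O_C$ analogously, so that $(S_C,\hat I_C,\hat O_C)$ is pIODS by hypothesis. Then I would define a parameterized relation $R_C(\varepsilon)$ from $S_C$ to $S$ by declaring $((\hat x_C,\hat x),x,(\hat u_C,\hat u),u)\in R_C(\varepsilon)$ iff $(\hat x_C,\hat x,\hat u_C,\hat u)\in\hat R_C$ and $(\hat x,x,\hat u,u)\in R(\varepsilon)$, with the distance function on the composed input obtained from that of $R(\varepsilon)$ by ignoring the $\hat S_C$-component; the monotonicity $R_C(\varepsilon)\subseteq R_C(\varepsilon')$ and condition \eqref{e:infR} for $R_C(\varepsilon)$ are inherited from $R(\varepsilon)$ because $\hat R_C$ carries no parameter.

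The technical heart is a composition lemma for alternating relations: $R_C(\varepsilon)$ is a $(\kappa,\beta,\lambda)$-acASR from $S_C$ to $S$, with the \emph{same} $\kappa,\beta,\lambda$ as $R(\varepsilon)$, since the $(0,0,0)$-acASR $\hat R_C$ adds no slack. To check clause 2) of Definition~\ref{d:acASR}, given a related triple and a control input of $S_C$, I would first invoke the alternating condition for the ASR $\hat R_C$ — using \eqref{e:tightASR} so that the control-input-only tuple satisfies 2.a) — to obtain a compatible control input of $\hat S$, then invoke the alternating condition for $R(\varepsilon)$ to obtain a control input $u^c\in U^c(x)$ of $S$; for any $u^d\in U^d$ I would run $R(\varepsilon)$ to produce a matching disturbance of $\hat S$, feed it into $\hat R_C$ to produce a disturbance of $\hat S_C$, and then thread the two successor clauses together with the transition structure of Definition~\ref{d:comp} for $S_C=\hat S_C\times_{\hat R_C}\hat S$ to obtain successors $(\hat x_C',\hat x')$ of $S_C$ and $x'$ of $S$ with $((\hat x_C',\hat x'),x')$ related at parameter $\kappa+\beta\varepsilon+\lambda\sd{}(\cdot,\cdot)$. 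I expect the bookkeeping of the nested $\forall/\exists$ quantifiers, of the control/disturbance input split induced on the composed system $S_C$, and of the parameter pinning $\varepsilon=e(\cdot,\cdot)$ in Definition~\ref{d:comp} to be the main obstacle.

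Next I would verify that $(S_C,R_C(\varepsilon))$ is a \emph{controller} for $S$ in the sense of Definition~\ref{d:controller}, i.e., that $S_C\times_{R_C(\varepsilon)} S$ is non-blocking. Every reachable state $((\hat x_C,\hat x),x)$ of this composition projects — by restricting a reaching behavior and using Definition~\ref{d:comp} — to a reachable state $(\hat x_C,\hat x)$ of $S_C=\hat S_C\times_{\hat R_C}\hat S$, which is non-blocking because $(\hat S_C,\hat R_C)$ is a controller for $\hat S$; pushing the witnessing control input forward through the acASR $R_C(\varepsilon)$ exactly as in the previous paragraph supplies the control input and controller-disturbance required by Definition~\ref{d:controller}.

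Finally, to obtain pIODS of the closed loop I would exhibit an acIOSR $Q(\varepsilon)$ from $(S_C\times_{R_C(\varepsilon)} S,I',O)$ to $(S_C,\hat I_C,\hat O_C)$ that \emph{forgets the concrete component}: $(((\hat x_C,\hat x),x),(\hat x_C,\hat x),((\hat u_C,\hat u),u),(\hat u_C,\hat u))\in Q(\varepsilon)$ iff $(\hat x,x,\hat u,u)\in R(\varepsilon)$, with distance function $\sd{}(\hat u,u)$. The abstract input witnessing the (ordinary, non-alternating) conditions of Definition~\ref{d:acSR} is just the first component of the closed-loop input, so the verification reduces to the successor clause of Definition~\ref{d:comp} for $S_C\times_{R_C(\varepsilon)} S$ — the value $\varepsilon=e(\cdot,\cdot)$ occurring there is at most the one at which the states are $Q$-related, and monotonicity of $R(\varepsilon)$ closes the gap — while the cost inequalities \eqref{e:accosts} for $Q(\varepsilon)$ are exactly \eqref{e:costASR} for $R(\varepsilon)$ combined with $I\le I'$. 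Since $(S_C,\hat I_C,\hat O_C)$ is pIODS, Theorem~\ref{t:acsim} applied to $Q(\varepsilon)$ yields that $(S_C\times_{R_C(\varepsilon)} S,\max\{I',\Gamma_Q\},O)$ is pIODS, where $\Gamma_Q$ is the mismatch function \eqref{e:deltaIC} of $Q$. Because in a $Q$-tuple the $\hat S$-input $\hat u$ is already pinned by the closed-loop input and $(\hat x,x,\hat u,u)\in R(\varepsilon)$ for some $\varepsilon$, we get $\Gamma_Q(((\hat x_C,\hat x),x),((\hat u_C,\hat u),u))=\sd{}(\hat u,u)\le\Gamma(x,u)\le I'(x,u)$, hence $\max\{I',\Gamma_Q\}=I'$, and we conclude that $(S_C\times_{R_C(\varepsilon)} S,I',O)$ is pIODS, as claimed.
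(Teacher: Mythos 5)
Your proposal is correct and follows essentially the same route as the paper: you take $S_C=\hat S_C\times_{\hat R_C}\hat S$, chain $\hat R_C$ with $R(\varepsilon)$ to get an acASR $R_C(\varepsilon)$ from $S_C$ to $S$ (the paper's Lemma~\ref{l:transfer}), verify non-blockingness (Lemma~\ref{l:controllerNB}), project the closed loop back onto $S_C$ via an acSR (Lemma~\ref{l:closedloop}), and invoke Theorem~\ref{t:acsim}. Your explicit check that the mismatch function of the projection relation is dominated by $\Gamma$, so that $\max\{I',\Gamma_Q\}=I'$, is a detail the paper glosses over but handles the same conclusion.
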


We use the following lemmas, whose proofs are given in the
Appendix, to prove
Theorem~\ref{t:main}.

\begin{lemma}\label{l:transfer}
Consider the systems $S_1$, $S_2$ and $S_3$. 
Let $R_{12}$ be an ASR from $S_1$ to $S_2$ that
satisfies~\eqref{e:tightASR} and let $R_{23}(\varepsilon)$ be
a $(\kappa,\beta,\lambda)$-acASR from $S_2$ to $S_3$ with
$\sd{23}$ satisfying~\eqref{e:infR}.
Then there exists a $(\kappa,\beta,\lambda)$-acASR
$R_{123}(\varepsilon)$ from $S_{12}=S_1\times_{R_{12}} S_2$ to
$S_3$ with distance function
$\sd{{123}}((u_1,u_2),u_3):=\sd{{23}}(u_2,u_3)$
satisfying~\eqref{e:infR}.
\end{lemma}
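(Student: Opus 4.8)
The plan is to build the composed relation $R_{123}(\varepsilon)$ in the obvious way and then verify the three requirements of Definition~\ref{d:acASR} one at a time, using the fact that the ASR $R_{12}$ contributes only the trivial parameters $(0,0,0)$ so all the ``$\kappa$, $\beta$, $\lambda$'' bookkeeping is inherited verbatim from $R_{23}(\varepsilon)$. Concretely, I would define
\begin{IEEEeqnarray*}{c}
  R_{123}(\varepsilon):=\{((x_1,x_2),x_3,(u_1,u_2),u_3)\mid
  (x_2,x_3,u_2,u_3)\in R_{23}(\varepsilon)\text{ and }(x_1,x_2,u_1,u_2)\in R_{12}\},
\end{IEEEeqnarray*}
with the distance function $\sd{123}((u_1,u_2),u_3):=\sd{23}(u_2,u_3)$ as stated. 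Monotonicity $R_{123}(\varepsilon)\subseteq R_{123}(\varepsilon')$ for $\varepsilon\le\varepsilon'$ is immediate from the corresponding property of $R_{23}$, and the projection onto state pairs is $R_{123,X}(\varepsilon)=\{((x_1,x_2),x_3)\mid (x_2,x_3)\in R_{23,X}(\varepsilon),\ (x_1,x_2)\in R_{12,X}\}$, which is what the composition in Definition~\ref{d:comp} refers to through $e(\cdot,\cdot)$.

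For point~1), given an initial state $(x_{10},x_{20})\in X_{120}$ of $S_{12}$ — so $(x_{10},x_{20})\in R_{12,X}$ and both are initial in their own systems — I apply point~1) of the acASR $R_{23}(\varepsilon)$ to $x_{20}\in X_{20}$ to get $x_{30}\in X_{30}$ with $(x_{20},x_{30})\in R_{23,X}(\kappa)$, hence $((x_{10},x_{20}),x_{30})\in R_{123,X}(\kappa)$. For point~2), I start from $((x_1,x_2),x_3)\in R_{123,X}(\varepsilon)$ and a composed control input $\hat u^c_{12}=(u_1^c,u_2^c)\in U^c_{12}((x_1,x_2))$. The strategy is two-stage matching: first feed $u^c_1$ through the ASR $R_{12}$ (ASR goes from $S_1$ to $S_2$) to obtain $u^c_2\in U^c_2(x_2)$, then feed $u^c_2$ through the acASR $R_{23}$ to obtain $u^c_3\in U^c_3(x_3)$; this $u^c_3$ is the witness control input for $S_3$. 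Next, given an arbitrary disturbance $u^d_3\in U^d_3$: run it backwards through $R_{23}$ to get $u^d_2\in U^d_2$ such that $(x_2,x_3,(u^c_2,u^d_2),(u^c_3,u^d_3))\in R_{23}(\varepsilon)$ and the successor condition holds; then, since $R_{12}$ is an ASR satisfying~\eqref{e:tightASR}, run $u^d_2$ backwards through it to get $u^d_1\in U^d_1$ so that $(x_1,x_2,(u^c_1,u^d_1),(u^c_2,u^d_2))\in R_{12}$ with the ($\kappa=\beta=\lambda=0$) successor condition. Setting $u_{12}:=((u^c_1,u^d_1),(u^c_2,u^d_2))$ gives $((x_1,x_2),x_3,u_{12},(u^c_3,u^d_3))\in R_{123}(\varepsilon)$, verifying 2.a). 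For the successor condition 2.b): take any $x'_3\in r_3(x_3,(u^c_3,u^d_3))$; the acASR $R_{23}$ gives $x'_2\in\hat r_2(x_2,(u^c_2,u^d_2))$ with $(x'_2,x'_3)\in R_{23,X}(\kappa+\beta\varepsilon+\lambda\sd{23}(u_2,u_3))$, and then the ASR $R_{12}$ gives $x'_1\in r_1(x_1,(u^c_1,u^d_1))$ with $(x'_1,x'_2)\in R_{12,X}$. By the definition of composition, $(x'_1,x'_2)\in r_{12}((x_1,x_2),u_{12})$ provided $(x_1,x_2,u_{12})\in R_{12}$ and $(x'_1,x'_2)\in R_{12,X}$, both of which hold; so $((x'_1,x'_2),x'_3)\in R_{123,X}(\kappa+\beta\varepsilon+\lambda\sd{123}((u_1,u_2),u_3))$, since $\sd{123}=\sd{23}$ on the relevant arguments. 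That is exactly 2.b). Finally I check that~\eqref{e:infR} is inherited: $e_{123}((x_1,x_2),x_3)=\inf\{\varepsilon\mid (x_2,x_3)\in R_{23,X}(\varepsilon)\}=e_{23}(x_2,x_3)$ because the $R_{12}$-membership is independent of $\varepsilon$, and if this is finite then by~\eqref{e:infR} for $R_{23}$ we get $(x_2,x_3)\in R_{23,X}(e_{23}(x_2,x_3))$, hence $((x_1,x_2),x_3)\in R_{123,X}(e_{123}((x_1,x_2),x_3))$.

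The one place that needs genuine care — and what I expect to be the main obstacle — is the direction of the alternation in point~2) and making sure the quantifier structure lines up across the two relations. An ASR/acASR from $S_a$ to $S_b$ says: for every control input of $S_a$ there is a control input of $S_b$ such that for every disturbance of $S_b$ there is a disturbance of $S_a$ matching. Here $R_{12}$ goes $S_1\to S_2$ and $R_{23}$ goes $S_2\to S_3$, so control inputs are matched ``downstream'' $1\to 2\to 3$ while disturbances are matched ``upstream'' $3\to 2\to 1$; the composed relation $R_{123}$ should go $S_{12}\to S_3$, and one must confirm that $S_{12}=S_1\times_{R_{12}}S_2$ indeed behaves, on its control/disturbance split $U^c_{12}=U^c_1\times U^c_2$, $U^d_{12}=U^d_1\times U^d_2$, as the ``$S_a$'' side. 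This is precisely where the hypothesis that $R_{12}$ satisfies~\eqref{e:tightASR} is used: it guarantees that the $u^c$-part already witnesses 2.a) of Definition~\ref{d:acASR}, so that the composition transition $r_{12}$ is non-trivially populated and the chosen $u^d_1$ is legitimate. Once this alternation bookkeeping is set up correctly, the rest is a routine chase through the definitions with no new estimates, since the contraction parameters $(\kappa,\beta,\lambda)$ pass through $R_{12}$ untouched.
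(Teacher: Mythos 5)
Your construction and quantifier chase are essentially the paper's proof: the same downstream matching of control inputs and upstream matching of disturbances and successors, the same use of~\eqref{e:tightASR} to legitimize the given pair $(u^c_1,u^c_2)$, and the same observation that the contraction parameters $(\kappa,\beta,\lambda)$ pass through the ASR untouched. There is, however, one concrete slip in your definition of $R_{123}(\varepsilon)$: you require the full tuple $(x_1,x_2,u_1,u_2)\in R_{12}$, whereas the paper's relation only requires $(x_1,x_2)\in R_{12,X}$. With your stronger condition the identity you claim for the state projection, namely $R_{123,X}(\varepsilon)=\{((x_1,x_2),x_3)\mid (x_2,x_3)\in R_{23,X}(\varepsilon),\ (x_1,x_2)\in R_{12,X}\}$, is false in general: membership of $((x_1,x_2),x_3)$ in the projection of \emph{your} $R_{123}(\varepsilon)$ demands a \emph{single} $u_2$ that simultaneously witnesses $(x_2,x_3,u_2,u_3)\in R_{23}(\varepsilon)$ and $(x_1,x_2,u_1,u_2)\in R_{12}$, and nothing guarantees such a common $u_2$ exists. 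In particular your verification of point 1) of Definition~\ref{d:acASR} breaks down: from $(x_{10},x_{20})\in R_{12,X}$ and $(x_{20},x_{30})\in R_{23,X}(\kappa)$ you obtain witnesses $u_2$ and $u_2'$ for the two memberships separately, but not a common one, so you cannot conclude $((x_{10},x_{20}),x_{30})\in R_{123,X}(\kappa)$ for your relation. The fix is simply to weaken the definition to the paper's form; all the remaining steps (2.a via~\eqref{e:tightASR}, 2.b, and the inheritance of~\eqref{e:infR} through $e_{123}=e_{23}$) then go through exactly as you wrote them.

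A smaller stylistic point: you describe ``feeding $u^c_1$ through $R_{12}$ to obtain $u^c_2$,'' but $u^c_2$ is already given as the second component of the universally quantified composed input $u^c_{12}$ and cannot be re-chosen. What you actually need --- and what~\eqref{e:tightASR} together with $r_{12}(x_{12},u_{12})\neq\emptyset$ delivers, as you correctly note at the end --- is that this \emph{given} pair $(u^c_1,u^c_2)$ already satisfies 2.a) of Definition~\ref{d:acASR}.
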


\begin{lemma}\label{l:controllerNB}
Consider the systems $S_1$ and $S_2$. Let $R_{12}(\varepsilon)$ be an
acASR from $S_1$ to $S_2$ that satisfies~\eqref{e:infR}. Then
$(S_1,R_{12}(\varepsilon))$ is a controller for $S_2$.
\end{lemma}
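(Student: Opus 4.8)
The plan is to check directly the three clauses of Definition~\ref{d:controller} for the pair $(S_1,R_{12}(\varepsilon))$ viewed as a candidate controller for $S:=S_2$. The first two clauses---that $S_1$ is a system and that $R_{12}(\varepsilon)$ is an acASR from $S_1$ to $S_2$---are hypotheses, so the content lies in the alternating non-blocking property of $S_1\times_{R_{12}(\varepsilon)}S_2$. I would fix a reachable state $(x_1,x_2)$ of this composed system and first argue that the parameter $e(x_1,x_2)$ of Definition~\ref{d:comp} is finite: by induction along a run witnessing reachability, the base case is $X_{120}\subseteq R_X(\kappa)$, so $e\le\kappa<\infty$, and the inductive step is clause 4.c) of Definition~\ref{d:comp}, which sends any successor of a state with $e<\infty$ into $R_X(\varepsilon')$ with $\varepsilon'=\kappa+e(x_1,x_2)\beta+\lambda\sd{}(u_1,u_2)<\infty$, the distance function being $\R_{\ge0}$-valued. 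Since also $e(x_1,x_2)\ge\kappa$ (the parameter range being $\intco{\kappa,\infty}$), assumption~\eqref{e:infR} gives $(x_1,x_2)\in R_X(\varepsilon_0)$ for $\varepsilon_0:=e(x_1,x_2)\in\intco{\kappa,\infty}$.

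Next I would extract the control inputs. Projecting the witnessing run onto $S_1$ through the structural clauses 4.a)--4.b) of Definition~\ref{d:comp} shows that $x_1$ is reachable in $S_1$, so by non-blocking of $S_1$ we have $U^c_1(x_1)=U^c_1\neq\emptyset$; pick any $u^c_1\in U^c_1(x_1)$. Applying clause 2) of Definition~\ref{d:acASR} to the acASR $R_{12}(\varepsilon)$ at parameter $\varepsilon_0$, at the related pair $(x_1,x_2)\in R_X(\varepsilon_0)$, and at this $u^c_1$, produces a $u^c_2\in U^c_2(x_2)$ such that for each disturbance $u^d_2\in U^d_2$ there is a $u^d_1\in U^d_1$ for which, writing $u_1:=(u^c_1,u^d_1)$ and $u_2:=(u^c_2,u^d_2)$: (i) $(x_1,x_2,u_1,u_2)\in R_{12}(\varepsilon_0)$, and (ii) every $x_2'\in r_2(x_2,u_2)$ admits an $x_1'\in r_1(x_1,u_1)$ with $(x_1',x_2')\in R_X\big(\kappa+\beta\varepsilon_0+\lambda\sd{}(u_1,u_2)\big)$.

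It then remains to verify that $(u^c_1,u^c_2)$ is a witness for the alternating non-blocking condition of Definition~\ref{d:controller}. Fix any $u^d_2\in U^d_2$ and take $u^d_1$ as above. Since $u^c_2\in U^c_2(x_2)$, the set $r_2(x_2,u_2)$ is nonempty, so pick $x_2'$ in it; by (ii) there is $x_1'\in r_1(x_1,u_1)$ with $(x_1',x_2')\in R_X(\varepsilon')$ for exactly the value $\varepsilon'=\kappa+e(x_1,x_2)\beta+\lambda\sd{}(u_1,u_2)$ demanded by clause 4.c) of Definition~\ref{d:comp} (here I use $\varepsilon_0=e(x_1,x_2)$), while (i) supplies the remaining half of 4.c), namely $(x_1,x_2,u_1,u_2)\in R_{12}(e(x_1,x_2))$. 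Together with the structural clauses 4.a)--4.b) this puts $(x_1',x_2')$ in $r'\big((x_1,x_2),((u^c_1,u^d_1),(u^c_2,u^d_2))\big)$, so that transition set is nonempty; as $(x_1,x_2)$ was arbitrary, $(S_1,R_{12}(\varepsilon))$ is a controller for $S_2$.

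I expect the delicate point to be the coupling between the state-dependent parameter $\varepsilon_0=e(x_1,x_2)$ and assumption~\eqref{e:infR}: one must show this infimal parameter is both finite and actually attained, so that clause 2) of Definition~\ref{d:acASR} can be invoked at exactly $\varepsilon_0$ and the resulting successor bound $\kappa+\beta\varepsilon_0+\lambda\sd{}(u_1,u_2)$ coincides with the $\varepsilon'$ prescribed by the acASR-composition rule; invoking the relation at any strictly larger parameter would enlarge the successor bound and break the match with Definition~\ref{d:comp}. The remainder is bookkeeping---keeping straight that $S_1$ plays the role of $\hat S$ and $S_2$ that of $S$ in Definition~\ref{d:acASR}, and threading the alternation $\forall\hat u^c\,\exists u^c\,\forall u^d\,\exists\hat u^d$ correctly into the composed transition map---together with the (standing) non-blocking assumption on $S_1$, without which $U^c_1(x_1)$ could be empty and clause 2) vacuous.
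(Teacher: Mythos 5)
Your proof is correct and follows the same route as the paper's (much terser) argument: reduce to the non-blocking clause, use \eqref{e:infR} to place every reachable state $(x_1,x_2)$ in $R_{12,X}(e(x_1,x_2))$, and then invoke clause 2 of Definition~\ref{d:acASR} at that parameter to produce the required inputs. You simply supply the details the paper leaves to the reader (the induction showing $e(x_1,x_2)<\infty$, the nonemptiness of $U^c_1(x_1)$ from the standing non-blocking assumption, and the match with clause 4.c) of Definition~\ref{d:comp}).
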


\begin{lemma}
\label{l:closedloop}
Consider the systems $S_1$ and $S_2$. Let $R_{12}(\varepsilon)$ be a
$(\kappa,\beta,\lambda)$-acASR from $S_1$ to $S_2$ with
$\sd{12}$ satisfying~\eqref{e:infR}. Then there exists a
$(\kappa,\beta,\lambda)$-acSR $R_{121}(\varepsilon)$ from
\mbox{$S_{12}=S_1\times_{R_{12}(\varepsilon)}S_2$} to $S_1$ with
distance function $\sd{121}((u_1,u_2),u_1'):=\sd{{12}}(u_1, u_2)$.
\end{lemma}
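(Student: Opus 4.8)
The plan is to let $R_{121}(\varepsilon)$ be the ``projection onto the first factor'' relation, carrying along the parameter of the given acASR. Concretely I would set
\[
  R_{121}(\varepsilon)\defeq\{((x_1,x_2),x_1,(u_1,u_2),u_1)\mid (x_1,x_2,u_1,u_2)\in R_{12}(\varepsilon)\},
\]
so that $R_{121,X}(\varepsilon)=\{((x_1,x_2),x_1)\mid (x_1,x_2)\in R_{12,X}(\varepsilon)\}$, and the distance function $\sd{121}((u_1,u_2),u_1')=\sd{12}(u_1,u_2)$ is exactly the prescribed one. Monotonicity of $\varepsilon\mapsto R_{121}(\varepsilon)$ is inherited verbatim from that of $R_{12}(\varepsilon)$. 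The reason this works is that the first component of $S_{12}=S_1\times_{R_{12}(\varepsilon)}S_2$ evolves according to the transition map $r_1$ of $S_1$, the synchronization with $R_{12}(\varepsilon)$ only pruning transitions, so the projection is a simulation; and the contractive bookkeeping transfers because, by the acASR-branch of Definition~\ref{d:comp}, every $S_{12}$-transition out of $(x_1,x_2)$ under $(u_1,u_2)$ is constrained by $R_{12}(e(x_1,x_2))$ with its successor lying in $R_{12,X}(\kappa+\beta\,e(x_1,x_2)+\lambda\sd{12}(u_1,u_2))$, while $e(x_1,x_2)\le\varepsilon$ whenever $(x_1,x_2)\in R_{12,X}(\varepsilon)$.

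For the initial-state clause of Definition~\ref{d:acSR}, given $(x_1,x_2)\in X_{120}=(X_{10}\times X_{20})\cap R_{12,X}(\kappa)$ I relate it to $x_1\in X_{10}$; from $(x_1,x_2)\in R_{12,X}(\kappa)$ we get $((x_1,x_2),x_1)\in R_{121,X}(\kappa)$. For the transition clause, fix $\varepsilon\ge\kappa$, a pair $((x_1,x_2),x_1)\in R_{121,X}(\varepsilon)$, so that $(x_1,x_2)\in R_{12,X}(\varepsilon)$ and $e\defeq e(x_1,x_2)\le\varepsilon$, and an enabled input $(u_1,u_2)\in U_{12}((x_1,x_2))$; I pick $u_1$ itself as the related input of $S_1$. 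Existence of a $S_{12}$-successor of $(x_1,x_2)$ under $(u_1,u_2)$ forces $x_1'\in r_1(x_1,u_1)$ for some $x_1'$ (hence $u_1\in U_1(x_1)$) and $(x_1,x_2,u_1,u_2)\in R_{12}(e)\subseteq R_{12}(\varepsilon)$, which is precisely the membership $((x_1,x_2),x_1,(u_1,u_2),u_1)\in R_{121}(\varepsilon)$ required in~2.a). For~2.b), given any $(x_1',x_2')\in r_{12}((x_1,x_2),(u_1,u_2))$, Definition~\ref{d:comp} yields $x_1'\in r_1(x_1,u_1)$ together with $(x_1',x_2')\in R_{12,X}(\kappa+\beta e+\lambda\sd{12}(u_1,u_2))$; choosing $x_1'$ itself as the $S_1$-successor and using $e\le\varepsilon$ with monotonicity of $R_{12,X}(\cdot)$ gives $(x_1',x_2')\in R_{12,X}(\kappa+\beta\varepsilon+\lambda\sd{12}(u_1,u_2))$, i.e.\ $((x_1',x_2'),x_1')\in R_{121,X}(\kappa+\beta\varepsilon+\lambda\sd{121}((u_1,u_2),u_1))$, as needed. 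Since the claim concerns an acSR and not an acIOSR, no cost-function inequalities need be checked.

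The only delicate points are reading the acASR-branch of Definition~\ref{d:comp} correctly, in particular that a $S_{12}$-transition out of $(x_1,x_2)$ always respects $R_{12}(e(x_1,x_2))$ with successor in $R_{12,X}(\kappa+\beta e(x_1,x_2)+\lambda\sd{12}(u_1,u_2))$, which is where the standing assumption~\eqref{e:infR} is needed so that the infimal parameter $e(x_1,x_2)$ is actually attained, and then, at each step, relaxing $e(x_1,x_2)$ up to the possibly larger $\varepsilon$ by monotonicity of $R_{12}(\cdot)$. Everything else is routine bookkeeping, and I do not anticipate a genuine obstacle.
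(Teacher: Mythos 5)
Your proposal is correct and takes exactly the paper's route: you define the same relation $R_{121}(\varepsilon)=\{((x_1,x_2),x_1,(u_1,u_2),u_1)\mid (x_1,x_2,u_1,u_2)\in R_{12}(\varepsilon)\}$ that the paper writes down, and the paper simply leaves the verification to the reader, which you carry out correctly (including the key step that a transition of $S_{12}$ forces $(x_1,x_2,u_1,u_2)\in R_{12}(e(x_1,x_2))\subseteq R_{12}(\varepsilon)$ and the successor into $R_{12,X}(\kappa+\beta e+\lambda\sd{12}(u_1,u_2))\subseteq R_{12,X}(\kappa+\beta\varepsilon+\lambda\sd{12}(u_1,u_2))$).
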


\begin{proof}[Proof of Theorem~\ref{t:main}]
We apply Lemma~\ref{l:transfer} for $S_1=\hat S_C$, $S_2=\hat S$,
$S_3=S$, $R_{12}=\hat R_C$ and $R_{23}(\varepsilon)=R(\varepsilon)$.
It follows that there exists
an acASR $R_{C}(\varepsilon)$ from $\hat
S_C\times_{\hat R_C}\hat S$ to $S$ with distance function
$\sd{C}((\hat u_C, \hat u),u):=\sd{}(\hat u,u)$ and $R_C(\varepsilon)$
satisfies~\eqref{e:infR}.  
We apply Lemma~\ref{l:controllerNB} to see that $(S_{C},R_{C}(\varepsilon))$
with $S_C:=\hat
S_C\times_{\hat R_C}\hat S$ is a
controller for $S$. Now it follows from Lemma~\ref{l:closedloop}
that there exists an acSR $R'(\varepsilon)$ from
$S_C\times_{R_{C}(\varepsilon)}S$
to $S_C=\hat S_C\times_{\hat R_C}\hat S$
with distance function $\sd{}'(((\hat u_C,\hat
u),u),(\hat u'_C,\hat u'))=\sd{C}((\hat u_C,\hat
u),u)=\sd{}(\hat u,u)$.

Note that the cost functions for the composed systems
$\hat S_C\times_{\hat R_C} \hat S$ and $S_C\times_{R_C(\varepsilon)}
S$ are given~by
\begin{IEEEeqnarray*}{c'c}
\hat I_C((\hat x_C,\hat x),(\hat u_C,\hat u))=\hat I(\hat x,\hat u), &
\hat O_C((\hat x_C,\hat x),(\hat u_C,\hat u))=\hat O(\hat x,\hat u),\\
I_C((u_C,x),(u_C,u))=I(x,u),& O_C((x_C,x),(u_C,u))=O(x,u).
\end{IEEEeqnarray*}
We proceed by showing that $R'(\varepsilon)$ is actually a
$(\kappa,\beta,\lambda)$-acIOSR form $(S_C\times_{R_C(\varepsilon)}
S,I_C,O_C)$ to 
$(\hat S_C\times_{\hat R_C} \hat S,\hat I_C,\hat O_C)$.
By carefully checking the proof of the Lemmas~\ref{l:transfer}
and~\ref{l:closedloop}, we see that
\mbox{$((x_C,x),(\hat x_C,\hat x),(u_C,u),(\hat u_C,\hat u))\in R'(\varepsilon)$} implies
$x_C=(\hat x_C,\hat x)$, $u_C=(\hat u_C,\hat u)$ and $(\hat
x,x,\hat u,u)\in R(\varepsilon)$. As $R(\varepsilon)$ is an acAIOSR from
$(\hat S,\hat I,\hat O)$ to $(S,I,O)$ we obtain the inequalities
\begin{IEEEeqnarray*}{rCl}
\hat I_C((\hat x_C,\hat x),(\hat u_C,\hat u))
=
\hat I(\hat x,\hat u)
&\le&
I(x,u)+\gamma_I(\varepsilon')
=I_C((u_C,x),(u_C,u))+\gamma_I(\varepsilon')\\
O_C((x_C,x),(u_C,u))
=
O(x,u)
&\le&
\hat O(\hat x,\hat u)+\gamma_O(\varepsilon')
=
\hat O_C((\hat x_C,\hat x),(\hat u_C,\hat u))+\gamma_O(\varepsilon')
\end{IEEEeqnarray*}
for all $((x_C,x),(\hat x_C,\hat x),(u_C,u),(\hat u_C,\hat u))\in
  R'(\varepsilon)$ and
  $\varepsilon'=\max\{\varepsilon,\sd{}'((u_C,u),(\hat u_C,\hat
  u))\}$.

We apply Theorem~\ref{t:acsim} to $(S_C\times_{R_{C}(\varepsilon)}
S,I_C,O_C)$ and
$(\hat S_C\times_{\hat R_C}\hat S,\hat I_C,\hat O_C)$ with distance function $\sd{}'$ and obtain that
$(S_C\times_{R_{C}(\varepsilon)} S,I_C',O_C)$ is pIODS with the modified input
costs $I_C'((x_C,x),(u_C,u))=\max\{I(x,u),\Gamma(x,u)\}$.
\end{proof}

\begin{remark}\label{r:acsim}
Note that we use Theorem~4 to see that the controlled system
$S_C\times_{R_C(\varepsilon)} S$ is pIODS. If $\gamma_O$ satisfies the
triangle inequality and $\hat I(\hat x,\hat
u)\le I(x,u)$ holds for every $(\hat x,x,\hat u, u)\in R(\varepsilon)$
and $\varepsilon\in \R_{\ge0}$, the premises of Corollary~\ref{c:acsim} are satisfied
and it follows that every behavior $((\xi_C,\xi),(\nu_C,\nu))$ of $S_C\times_{R_C(\varepsilon)} S$
satisfies~\eqref{e:cIODS}.
\end{remark}

\begin{remark}\label{r:main}
Note that the controller $(S_C,R_{C}(\varepsilon))$ for $S_C$ is given by
$S_C=\hat S_C\times_{\hat R_{C}} \hat S$ where
$R_{C}(\varepsilon)$ equals
\mbox{$\{(\hat x_C,\hat x),x,(\hat u_C,\hat u),u)\mid
(\hat x, x,\hat u,u)\in R(\varepsilon)\land (\hat x_C,\hat
x)\in \hat R_{C,X}\}$}, 
see~\eqref{e:rel:transfer}.

Moreover, the parameters $\kappa$, $\beta$ and $\lambda$ and distance
function $\sd{}'$ of the
$(\kappa,\beta,\lambda)$-acIOSR $R'(\varepsilon)$ from
\mbox{$S_C\times_{R_{C}(\varepsilon)} S$} to $S_C$ coincide with
the parameters and distance function $\sd{}$ 
of the $(\kappa,\beta,\lambda)$-acAIOSR from $\hat S$ to $S$ given in
the premise of Theorem~\ref{t:main}.
\end{remark}

\addtocounter{example}{-1}
\begin{example}[DC-DC boost converter (continued)]

Let $(\hat S_C,\hat R_{C})$ denote the controller
from~\cite{GPT10} that renders $D$ positively
invariant with respect to $S_C:=\hat S_C\times_{\hat R_{C}} \hat S$.
Therefore, any behavior $((\hat \xi_C,\hat \xi),(\hat \nu_C,\hat \nu)$ of $\hat S_C\times_{\hat R_{C}} \hat S$  satisfies
$O(\xi_{t},\nu_{t})=I(\xi_{t}, \nu_{t})=0$ and it follows
that $\hat S_C\times_{\hat R_{C}} \hat S$ is $(\hat \gamma,\hat \mu)$-IODS with
$\hat\gamma=0$ and $\hat\mu=0$.  

We apply Theorem~\ref{t:main} and conclude that 
$S_C\times_{R_{C}(\varepsilon)} S$ is pIODS with input costs
$\max\{I,\Gamma\}=|u^d|_2$, since $\Gamma$ induced by $R(\varepsilon)$
and $\sd{}$ is given by $|u^d|_2$. Note that the assumptions of
Corollary~\ref{c:acsim} hold and we can conclude that
any behavior $((\xi_C,\xi),(\nu_{C},\nu))$ of $S_C\times_{R_{C}(\varepsilon)} S$ satisfies
\begin{IEEEeqnarray*}{c}
  |\xi_{t}|_D\le
  \max_{t'\in\intcc{0;t}}\mu_\Delta(\gamma_\Delta(|\nu^d_{t'}|_2),t-t')+\kappa_\Delta
\end{IEEEeqnarray*}
where with $\mu_\Delta(r,t):=(\beta')^t r$, $\gamma_\Delta=1/(\beta'-\beta)$
and $\kappa_\Delta:=\kappa/(1-\beta)$ for some
$\beta'\in\intoo{\beta,1}$. 

The pIODS inequality implies that the system may leave the set $D$
in the presence of disturbances, however in absence of disturbances
the system either stays in $D+\B(\kappa_\Delta)$ or asymptotically
approaches $D+\B(\kappa_\Delta)$. Moreover, contrary to the approach
in~\cite{GPT10} the closed-loop system $S_C\times_{R_C(\varepsilon)}
S$ is non-blocking even in the presents of unbounded disturbances.

\end{example}

Note that in this example, the contraction property of the
system matrices enabled us to establish an acIOASR from the symbolic
model to the concrete system. As a consequence, we could neglect the
continuous disturbances on the symbolic model, but nevertheless
establish the pIODS inequality. We demonstrate in Section~\ref{s:app}
how this procedure  leads to a \emph{separation of concerns} in the
robust controller design for CPS, where a continuous ``low-level'' controller and a
discrete ``high-level'' controller provides robustness with respect to
continuous and discrete disturbances, respectively. In particular, we use a low-level
feedback controller to enforce the contraction property needed to
establish an acIOASR from the symbolic model (without continuous
disturbances) to the concrete CPS. Then we use the synthesis approach
in~\cite{TCRM14} to design a discrete high-level controller that renders the
symbolic model robust against discrete disturbances. Afterwards, we
refine the discrete controller to the concrete CPS according to
Remark~\ref{r:main} and obtain from~Theorem~\ref{t:main} that the
controlled CPS is robust against the continuous as well as discrete
disturbances.

\section{A Compositional Result}
\label{s:composition}

In this section, we show how acASR are preserved under composition.
We analyse four systems $S_1$, $\hat S_1$, $S_2$ and
$\hat S_2$ and assume the existence of the relations
$R_i(\varepsilon)$, $i\in\{1,2\}$ with $R_i(\varepsilon)$ being an
acASR from $\hat S_i$ to $S_i$. Then we show  how
to construct a relation $\hat H$ such that there is an acASR
$R(\varepsilon)$  from $\hat S_1\times_{\hat H} \hat S_2$ to
$S_1\times_H S_2$.

Note that this result is useful to construct symbolic
models that are alternatingly related with
CPS $S_{12}:=S_1\times_H S_2$ that is given by the composition of a system
$S_1$, representing the physical part and system $S_2$, representing the
cyber part. The compositional result enables us to construct a symbolic model 
of the concrete CPS in two steps. In the first
step, we compute symbolic models for the individual parts $S_1$ and
$S_2$. In the second step, we combine those
symbolic models to obtain a symbolic for the composed CPS. Usually,
the cyber part of a CPS is already finite and an abstraction of $S_2$
may not be necessary. In that case, the construction of a symbolic model
of $S_{12}$ is reduced to the computation of symbolic
model for the physical part~$S_1$ using, e.g., the methods presented
in~\cite{PGT08,PT09,GPT10} and \cite[Chapter~11]{Tab09}. We don't provide further details on how to construct such models
here, but refer the reader to Example~\ref{ex:2} and
Section~\ref{s:app} where we illustrate
those approaches with concrete examples.

We begin with the derivation of the compositional result.
Let $S_i$, $\hat S_i$, $i\in\{1,2\}$ be four systems, and let the
relations $R_i(\varepsilon)$ be acASR from $\hat S_i$ to $S_i$.
Suppose we are given $H\subseteq X_1\times X_2\times U_1\times U_2$,
then we define the relation $\hat H\subseteq \hat X_1\times \hat X_2\times
\hat U_1\times \hat U_2$ by
\begin{IEEEeqnarray}{c}\label{e:interconnection}
\{(\hat x_1,\hat x_2,\hat u_1,\hat u_2)\mid
\exists \varepsilon,x_i,u_i:(\hat x_i,x_i,\hat  u_i, u_i)\in
R_i(\varepsilon),i\in\{1,2\}\land (x_1,x_2,u_1,u_2)\in H\}
\end{IEEEeqnarray}
and $R(\varepsilon)\subseteq \hat X_{12}\times X_{12}\times
\hat U_{12}\times U_{12}$ by
\begin{IEEEeqnarray}{c}\label{e:compASR}
\{(\hat x_{12},x_{12},\hat  u_{12}, u_{12})\mid
(\hat x_i,x_i, \hat u_i,u_i)\in R_i(\varepsilon),i\in\{1,2\}\}
\end{IEEEeqnarray}
We use the following assumption
\begin{multline}\label{e:comp}
(\hat x_i,x_i)\in R_{i,X}(\varepsilon),i\in\{1,2\}\land (x_1,x_2)\in
H_X\implies\\ \exists u_i,\exists \hat u_i:(\hat x_i,x_i,\hat u_i,u_i)\in R_i(\varepsilon)\land (x_1,x_2,u_1,u_2)\in H
\end{multline}
Intuitively, we ensure with this assumption that if $(x_1,x_2)\in H_X$
and the states $\hat x_i$ are related to $x_i$ for $i\in\{1,2\}$ then
$(\hat x_1,\hat x_2)\in \hat H_X$.

\begin{theorem}\label{t:comp}
Let $S_i$, $\hat S_i$, $i\in\{1,2\}$ be four systems, and let the
relations $R_i(\varepsilon)$ be $(\kappa_i,\beta_i,\lambda_i)$-acASR
from $\hat S_i$ to $S_i$ with distance function~$\sd{i}$.
Let $H\subseteq X_1\times X_2\times U_1\times U_2$ be a relation and
$\hat H\subseteq \hat X_1\times \hat X_2\times \hat U_1\times \hat
U_2$ be obtained from~\eqref{e:interconnection}. If~\eqref{e:comp} holds, then $R(\varepsilon)$ as defined
in~\eqref{e:compASR} is an $(\kappa,\beta,\lambda)$-acASR from $\hat
S_1\times_{\hat H}\hat S_2$ to $S_1\times_H S_2$ with
$\kappa:=\max_i\{\kappa_i\}$, 
$\beta:=\max_i\{\beta_i\}$ and
$\lambda:=\max_i\{\lambda_i\}$ with distance function
$\sd{12}(u_{12},\hat u_{12}):=\max_i\{\sd{i}(u_i,\hat u_i)\}$.
\end{theorem}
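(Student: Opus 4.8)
The strategy is to verify each clause of Definition~\ref{d:acASR} directly for the candidate relation $R(\varepsilon)$ defined in~\eqref{e:compASR}, using the corresponding clauses for $R_1(\varepsilon)$ and $R_2(\varepsilon)$ component-wise and invoking the coupling assumption~\eqref{e:comp} to stay inside $\hat H$ and $H$ at each step. First I would observe that $R(\varepsilon)\subseteq R(\varepsilon')$ for $\varepsilon\le\varepsilon'$ is immediate from monotonicity of $R_1$ and $R_2$. For the initial-state condition 1), given $(\hat x_{10},\hat x_{20})\in\hat H_X$, by definition~\eqref{e:interconnection} there are witnesses $x_i,u_i$ and some $\varepsilon$ with $(\hat x_i,x_i,\hat u_i,u_i)\in R_i(\varepsilon)$ and $(x_1,x_2,u_1,u_2)\in H$; applying condition 1) of each $R_i$ together with~\eqref{e:comp} lets me select $x_{i0}\in X_{i0}$ with $(\hat x_{i0},x_{i0})\in R_{i,X}(\kappa_i)\subseteq R_{i,X}(\kappa)$ and $(x_{10},x_{20})\in H_X$, so $((\hat x_{10},\hat x_{20}),(x_{10},x_{20}))\in R_X(\kappa)$, and since this pair lies in both $\hat H_X$ and $H_X$ it is an initial state of the composed systems.

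For the main inductive clause 2), I would fix $((\hat x_1,\hat x_2),(x_1,x_2))\in R_X(\varepsilon)$ — so $(\hat x_i,x_i)\in R_{i,X}(\varepsilon)$ for $i\in\{1,2\}$, and these pairs lie in $\hat H_X$, $H_X$ — and fix a control input $(\hat u_1^c,\hat u_2^c)\in\hat U_{12}^c(\hat x_1,\hat x_2)$. The key bookkeeping point is the order of quantifiers in the alternating definition for a composed system: the refined input set $\hat U_{12}^c$ of $\hat S_1\times_{\hat H}\hat S_2$ pairs a control component of $\hat S_1$ with a control component of $\hat S_2$, while the disturbance components are chosen afterwards. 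I would apply clause 2) of $R_i(\varepsilon)$ to $(\hat x_i,x_i)$ and $\hat u_i^c$ to obtain $u_i^c\in U_i^c(x_i)$; then for each disturbance $(u_1^d,u_2^d)$ offered on the concrete composed side I would, for each $i$, feed $u_i^d$ into clause 2.a) of $R_i$ to get $\hat u_i^d$, set $u_i=(u_i^c,u_i^d)$ and $\hat u_i=(\hat u_i^c,\hat u_i^d)$, and for every successor $x_i'\in r_i(x_i,u_i)$ obtain $\hat x_i'\in\hat r_i(\hat x_i,\hat u_i)$ with $(\hat x_i',x_i')\in R_{i,X}(\kappa_i+\beta_i\varepsilon+\lambda_i\sd{i}(\hat u_i,u_i))$. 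Bounding $\kappa_i\le\kappa$, $\beta_i\le\beta$, $\lambda_i\le\lambda$ and $\sd{i}(\hat u_i,u_i)\le\max_j\sd{j}(\hat u_j,u_j)=\sd{12}(\hat u_{12},u_{12})$, and using monotonicity of $R_{i,X}$ in $\varepsilon$, gives $(\hat x_i',x_i')\in R_{i,X}(\kappa+\beta\varepsilon+\lambda\sd{12}(\hat u_{12},u_{12}))$ for both $i$, hence $((\hat x_1',\hat x_2'),(x_1',x_2'))$ lies in $R_X$ at the desired parameter. It remains to certify that all the tuples $(\hat x_1,\hat x_2,\hat u_1,\hat u_2)$, $(x_1,x_2,u_1,u_2)$, $(\hat x_1',\hat x_2',\ldots)$ and $(x_1',x_2',\ldots)$ actually satisfy the $H$- and $\hat H$-membership conditions built into Definition~\ref{d:comp}, so that the transitions I have produced are genuine transitions of the composed systems.

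The main obstacle will be exactly this last point: showing that the component-wise constructions respect the synchronization constraints $H$ and $\hat H$. For the primed (successor) states this is where assumption~\eqref{e:comp} does the real work — from $(\hat x_i',x_i')\in R_{i,X}$ and $(x_1',x_2')\in H_X$ (which holds because $x_i'\in r_i(x_i,u_i)$ and the concrete composed transition already enforced $(x_1',x_2')\in H_X$) it yields witnesses placing $(\hat x_1',\hat x_2')\in\hat H_X$; one must check the witnesses can be taken to be the $\hat u_i'$ eventually produced, or else argue that $\hat H_X$-membership is all that is needed at the state level. For the input tuples, $(\hat u_1,\hat u_2)\in\hat H$ follows from~\eqref{e:interconnection} using the $u_i$ just constructed as witnesses, provided $(x_1,x_2,u_1,u_2)\in H$; and that in turn needs care because the disturbance components $u_i^d$ are adversarial. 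I expect the resolution is that $H$ constrains only inputs in a way compatible with arbitrary disturbance components — essentially the same structural hypothesis that makes the composed control-input set $U_{12}^c$ well defined — so that once the control components match via the $R_i$ construction, $H$-membership is inherited; I would make this precise by tracing through Definition~\ref{d:comp} and, if needed, a mild well-posedness hypothesis on $H$ analogous to the non-blocking requirements already imposed elsewhere. Once these membership checks are discharged, assembling them into the format of Definition~\ref{d:acASR} is routine, and the stated values of $\kappa$, $\beta$, $\lambda$ and $\sd{12}$ are forced by the $\max$-bounds above.
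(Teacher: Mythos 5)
Your proposal matches the paper's proof: both verify Definition~\ref{d:acASR} clause by clause, choosing $u_i^c$, $\hat u_i^d$ and $\hat x_i'$ component-wise from the corresponding clauses of $R_1(\varepsilon)$ and $R_2(\varepsilon)$, bounding the contraction parameters by the maxima, and using~\eqref{e:interconnection} together with assumption~\eqref{e:comp} to certify $\hat H$-membership of the abstract input tuple and of the successor states. The one point you hedge on resolves itself without any extra well-posedness hypothesis on $H$: in clause 2.b) you only need to match successors $x_{12}'\in r_{12}(x_{12},u_{12})$, and by Definition~\ref{d:comp} such successors exist only when $(x_1,x_2,u_1,u_2)\in H$ and $(x_1',x_2')\in H_X$ already hold, so the required $H$-membership is supplied by the existence of the concrete composed transition itself rather than needing to be arranged separately.
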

\begin{proof}
The property $R(\varepsilon)\subseteq R(\varepsilon')$
whenever $\varepsilon\le \varepsilon'$ is directly inherited from
$R_i(\varepsilon)$. Let $\hat x_{12}\in \hat X_{120}$ which
implies $\hat x_i\in \hat X_{i0}$, $i\in\{1,2\}$. Therefore, there exist $x_i\in
X_{i0}$ with $(\hat x_i,x_i)\in R_{i,X}(\kappa_i)$ and thereby we have
$(\hat x_{12},x_{12})\in R_X(\kappa)$.

Consider $(\hat x_{12},x_{12})\in R_X(\varepsilon)$ and $\hat u^c_{12}\in
\hat U^c_{12}(\hat x_{12})$. This implies
\begin{IEEEeqnarray}{t}
$(\hat x_i, x_i)\in
R_{i,X}(\varepsilon)$ and $\hat u^c_i\in \hat U_i^c(\hat x_i)$ for
$i\in\{1,2\}$. \label{e:h:comp:a}
\end{IEEEeqnarray}
By \eqref{e:h:comp:a}, we can pick $u^c_i\in U^c_i$ such that the
tuple $(\hat x_i, x_i,\hat u^c_i, u^c_i)$
satisfies 2.a) in Definition~\ref{d:acASR}. Let $u^d_{i}\in U^d_{i}$ and
$x'_{12}\in r_{12}(x_{12},u_{12})$ where $u_i=(u^c_i,u^d_i)$ and  $u_{12}=(u_1,u_2)$. By our choice of
$\hat u^c_i$ there exist $\hat u^d_i\in \hat U^d_i$ such that
$(\hat x_i,x_i, \hat u_i,u_i)\in
R_i(\varepsilon)$ with  $\hat u_i=(\hat
u^c_i,\hat u^d_i)$ and it follows that $(\hat x_{12}, x_{12},
\hat u_{12},u_{12})\in R(\varepsilon)$ where
$\hat u_{12}=(\hat u_1,\hat u_2)$.

For $i\in\{1,2\}$, we choose $\hat x'_i\in \hat r_i(\hat x_i,\hat u_i)$ such that $(\hat x_i',x_i')\in R_{i,X}(\varepsilon_i')$ with
$\varepsilon'_i=\kappa_i+\beta_i\varepsilon+\lambda_i\sd{i}(\hat u_i,u_i)$. It remains to show that  $\hat x'_{12}\in \hat
r_{12}(\hat x_{12},\hat u_{12})$ from which follows
that $(\hat x'_{12},x'_{12})\in R_X(\varepsilon')$ with
$\varepsilon'=\kappa+\beta\varepsilon+\lambda\sd{}(\hat u_{12},u_{12})$. We need to check 4.c) in
Definition~\ref{d:comp}.

Since $(\hat x_i,x_i,\hat u_i,u_i)\in R_i(\varepsilon)$ and
$(x_1,x_2,u_1,u_2)\in H$ we have
 $(\hat x_1,\hat x_2,\hat u_1,\hat u_2)\in \hat
H$ and it remains to show that $\hat x'_{12}\in \hat H_X$. That
follows by~\eqref{e:compASR}, since we know that $(\hat x'_i, x'_i)\in
R_{i,X}(\varepsilon')$ and $(x'_1,x'_2)\in H_X$.
\end{proof}

\section{A Mobile Robot Example}
\label{s:app}

In this section, we demonstrate our results in terms of a simple example with a robot
moving in the plane equipped with an omnidirectional drive. We model
the sampled dynamics of the robot by the difference equation
\begin{IEEEeqnarray*}{c}
  \xi_{t+1}=\xi_{t}+\nu_{t}
\end{IEEEeqnarray*}
where $\xi_t\in\R^2$ is the position of the robot and
$\nu_t\in\R^2$ is the control input.
We assume that the
control signal is sent to the mobile robot over a wireless
communication channel with possible package dropouts. We apply
the presented abstraction and refinement framework to design a robust
controller for the robot over the lossy channel.  As a
first step, we construct a symbolic model that alternatingly
simulates the robot. Here we use Theorem~\ref{t:comp} to construct symbolic models of the physical part and cyber
part individually and then compose those models to obtain a
symbolic model of the overall robot with communication channel.
Afterwards, we use the approach from~\cite{TCRM14}
to synthesize a robust controller for the symbolic model. Finally, we
apply Theorem~\ref{t:main} to refine the controller for the symbolic
model to the robot.

{\bf The system description.}
We assume that the robot drive is equipped with low-level controllers that
we use to enforce the sampled-data dynamics
\begin{IEEEeqnarray}{c'c}\label{e:sys}
  \xi_{t+1}=0.8\xi_{t}+\nu_{t}+\omega_{t}.
\end{IEEEeqnarray}
We use $\omega_t\in\R^2$ to model
actuator errors and/or sensor noise. A real-world example of a robot
that fits our assumptions is \emph{Robotino},
see~\cite{WB10}.
We cast
\eqref{e:sys} as the system
$S_1=(X_1,X_{10},U_1,r_1)$ with $X_1=\R^2$,
$X_{10}=\{x_{10}\}$, $U_1=U^c_1\times U^d_1$, $U^c_1=U^d_1=\R^2$ and $r_1$ is defined in
the obvious way. 

Moreover, we assume that the high-level control signal $u$ is sent to the
actuator via a wireless connection where package dropouts might occur.
However, for simplicity of the presentation, we assume that two
packages are never dropped consecutively. 
We use the system
$S_2=(X_2,X_{20},U_2,r_2)$ with $X_2=\{a_0,a_1\}$, $X_{20}=X_2$ and
$U_2=U^d_2= D$ and $D=\{\perp,\top\}$
to model that behavior. The
dynamics $r_2$ of the system $S_2$ is illustrated in Figure~\ref{f:actuator}.
\begin{figure}[htb]
\centering
\begin{tikzpicture}[->,shorten >=1pt,thick,node distance=2cm,auto]
\tikzstyle{every state}=[scale=0.8,draw=black!100,fill=black!10]
\node[state]  (00)                 {$a_{0}$};
\node[state]  (11) [right of=00]   {$a_{1}$};
\path (00) edge [bend left]  node {\footnotesize $\top$} (11)
      (00) edge [loop above] node[left=3pt] {\footnotesize $\bot$} (00)
      (11) edge [bend left]  node
      {\footnotesize$\perp,\top $} (00);
\end{tikzpicture}
\caption{Dynamics to model possible package dropouts.}\label{f:actuator}
\end{figure}
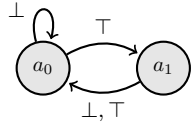
Our model of the wireless communication acts like a switch with respect to the control input
$\bar u\in\R^2$. If a package dropout occurs, i.e.,
$x_2=a_1$, we apply zero as control
input $u=0$. If no dropout occurs, i.e., $x_2=a_0$, the control input is
$u=\bar u$ since the robot successfully received a
control update. The transition between the nominal state $x_2=a_0$ and
the state when a package dropout occurs $x_2=a_1$ is modelled by the
perturbation signal $\top$. The continuation of the nominal behavior,
i.e., no package dropout occurs is modelled by the nominal input
$\bot$. 

We define the composed system
$S_{12}:=S_1\times_H S_2$ using the relation $H\subseteq X_1\times X_2\times U_1\times U_2$
which is implicitly given by
\begin{IEEEeqnarray*}{c}
(x_1,x_{2},(u^c_1,u^d_1),u_2)\in H:\iff (x_2=a_1\implies u^c_1=0).
\end{IEEEeqnarray*}
In this way only the zero control input $u^c_1=0$ is allowed when the system
$S_2$ is in state $x_2=a_1$.

We would like to enforce a periodic behavior
which we express as a cycle along the states displayed in
Figure~\ref{f:1}.
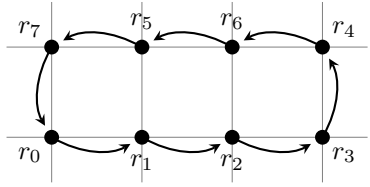
\begin{figure}[htb]
\centering
\begin{tikzpicture}[thick,scale=1.2]

\draw[gray, very thin, step=1] (-.5,-.5) grid (3.5,1.5);

\draw[fill,color=black!100!white]
                  (0,0)  circle (2pt) node[anchor=north east] {$r_0$}
                  (1,0)  circle (2pt) node[below=3pt]         {$r_1$}
                  (2,0)  circle (2pt) node[below=3pt]         {$r_2$}
                  (3,0)  circle (2pt) node[anchor=north west] {$r_3$}
                  (0,1)  circle (2pt) node[anchor=south east] {$r_7$}
                  (1,1)  circle (2pt) node[above=3pt]         {$r_5$}
                  (2,1)  circle (2pt) node[above=3pt]         {$r_6$}
                  (3,1)  circle (2pt) node[anchor=south west] {$r_4$};
\path[->,>=stealth,shorten >=5pt]
           (0,0) edge[bend right] (1,0)
           (1,0) edge[bend right] (2,0)
           (2,0) edge[bend right] (3,0)
           (3,0) edge[bend right] (3,1)
           (3,1) edge[bend right] (2,1)
           (2,1) edge[bend right] (1,1)
           (1,1) edge[bend right] (0,1)
           (0,1) edge[bend right] (0,0);

\end{tikzpicture}
\caption{Desired trajectory in the state space.}\label{f:1}
\end{figure}
In order to express our desired behavior in terms of the output
costs, we introduce a system $S_3=(X_3,\{r_0\},U_3,r_3)$ with
$X_3=\{r_i\}$, $i\in\{0,\ldots,7\}$, $X_{30}=\{r_0\}$, $U_3=\{\epsilon\}$ and 
$r_3(x_3,u_3)$ given according to~Figure~\ref{f:1}. 
The reference states  $r_i\in\R^2$ are given by 
\begin{IEEEeqnarray*}{c,c,c,c,c,c,c,c}
r_{0}=\begin{bmatrix} 0, \, 0\end{bmatrix}^\top,&
r_{1}=\begin{bmatrix} 1, \, 0\end{bmatrix}^\top,&
r_{2}=\begin{bmatrix} 2, \, 0\end{bmatrix}^\top,&
r_{3}=\begin{bmatrix} 3, \, 0\end{bmatrix}^\top,\\
r_{4}=\begin{bmatrix} 3, \, 1\end{bmatrix}^\top,&
r_{5}=\begin{bmatrix} 2, \, 1\end{bmatrix}^\top,&
r_{6}=\begin{bmatrix} 1, \, 1\end{bmatrix}^\top,&
r_{7}=\begin{bmatrix} 0, \, 1\end{bmatrix}^\top.
\end{IEEEeqnarray*}
The overall system is obtained as the composition of the three systems
$S_{123}=S_{12}\times_G S_3$ with respect to $G:=X_{12} \times X_3\times
U_{12}\times U_3$. We define the output costs $O:X_1\times X_3\to \R_{\ge0}$ by
\begin{IEEEeqnarray*}{c}
  O(x_1,x_3):= |x_1-x_3|
\end{IEEEeqnarray*}
and choose the input costs $I:X_2\times U^d_1\to \R_{\ge0}$ simply as
\begin{IEEEeqnarray*}{c}
  I(x_2,u_1^d):= I_d(x_2)+|u_1^d|,
\end{IEEEeqnarray*}
with $I_d(a_0):=0$ and $I_d(a_1):=1$. Note that we omit the
independent variables in $O$ and $I$. The value of the output costs indicates how
well the robot is following the nominal behavior. The costs are zero, if
the robot follows the system $S_3$ and non-zero otherwise. The input
costs are used to quantify the possible disturbances.

{\bf The symbolic model.}
We continue with the construction of the symbolic model $\hat S_{123}$
for~$S_{123}$, where we construct symbolic models $\hat S_i$,
  $i\in\{1,2,3\}$ for each
subsystem $S_i$, respectively, and then use Theorem~\ref{t:comp} to
compose the individual models $\hat S_i$ to $\hat S_{123}$.

First we introduce the symbolic model $\hat S_1$ of $S_1$ based on a
discretization of 
the state space and input space of $S_1$. We choose
$\hat X_1=[\intcc{-1,4}^2]_\kappa$, $\hat
U^c_1=[\intcc{-3,3}^2]_\kappa$ and $\hat U^d_1=\{[0,0]^\top\}$. 
Note that 
we neglect the disturbances $U^d_1$ on the symbolic model $\hat S_1$.
We set the discretization parameter to~$\kappa=0.05$.
We leave it to the reader to check that the relation
$R_1(\varepsilon)\subseteq \hat X_1\times X_1\times\hat  U_1\times U_1$ given by
\begin{IEEEeqnarray*}{c}
\{(\hat x_1,x_1,(\hat u^c_1,0),(u^c_1,u^d_1))\mid |x_1-\hat x_1|\le
\varepsilon\land u^c_1=\hat u^c_1\}
\end{IEEEeqnarray*}
is an $(0.05,0.8,1)$-acASR from $\hat S_1$ to $S_1$ with distance
function $\sd{1}((\hat u_1^c,0),(u_1^c,u_1^d))=|u^d_1|$.

The symbolic models for $S_2$ and $S_3$ are directly given by $\hat
S_2=S_2$ and $\hat S_3=S_3$ since $S_2$ and $S_3$ are finite. It is
straightforward to see that the relations \mbox{$R_i:=\{(\hat
x_i,x_i,\hat u_i,u_i)\mid \hat x_i=x_i\land \hat u_i=u_i\}$}, $i\in\{2,3\}$ are 
$(0,0,0)$-acASR from $\hat S_i$ to $S_i$ with distance functions
$\sd{i}(\hat u_i,u_i)=0$.

Now we apply Theorem~\ref{t:comp} to see that $R_{12}(\varepsilon)\subseteq
\hat X_{12}\times X_{12}\times  \hat U_{12}\times U_{12}$ given by 
\begin{IEEEeqnarray*}{c}
\{(\hat x_{12},x_{12},\hat u_{12},u_{12})\mid (\hat x_1, x_1,\hat u_1,u_1)\in R_1(\varepsilon)\land x_2=\hat x_2\land u_2=\hat u_2\}
\end{IEEEeqnarray*}
is an $(0.05,0.8,1)$-acASR from $\hat S_{12}:=\hat S_1\times_{\hat H} \hat S_2$ to
$S_{12}$ with distance function $\sd{12}(\hat u_{12},u_{12})=\sd{1}(\hat u_1,u_1)=|u^d_1|$.
The relation $\hat H\subseteq \hat X_1\times \hat X_2\times \hat U_1\times\hat  U_2$ results
from~\eqref{e:interconnection} to
$(\hat x_{1},\hat x_{2},\hat u_{1},\hat u_{2})\in \hat H$ iff $(\hat x_2=a_1\implies
\hat u^c_1=0)$.
By the same arguments we see that the relation
\begin{IEEEeqnarray*}{c}
R_{123}(\varepsilon):=\{(\hat x_{123},x_{123},\hat u_{123},
u_{123})\mid (\hat x_{12},x_{12},\hat u_{12}, u_{12})\in
R_{12}(\varepsilon)\land \hat x_3=x_3\}
\end{IEEEeqnarray*}
is an $(0.05,0.8,1)$-acASR
from $\hat S_{123}:=\hat S_{12}\times_{\hat G} \hat S_3$ to $S_{123}$
with distance functions $\sd{123}(\hat u_{123},u_{123})=\sd{1}(\hat
u_1,u_1)$, where $\hat G:=\hat X_{12}\times \hat X_3\times \hat
U_{12}\times \hat U_3$.

We choose the cost functions $\hat I$ and $\hat O$ for
$\hat S_{123}$ to be $\hat I(\hat x_2):=I_d(\hat x_2)$ and $\hat
O(\hat x_1, \hat x_3):=|\hat x_1-\hat x_3|_\kappa$. We
remark that the cost functions satisfy~\eqref{e:costASR} with 
$\gamma_I= 0$ and $\gamma_O(c)=c+\min\{c,\kappa\}$ and thereby follows
that $R_{123}(\varepsilon)$ is an acAIOSR from $\hat S_{123}$ to
$S_{123}$.

We use the synthesis approach in~\cite{TCRM14} to compute a
controller $(\hat S_C,\hat R_{C})$ that renders the system $\hat
S_{123}$ IODS. As a result, we obtain  the IODS inequality
\begin{IEEEeqnarray}{c}\label{e:robot:iods}
|\hat \xi_{1,t}-\hat \xi_{3,t}|_\kappa\le
\max_{t'\in\intcc{0;t}}\{1.4 I_d(\hat \xi_{t'}))-1.4(t-t')\}
\end{IEEEeqnarray}
for every behavior $(\hat \xi,\hat \nu)$ of the controlled
system $S_{C}:=\hat S_C\times_{\hat R_C}\hat S_{123}$.
Note that with $\gamma=\eta=1.4$ the effect of the
disturbance $\hat x_2=a_1$ at time $t$ disappears after one step.

{\bf Controller refinement.}  We now apply Theorem 5 to refine the
controller for $\hat S_{123}$ to a controller for $S_{123}$.
First, note that $R_{123}(\varepsilon)$ is a $(0.05,0.8,1)$-acAIOSR
from $\hat S_{123}$ to $S_{123}$ with $\Gamma(x_{123},
((u_1^c,u_1^d),u_{23}))=|u^d_1|$ that satisfies~\eqref{e:infR} and
$R_{123}(\varepsilon)$ satisfies~\eqref{e:tightASR}. Moreover, \mbox{$(\hat
S_C\times_{\hat R_C} \hat S, \hat I,\hat O)$} is IODS with the
inequality~\eqref{e:robot:iods}.
As a consequence there exists a controller $(S_C,R_C(\varepsilon))$ for
$S_{123}$ and  the controlled system is pIODS.
Furthermore, since $\hat I(\hat x,\hat u)\le I(x,u)$ for all related
tuples $(\hat x, x,\hat u, u)$ we can apply Corollary~\ref{c:acsim}
and the inequality 
\begin{IEEEeqnarray*}{C}
\begin{IEEEeqnarraybox}[][c]{rCl}
O(\xi_t)
&\le&
\max_{t'\in\intcc{0;t}}\{1.4 I_d(\xi_t)-1.4(t-t')\}+
\max_{t'\in\intcc{0,t}}\tfrac{1}{\beta'-0.8}(\beta')^{t-t'}|\pi_{U^d_1}(\nu_{t'})|
+
0.25
\end{IEEEeqnarraybox}
\end{IEEEeqnarray*}
follows for any behavior $(\xi,\nu)$ of $S_{C}\times_{R_C(\varepsilon)}S$ and
  any $\beta'\in\intoo{0.8,1}$. 

This example demonstrates nicely how our results enable us to separate
the design procedure to establish robustness with respect to
continuous and discrete disturbances.  We used the low-level
controllers of the robot to enforce the contractive
dynamics~\eqref{e:sys} so that $S$ admits an acAIOSR. We used the
discrete design procedure~\cite{TCRM14} to establish the IODS
inequality~\eqref{e:robot:iods} for the symbolic model with respect to
the discrete disturbances. As the previous pIODS inequality shows, the
final controlled system is robust with respect to both discrete as
well as continuous disturbances.

\bibliographystyle{plain}
\bibliography{ios}

\appendix

\begin{lemma}\label{l:kldbound}
For every $\mu\in{\mathcal{KLD}}$, $\gamma\in{\mathcal K}$ and
$c'\in\R_{\ge 0}$ the following inequality 
holds
\begin{IEEEeqnarray*}{C}
  \mu(\gamma(c+c'),t)\le \mu(\gamma'(c),t)+\sigma(c')
\end{IEEEeqnarray*}
for all $c,t\in\R_{\ge0}$ with $\gamma'(c)=2\gamma(2c)$ and $\sigma(c')=\mu(\gamma'(c'),0)$.
\end{lemma}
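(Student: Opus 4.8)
The plan is to prove the inequality by a simple case distinction on the relative size of $c$ and $c'$, using in each case only that $\gamma\in{\mathcal K}$ is (strictly) increasing and that $\mu(\cdot,t)$ is increasing while $\mu(c,\cdot)$ is decreasing. The starting observation is the elementary bound $c+c'\le 2\max\{c,c'\}$: by monotonicity of $\gamma$ this gives $\gamma(c+c')\le\gamma(2c)$ when $c\ge c'$ and $\gamma(c+c')\le\gamma(2c')$ when $c<c'$. Since $\gamma(2c)\le 2\gamma(2c)=\gamma'(c)$ and likewise $\gamma(2c')\le 2\gamma(2c')=\gamma'(c')$, we obtain $\gamma(c+c')\le\gamma'(c)$ in the first case and $\gamma(c+c')\le\gamma'(c')$ in the second.

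In the case $c\ge c'$ I would then apply that $\mu(\cdot,t)\in{\mathcal K}$ is increasing to get $\mu(\gamma(c+c'),t)\le\mu(\gamma'(c),t)$, and since $\sigma(c')=\mu(\gamma'(c'),0)\ge 0$ this already yields $\mu(\gamma(c+c'),t)\le\mu(\gamma'(c),t)+\sigma(c')$. In the case $c<c'$ I would instead discard the first summand on the right-hand side: using that $\mu(\cdot,t)$ is increasing and then that $\mu(\gamma'(c'),\cdot)\in{\mathcal L}$ is decreasing in its second argument,
\begin{IEEEeqnarray*}{C}
\mu(\gamma(c+c'),t)\le\mu(\gamma'(c'),t)\le\mu(\gamma'(c'),0)=\sigma(c'),
\end{IEEEeqnarray*}
and since $\mu(\gamma'(c),t)\ge 0$ the desired inequality follows once more. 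Combining the two cases completes the proof.

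There is no genuine obstacle here; the only points requiring minor care are applying monotonicity of $\mu$ in the correct argument in each case and keeping in mind that $\sigma(c')$ is defined through the $t=0$ slice, where $\mu(c,0)=c$ for $\mu\in{\mathcal{KLD}}$. The statement is the discrete-time analogue of the standard ``weak triangle inequality'' for ${\mathcal{KL}}$-estimates, and it is exactly the form in which it is invoked in the proofs of Theorem~\ref{t:asim} and Theorem~\ref{t:acsim}.
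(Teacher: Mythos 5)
Your proof is correct and rests on the same key device as the paper's, namely the weak triangle inequality $a+b\le\max\{2a,2b\}$ combined with monotonicity of $\gamma$ and $\mu$ and evaluating the discarded branch at $t=0$. The paper merely organizes it as a single chain of inequalities (passing through $\gamma(c+c')\le\gamma(2c)+\gamma(2c')$ and applying the max trick a second time inside $\mu$) rather than an up-front case split on $c\ge c'$ versus $c<c'$, so the two arguments are essentially identical.
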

\begin{proof}[Proof of Lemma~\ref{l:kldbound}]
We apply the fact $a+b\le \max\{2a,2b\}$ twice. First, for $\gamma$ we
get $\gamma(c+c')\le\gamma(\max\{2c,2c'\})\le \gamma(2c)+\gamma(2c')$.
Then for $\mu$ we obtain
\begin{IEEEeqnarray*}{rCl+r*}
\mu(\gamma(c+c'),t)
&\le&
\mu(\gamma(2c)+\gamma(2c'),t)\\
&\le&
\mu(\max\{2\gamma(2c),2\gamma(2c')\},t)\\
&\le&
\mu(2\gamma(2c),t)+ \mu(2\gamma(2c'),0).&\qedhere
\end{IEEEeqnarray*}
\end{proof}

\begin{lemma}\label{l:transformation}
Suppose we are given $\gamma\in{\mathcal K}$ and $\mu\in{\mathcal{KLD}}$. Then
there exists $\mu'\in{\mathcal{KLD}}$ such that 
\begin{IEEEeqnarray}{c'c}\label{e:transformation}
  \gamma(\mu(c,t))=\mu'(\gamma(c),t)
\end{IEEEeqnarray}
holds for all $c\in\R_{\ge0}$ and $t\in\N$.
\end{lemma}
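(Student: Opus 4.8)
The plan is to build $\mu'$ by conjugating $\mu$ with $\gamma$ on the range of $\gamma$, and then to extend the result to all of $\R_{\ge0}$. Since $\gamma\in\mathcal{K}$, it is a homeomorphism of $\R_{\ge0}$ onto the half-open interval $\intco{0,M}$, where $M\defeq\lim_{s\to\infty}\gamma(s)\in\intoc{0,\infty}$, with a continuous, strictly increasing inverse $\gamma^{-1}\colon\intco{0,M}\to\R_{\ge0}$. On $\intco{0,M}\times\N$ I would set
\[
  \mu'(r,t)\defeq\gamma\big(\mu(\gamma^{-1}(r),t)\big).
\]
With this definition the asserted identity is immediate, since $\mu'(\gamma(c),t)=\gamma(\mu(\gamma^{-1}(\gamma(c)),t))=\gamma(\mu(c,t))$ for every $c\in\R_{\ge0}$ and $t\in\N$.

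The next step is to check that this $\mu'$ has the required structure on $\intco{0,M}\times\N$. For each fixed $t$ the map $\mu'(\cdot,t)=\gamma\circ\mu(\cdot,t)\circ\gamma^{-1}$ is a composition of continuous, strictly increasing maps that vanish at $0$, hence lies in $\mathcal{K}$; for each fixed $r$ the map $t\mapsto\mu(\gamma^{-1}(r),t)$ is strictly decreasing with limit $0$, and applying the continuous increasing $\gamma$ (with $\gamma(0)=0$) preserves this, so $\mu'(r,\cdot)\in\mathcal{L}$; finally $\mu'(r,0)=\gamma(\gamma^{-1}(r))=r$ and, unwinding the definition and using $\gamma^{-1}\circ\gamma=\mathrm{id}$ together with the cocycle identity for $\mu$, one gets $\mu'(\mu'(r,s),t)=\gamma(\mu(\mu(\gamma^{-1}(r),s),t))=\gamma(\mu(\gamma^{-1}(r),s+t))=\mu'(r,s+t)$. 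It then remains to extend $\mu'$ to $\R_{\ge0}\times\N$ when $M<\infty$. For this it is enough to extend the single map $g'\defeq\mu'(\cdot,1)=\gamma\circ\mu(\cdot,1)\circ\gamma^{-1}$ to a function in $\mathcal{K}$ on all of $\R_{\ge0}$ satisfying $g'(r)<r$ for $r>0$, and then to re-define $\mu'(r,t)$ as the $t$-fold iterate of $g'$ applied to $r$: iterates of such a $g'$ form a $\mathcal{KLD}$ function automatically --- the cocycle identities are built in, each iterate lies in $\mathcal{K}$ as a composition, and the iterate sequence is strictly decreasing in $t$ with any limit being a fixed point of the continuous map $g'$, hence $0$ --- and on $\intco{0,M}$ it reproduces the formula above, so the identity is preserved. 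On $\intco{0,M}$ we already have $g'(r)=\gamma(\mu(\gamma^{-1}(r),1))<\gamma(\gamma^{-1}(r))=r$ because $\mu(c,1)<\mu(c,0)=c$; setting $L\defeq\lim_{r\to M^-}g'(r)\le M$, I would extend $g'$ over $\intco{M,\infty}$ by a line through $(M,L)$ with slope in $\intoo{0,1}$, which keeps $g'$ in $\mathcal{K}$ and keeps $g'(r)<r$.

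The step I expect to be delicate is this last extension: it goes through as stated precisely when $L<M$, i.e. when $\lim_{c\to\infty}\gamma(\mu(c,1))$ stays strictly below $\sup\gamma$. This is automatic when $\gamma$ is unbounded (then $M=\infty$ and no extension is needed) and, more generally, whenever $\mu(\cdot,1)$ is bounded; I would therefore record the mild auxiliary estimate $L<M$ (which holds in the situations where the lemma is applied) as part of the argument, everything else being routine comparison-function bookkeeping. The only other point worth spelling out carefully is the claim that iterating a single $\mathcal{K}$-map $g'$ with $g'(r)<r$ produces a genuine $\mathcal{KLD}$ function --- in particular the $\mathcal{L}$-property of $t\mapsto\mu'(r,t)$ --- which I would derive from the monotone-convergence-to-a-fixed-point argument above rather than from any closed form. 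Alternatively one can quote the discrete-time analogue of the standard transformation lemma for comparison functions, but the conjugation construction is self-contained.
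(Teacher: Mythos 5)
Your construction is the same as the paper's: the paper defines $\mu'(c,0):=c$ and $\mu'(c,t+1):=g(\mu'(c,t))$ with $g(c):=\gamma(\mu(\gamma^{-1}(c),1))$, which is precisely your ``iterate the conjugated one-step map'' description, and your verification that iterating a $\mathcal K$-map $g$ with $g(r)<r$ produces a $\mathcal{KLD}$ function (fixed-point argument for the $\mathcal L$-property, composition for the $\mathcal K$-property, cocycle by construction) is the intended justification of the paper's closing sentence. On the common domain the two arguments coincide.

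The point you flag is, however, a genuine gap --- in the lemma itself, not merely in your write-up. Since the paper's class $\mathcal K$ does not require unboundedness, $\gamma^{-1}$ is only defined on $\intco{0,M}$ with $M=\sup\gamma$, so the paper's $g$ is undefined on $\intco{M,\infty}$ when $M<\infty$; and your analysis of when the extension can be completed is exactly right: a continuous, strictly increasing extension of $g$ with $g(r)<r$ exists iff $L:=\lim_{c\to\infty}\gamma(\mu(c,1))<M$. When $L=M$ the lemma is false, not just the proof. Concretely, take $\gamma(c)=c/(1+c)$ and $\mu(c,t)=2^{-t}c$: any $\mu'$ satisfying \eqref{e:transformation} must have $\mu'(r,1)=r/(2-r)$ for $r\in\intco{0,1}$, so monotonicity of $\mu'(\cdot,1)$ forces $\mu'(1,1)\ge 1=\mu'(1,0)$, contradicting $\mu'(1,\cdot)\in\mathcal L$. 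The one place I would push back is your remark that $L<M$ ``holds in the situations where the lemma is applied'': in the proof of Theorem~\ref{t:acsim} the lemma is invoked with inner function $\mu_\Delta(r,1)=\beta' r$, which is unbounded, so there $L<M$ is equivalent to the outer gain $\gamma'_\varepsilon$ being unbounded, and that is not guaranteed by the hypotheses. The clean repairs are either to add the hypothesis that $\gamma$ is unbounded (then $M=\infty$ and no extension is needed), or to weaken \eqref{e:transformation} to an inequality $\gamma(\mu(c,t))\le\mu'(\tilde\gamma(c),t)$ for some unbounded majorant $\tilde\gamma\in\mathcal K$ of $\gamma$ (e.g.\ $\tilde\gamma(c)=\gamma(c)+c$), which is all that is used downstream. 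With either repair, your argument goes through verbatim and is the careful version of the paper's proof.
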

\begin{proof}
We define $\mu': \R_{\ge0}\times\N\to \R_{\ge0}$ iteratively by
\begin{IEEEeqnarray*}{c'c}
  \mu'(c,0)\defeq r, &\mu'(c,t+1)\defeq g(\mu'(c,t))
\end{IEEEeqnarray*}
for all $c\in\R_{\ge0}$ and $t\in\N$, where $g(c)\defeq
  \gamma(\mu(\gamma^{-1}(c),1))$.  It is easy
  to see by induction over $t\in\N$ that $\mu'$ 
satisfies~\eqref{e:transformation}. Hence, $\mu'$ is a ${\mathcal{KL}}$
function and by the iterative definition follows that $\mu'\in
{\mathcal{KLD}}$.
\end{proof}
\begin{lemma}\label{l:maxbound}
Suppose we are given $\mu_a,\mu_b\in{\mathcal{KLD}}$. Then
there exists $\mu\in{\mathcal{KLD}}$ such that 
\begin{IEEEeqnarray}{c'c}\label{e:maxbound}
  \max_{t'\in\intcc{0;t}}\mu_a(c,t')+\max_{t'\in\intcc{0;t}}\mu_b(c,t')\le\max_{t'\in\intcc{0;t}}\mu(2c,t')
\end{IEEEeqnarray}
holds for all $c\in\R_{\ge0}$ and $t\in\N$.
\end{lemma}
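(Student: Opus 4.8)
The plan is to read the result off directly from the definitions of the comparison-function classes; essentially no construction is required. First I would observe that since $\mu_a,\mu_b\in\mathcal{KLD}\subseteq\mathcal{KL}$, for each fixed $c\in\R_{\ge0}$ the maps $t'\mapsto\mu_a(c,t')$ and $t'\mapsto\mu_b(c,t')$ belong to $\mathcal{L}$ and are therefore decreasing, so each maximum over $t'\in\intcc{0;t}$ is attained at $t'=0$. Combining this with the defining property of $\mathcal{KLD}$, namely $\mu_a(c,0)=\mu_b(c,0)=c$, I would conclude that the left-hand side of~\eqref{e:maxbound} equals $\mu_a(c,0)+\mu_b(c,0)=2c$ for all $c\in\R_{\ge0}$ and $t\in\N$.

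It then remains only to produce one $\mu\in\mathcal{KLD}$ making the right-hand side at least $2c$, and here any element of $\mathcal{KLD}$ works: taking, e.g., $\mu\defeq\mu_a$ (or the canonical choice $\mu(c,t)\defeq 2^{-t}c$, which is readily checked to lie in $\mathcal{KLD}$), the same reasoning gives $\max_{t'\in\intcc{0;t}}\mu(2c,t')=\mu(2c,0)=2c$, so~\eqref{e:maxbound} holds, in fact with equality. The only points warranting care are confirming that membership in $\mathcal{KLD}$ indeed forces the map $\mu(\cdot,0)$ to be the identity, which is explicit in the definition, and checking that the degenerate case $c=0$, where all terms vanish, causes no inconsistency; I do not expect any genuine difficulty here, and there is no real ``main obstacle'' beyond getting these bookkeeping points right.
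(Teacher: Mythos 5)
Your argument is formally valid for the statement exactly as written: since $\mu_a(c,\cdot),\mu_b(c,\cdot)\in\mathcal L$ are decreasing and $\mu_j(c,0)=c$ by the $\mathcal{KLD}$ semigroup/identity property, both maxima on the left collapse to $t'=0$ and the left-hand side is identically $2c$, which any $\mu\in\mathcal{KLD}$ matches on the right. You have, in effect, correctly observed that the lemma as literally stated is degenerate. But this is not what the paper proves, and the difference matters. The paper's proof constructs $\mu$ recursively, $\mu(c,0):=c$ and $\mu(c,t+1):=g(\mu(c,t))$ with $g(c):=\max\{2\mu_a(c,1),2\mu_b(c,1)\}$, and establishes by induction (using the semigroup property $\mu_j(c,t+1)=\mu_j(\mu_j(c,t),1)$) the \emph{pointwise} bound $2\mu_j(c,t)\le\mu(2c,t)$ for every individual $t\in\N$. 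That pointwise bound is the actual content of the lemma, because in the proof of Theorem~\ref{t:acsim} the lemma is invoked on sums of the form $\max_{t'}\hat\mu(a_{t'},t-t')+\max_{t'}\mu'_\Delta(b_{t'},t-t')$, where the first arguments vary with $t'$ and the time index is $t-t'$ rather than $t'$; there the maxima do not collapse to a single term, and one genuinely needs $2\mu_j(c,s)\le\mu(2c,s)$ at every $s$. Your candidate witnesses fail this: for $\mu_a(c,t)=0.9^tc$ and $\mu(c,t)=2^{-t}c$ one has $2\mu_a(c,t)>\mu(2c,t)$ for all $t\ge1$. So while your proof discharges the letter of the lemma, it would leave the downstream argument unsupported; the substantive step you are missing is the recursive construction of $\mu$ from $\mu_a(\cdot,1)$ and $\mu_b(\cdot,1)$ together with the inductive comparison that exploits the $\mathcal{KLD}$ semigroup identity.
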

\begin{proof}
First let us remark that 
\begin{IEEEeqnarray}{c'c}\label{e:h:maxbound}
  \max_{t'\in\intcc{0;t}}\mu_a(c,t')+\max_{t'\in\intcc{0;t}}\mu_b(c,t')
 \\ \le
  \max\{2\max_{t'\in\intcc{0;t}}\mu_a(c,t'),2\max_{t'\in\intcc{0;t}}\mu_b(c,t')\} \end{IEEEeqnarray}
holds for all $c\in\R_{\ge0}$ and $t\in\N$. Now we define
$\mu:\R_{\ge0}\times \N\to\R_{\ge0}$ recursively for all
$c\in\R_{\ge0}$ and $t\in\N$ by $\mu(c,0)\defeq c$, $\mu(c,t+1)\defeq g(\mu(c,t))$
with $g(c)\defeq\max\{2\mu_a(c,1),2\mu_b(c,1)\}$. To show
\eqref{e:maxbound}, in view of \eqref{e:h:maxbound}, it suffices to
show that $2\mu_j(c,t)\le \mu(c,t)$
holds for all $j\in\{a,b\}$, $c\in\R_{\ge0}$ and $t\in\N$. We fix
$j\in\{a,b\}$ and $c\in \R_{\ge0}$ and proceed
by induction over $t\in\N$. The base case $t=0$ is trivial. Suppose
the induction hypothesis holds, then we derive
$2\mu_j(c,t+1)
=
2\mu_j(\mu_j(c,t),1)
\le 2\mu_j(\mu(2c,t),1)
\le g(\mu(2c,t))=\mu(2c,t+1).$
\end{proof}

\begin{proof}[Proof of Lemma~\ref{l:transfer}]
We show that the relation $R_{123}(\varepsilon)$ defined by
\begin{multline}\label{e:rel:transfer}
\{(x_{12},x_3,u_{12},u_3)
\in
X_{12}\times X_{3}\times U_{12}\times U_{3})
\mid 
(x_2, x_3,u_2,u_3)\in  R_{23}(\varepsilon)
\land
(x_1, x_2)\in R_{12,X}\}
\end{multline}
fulfills the claim of the lemma.

First let us note that $R_{123}(\varepsilon')\subseteq R_{123}(\varepsilon)$
whenever $\varepsilon'\le\varepsilon$ is directly inherited from the
inclusion $R_{23}(\varepsilon')\subseteq R_{23}(\varepsilon)$ for $\varepsilon'\le
\varepsilon$. Moreover, $e(x_{12},x_{3})=e(x_2,x_3)$ whenever
$(x_1,x_2)\in R_{12,X}$ which implies
that $(x_2,x_3)\in R_{23,X}(e(x_2, x_3))$ and $(x_1, x_2)\in
R_{12,X}$
whenever $e(x_{12}, x_{3})<\infty$. Hence, $R_{123}(\varepsilon)$
satisfies~\eqref{e:infR}.

We proceed by checking 1) of Def.~\ref{d:acASR}. Let $ x_{12}\in
 X_{120}\subseteq R_{12,X}$. Since for every $x_2\in X_{20}$ there
is $x_3\in X_{30}$ with $(x_2, x_3)\in  R_{23,X}(\kappa)$,
there exists \mbox{$x_3\in X_{30}$} with
\mbox{$(x_{12},x_3)\in R_{123,X}(\kappa)$}.

Let us now check 2) of Def.~\ref{d:acASR}. Let 
\mbox{$(x_{12}, x_{3})\in R_{123,X}(\varepsilon)$} and $u_{12}^c\in U_{12}^c( x_{12})$. 
This implies: 
\begin{itemize}
\item[a)] $(x_2, x_3)\in R_{23,X}(\varepsilon)$ and $(x_1, x_2)\in R_{12,X}$;
\item[b)] $r_{12}(x_{12},(u^c_{12},u^d_{12}))\neq\emptyset$ for any $u^d_{12}\in U^d_{12}$.
\end{itemize}

Since $(x_2, x_3)\in R_{23,X}(\varepsilon)$ and $u^c_2\in
U^c_2(x_2)$ we
can choose $u_3^c\in U_3^c$ so that 2.a) of Def.~\ref{d:acASR}
holds. Now for $u^d_3\in U^d_3$ and $x_3'\in r_3(x_3,u_3)$
we can pick $u^d_2\in U^d_2$ and $ x'_2\in r_2(x_2, u_2)$ such
that $(x_2,x_3,u_2,u_3)\in R_{23}(\varepsilon)$ and
$(x'_2, x'_3)\in
R_{23}(\varepsilon')$ with
$\varepsilon'=\kappa+\beta\varepsilon+\lambda\sd{23}(u_2, u_3)$.

Moreover, from b) and~\eqref{e:tightASR} follows that 
$(x_1,x_2,u^c_1,u^c_2)$ satisfy 2.a) of
Def.~\ref{d:acASR}. Therefore,
there exist \mbox{$ u^d_1\in U^d_1$} and
$ x_1'\in  r_1( x_1, u_1)$
for our choice of $u^d_2$ and $ x_2'$ so that 
\mbox{$(x_1, x_2, u_1, u_2)\in  R_{12}$} and
$(x'_1, x_2')\in R_{12,X}$.

In the previous two paragraphs we showed
$(x_{12},x_3,,u_{12}, u_{3})\in R_{123}(\varepsilon)$ and
$(x'_{12}, x'_{3})\in
R_{123,X}(\varepsilon')$ which implies that $R_{123}(\varepsilon)$ is a
$(\kappa,\beta,\lambda)$-acASR from $S_{12}$ to $S_3$
with the distance function given by
$\sd{{123}}(u_{12},u_3)=\sd{{23}}(u_2,u_3)$.
\end{proof}

\begin{proof}[Proof of Lemma~\ref{l:controllerNB}]
We only need to show that $S_{12}=S_1\times_{R_{12}(\varepsilon)} S_2$ is
non-blocking as defined in~Def.~\ref{d:controller}.
By definition of $S_{12}$ and \eqref{e:infR} every reachable state
$x_{12}$
of $S_{12}$ satisfies
$(x_1, x_2)\in R_{12,X}(e(x_1, x_2))$. Now it is easy to check 
with the help 2.a) in the
Def.~\ref{d:acASR} that $S_{12}$ satisfies the non-blocking condition.
\end{proof}

\begin{proof}[Proof of Lemma~\ref{l:closedloop}]
We leave it to reader to check that 
the relation
$R_{121}(\varepsilon)\subseteq X_{12}\times X_1\times U_{12}\times
U_1$ given by
\begin{IEEEeqnarray*}{c}
\{
  (x_{12},x'_1,u_{12},u'_1)
  \mid
  (x_1,x_2,u_1,u_2)\in R_{12}(\varepsilon)
  \land
  x_1=x'_1
  \land 
  u_1=u'_1
\}
\end{IEEEeqnarray*}
is an acSR from $S_{12}$ to $S_1$.
\end{proof}

\end{document}